\documentclass[preprint,12pt]{elsarticle}
\usepackage{amssymb}
\usepackage{amsmath,amsfonts,bm}
\usepackage{amsthm}
\usepackage{algorithm}
\usepackage{physics}
\usepackage{float}
\usepackage{enumerate}
\usepackage{caption}
\usepackage{subcaption}
\usepackage[colorlinks,linkcolor=red,anchorcolor=blue,citecolor=green]{hyperref}
\usepackage{geometry}
\usepackage{mathrsfs}
\usepackage{xcolor}
\usepackage{enumitem}

\newtheorem{proposition}{Proposition}

\journal{Journal of Computational Physics}

\begin{document}

\begin{frontmatter}

\title{Unified Gas-Kinetic Wave-Particle Method for \\ Multiscale Simulation of Vlasov-Poisson-Fokker-Planck System}

\author[a]{Zhigang Pu}
\ead{zgpuac@ust.hk}
\author[c]{Chang Liu}
\ead{liuchang@iapcm.ac.cn}
\author[a]{Yixiao Wang}
\ead{ywangyp@connect.ust.hk}
\author[a,b]{Kun Xu\corref{cor1}}
\ead{makxu@ust.hk}
\cortext[cor1]{Corresponding author}

\address[a]{{Department of Mathematics, Hong Kong University of Science and Technology},
            {Clear Water Bay, Kowloon},
            {Hong Kong},
            {China}}
\address[b]{{Shenzhen Research Institute, Hong Kong University of Science and Technology},
            {Shenzhen},
            {China}}
\address[c]{
Institute of Applied Physics and Computational Mathematics, Beijing, China}
\begin{abstract}
A unified gas-kinetic wave--particle method with Fokker--Planck collisions (UGKWP-FP) is developed for the Vlasov--Poisson--Fokker--Planck system. The collision operator is modeled by the Lenard--Bernstein operator, whose stochastic representation corresponds to the Ornstein--Uhlenbeck process in velocity space. To extend the UGKWP framework beyond the conventional Bhatnagar--Gross--Krook (BGK) collision model, the Fokker--Planck operator is decomposed into a nonstiff drift--diffusion contribution and a stiff thermalization contribution. The former is retained in the particle dynamics through a modified Ornstein--Uhlenbeck process, whereas the latter is represented by a BGK-type relaxation toward the local Maxwellian. This decomposition enables an adaptive wave--particle representation: the method follows stochastic particle dynamics in rarefied regimes and increasingly represents the rapidly equilibrating distribution by the analytical wave component as the continuum regime is approached. The modified friction coefficient is constructed to recover the original Fokker--Planck dynamics in the rarefied limit while preserving the hydrodynamic limit under strong collisions. Numerical experiments demonstrate that the proposed method captures velocity-space drift and diffusion, recovers the expected kinetic and continuum behavior across a range of Knudsen numbers, and reproduces the characteristic evolution of collisional plasma phenomena.
\end{abstract}

\begin{keyword}
unified gas-kinetic wave-particle method \sep Fokker-Planck equation \sep asymptotic preserving \sep computational plasma
\end{keyword}

\end{frontmatter}

\section{Introduction}

Plasmas play a central role in physical phenomena across astrophysics, semiconductor processing, magnetic fusion, and aerospace engineering \cite{chen1984introduction}. A fundamental challenge in modeling these systems stems from their disparate spatiotemporal scales, among which the variation in collisionality is particularly important. This degree of collisionality is quantified by the Knudsen number, $\mathrm{Kn} = l_{mfp} / L$, defined as the ratio of the particle mean free path $l_{mfp}$ to the characteristic domain length $L$. In the strongly collisional regime ($\mathrm{Kn} \ll 1$), frequent collisions rapidly relax the velocity distribution function toward local thermodynamic equilibrium, validating a hydrodynamic fluid description. When \(L\) is comparable to or smaller than the mean free path, nonequilibrium effects become significant and a kinetic description is required. For plasmas dominated by long-range Coulomb interactions, the cumulative effect of small-angle scattering is commonly described by the Vlasov--Fokker--Planck (VFP) equation \cite{chen1984introduction}.  In complex applications, such as magnetic reconnection \cite{ji2022magnetic} and near-space plasma flows \cite{pu2026electromagnetic}, fluid and kinetic regimes frequently coexist within a single computational domain. Accurately and efficiently simulating these systems necessitates multiscale numerical algorithms capable of bridging disparate collisionality regimes.

Multiscale methods for collisional plasma dynamics have been developed using both Bhatnagar--Gross--Krook (BGK) relaxation models and Fokker--Planck (FP) collision operators. BGK-type models employ a simple relaxation toward a local equilibrium, which facilitates the construction and analysis of multiscale schemes. FP operators, by contrast, retain the velocity-space drift and diffusion associated with cumulative small-angle Coulomb scattering. These two classes of collision models have led to different strategies for constructing multiscale kinetic methods.

Among multiscale methods based on BGK-type collision models, Liu et al. developed the unified gas-kinetic scheme (UGKS), in which particle transport and collisional relaxation are coupled through a time-dependent interface distribution function \cite{xu2010unified,liu2016unified, liu2017unified}. This coupling enables the numerical flux to recover kinetic transport in rarefied regimes and hydrodynamic behavior in the continuum limit without resolving the collision time explicitly. The framework was subsequently extended to the unified gas-kinetic wave--particle (UGKWP) method \cite{liu2020unified,liu2021unified,pu2025unified,guo2026unified}. UGKWP decomposes the local distribution into an equilibrium wave component and a collisionless particle component according to the local ratio of the time step to the collision time. By representing the near-equilibrium contribution analytically and tracking only the nonequilibrium contribution with stochastic particles, UGKWP reduces the computational and memory costs in strongly collisional regimes while retaining a particle description in rarefied regimes. Related BGK-based multiscale formulations include the discrete unified gas-kinetic scheme (DUGKS) developed by Liu et al. \cite{liu2026discrete,liu2020discrete}, the central--upwind kinetic scheme of Xiao et al. \cite{xiao2021stochastic} and the micro--macro and implicit--explicit methods developed by Crouseilles et al. for collisional Vlasov systems \cite{crestetto2012kinetic,crouseilles2016multiscale}.

Multi-scale methods based on FP collision operators have been developed along two main directions: deterministic asymptotic-preserving (AP) discretizations for plasma kinetic equations and stochastic particle methods primarily designed for rarefied gas dynamics. In the first category, Jin and Yan proposed a penalization-based AP scheme for the nonlinear Fokker--Planck--Landau (FPL) equation \cite{jin2011class}. Building on this strategy, Gamba et al. extended the micro--macro framework to kinetic equations with the FPL collision operator \cite{gamba2019micro}. For the Vlasov--Poisson--Fokker--Planck (VPFP) system, Carrillo et al. developed a variational AP method within an implicit--explicit framework \cite{carrillo2021variational}, while Blaustein and Filbet proposed a structure- and asymptotic-preserving scheme \cite{blaustein2024structure}. More recently, Crouseilles et al. extended the UGKS framework to linear kinetic equations with FP collisions in the diffusive limit \cite{crouseilles2025generalized}. In the second category, stochastic FP particle methods have been studied in rarefied gas dynamics as computationally efficient alternatives to direct simulation Monte Carlo (DSMC) in near-continuum regimes. Gorji and Jenny developed a hybrid FP-–DSMC algorithm that combines the efficiency of continuous stochastic velocity evolution with the accuracy of DSMC in rarefied regimes \cite{gorji2015fokker}. Multiscale stochastic particle methods based directly on FP kinetic models were subsequently proposed for nonequilibrium gas flows by Fei et al. and Zhang et al.\cite{fei2017particle,cui2025multiscale}.

The UGKWP framework has been successfully applied to multiscale plasma simulations with BGK-type collision models, demonstrating good performance in both benchmark problems and practical applications \cite{liu2021unified,pu2025unified,pu2026electromagnetic,pu2026nonequilibrium}. Nevertheless, the BGK operator represents collisions through a single-rate relaxation toward a local Maxwellian and therefore does not explicitly capture the velocity-space drift and diffusion behaviors. This simplified relaxation may be insufficient for plasma phenomena governed by localized gradients and fine structures in velocity space \cite{abel2008linearized}. In the present work, we extend the UGKWP framework to a FP collision model, denoted by UGKWP-FP, to provide a more physically representative description of collisional plasma dynamics. Specifically, the Lenard--Bernstein (LB) operator is adopted as a tractable FP model that retains the essential velocity-space drift and diffusion structure. Within the resulting UGKWP-FP method, the stochastic particle component evolves according to Langevin representation of the LB operator, while the rapidly equilibrating component is represented by the hydrodynamic wave formulation inherited from UGKWP. A BGK-type relaxation is introduced only to determine the wave--particle decomposition. This construction extends the multiscale capability of UGKWP to collisional plasma dynamics with explicit velocity-space drift and diffusion.

The paper is organized as follows. Section \ref{sec:model} introduces the VPFP kinetic model and its nondimensional form. Section \ref{sec:fpbgk} presents the FP-BGK model. Section \ref{sec:ugkwp} gives the UGKWP discretization for the decomposed model. Section \ref{sec:tests} shows numerical tests. Conclusions are given in Section \ref{sec:conclusions}.

\section{VPFP kinetic model}
\label{sec:model}

\subsection{Model equations}

The VPFP equation for an electrostatic collisional plasma with a fixed ion background is written as
\begin{align}
    \frac{\partial f}{\partial t}
    + \boldsymbol{u}\cdot \nabla_{\boldsymbol{x}} f
    + \boldsymbol{a}\cdot \nabla_{\boldsymbol{u}} f
    &=
    \mathcal{Q}_{FP}(f),
    \label{eq:dimensional-fp} \\
    \nabla_{\boldsymbol{x}}^2 \phi
    &=
    -\frac{\rho_c}{\epsilon_0},
    \label{eq:poisson}
\end{align}
where $f=f(\boldsymbol{x},\boldsymbol{u},t)$ is the velocity distribution function, \(t\) is time, \(\boldsymbol{x}\in \mathbb{R}^{d_x}\) is the physical space coordinate, and \(\boldsymbol{u}\in \mathbb{R}^{d_v}\) is the microscopic particle velocity. $d_x, d_v$ denote the dimension of physical and velocity space. \(\boldsymbol{a}={q\boldsymbol{E}}/{m}\) is an external acceleration. The electric field $\boldsymbol{E}$ is derived from the electric potential $\phi$ via $\boldsymbol{E} = -\nabla_{\boldsymbol{x}} \phi$.  $\epsilon_0$ is the vacuum permittivity. $q=-e, e>0$ is the electric charge carried by an electron. The charge density $\rho_c$ is given by $\rho_c = e(n_i - n)$, where $n_i=n_0$ is the fixed ion background density, $n$ is the electron number density. \(Q_{FP}\) is chosen as the FP collision operator. In this work, a self-consistent LB-type FP collision operator is employed:
\begin{equation}
    \mathcal{Q}_{FP}(f)
    =
    \nu\nabla_{\boldsymbol{u}}\cdot
    \left[
        (\boldsymbol{u}-\boldsymbol{U})f
        +
        RT \nabla_{\boldsymbol{u}} f
    \right],
    \label{eq:fp-operator}
\end{equation}
where $\nu = p/2\mu$ is the relaxation rate, \(\mu\) is the dynamic viscosity, \(p\) is the thermodynamic pressure.  \(\boldsymbol{U}\) is the macroscopic velocity and \(R=k_B/m\) is the specific gas constant, \(k_B\) is the Boltzmann constant, \(m\) is the particle mass and $T$ is the temperature. They are defined from the moments of \(f\) as
\begin{equation}
    \rho = \int f \, \mathrm{d}\boldsymbol{u},
    \quad
    \boldsymbol{U} = \frac{1}{\rho}\int \boldsymbol{u} f \, \mathrm{d}\boldsymbol{u},
    \quad 
    T = \frac{1}{d_vR\rho}\int |\boldsymbol{u}-\boldsymbol{U}|^2 f \, \mathrm{d}\boldsymbol{u}.
    \label{eq:rho-U}
\end{equation}
Here \(\rho\) is mass density.  The local Maxwellian distribution corresponding to \((\rho,\boldsymbol{U},T)\) is
\begin{equation}
    g
    =
    \rho
    \left(\frac{\lambda}{\pi}\right)^{d_v/2}
    \exp\left(
        -\lambda|\boldsymbol{u}-\boldsymbol{U}|^2
    \right),
    \quad
    \lambda=\frac{1}{2RT}.
    \label{eq:maxwellian}
\end{equation}
The viscosity model in this work is given as
\[
\mu = \mu_{\text{ref}}\left( \frac{T}{T_{\text{ref}}} \right)^{\omega},
\]
where \(\omega=0.72\) denotes the viscosity index characterizing the temperature dependence. The reference viscosity $\mu_{\text{ref}}$ is evaluated as
\[
\mu_{\text{ref}} = \frac{5(\alpha_{\text{ref}} + 1)(\alpha_{\text{ref}} + 2)\sqrt{\pi}}{4\alpha_{\text{ref}}(5 - 2\omega_{\text{ref}})(7 - 2\omega_{\text{ref}})} \, \text{Kn}_{\text{ref}},
\]
with \(\alpha_{\text{ref}} = 1\) and \(\omega_{\text{ref}} = 0.5\). For notational convenience, we omit the subscript “ref” and denote \(\text{Kn}_{\text{ref}}\) simply as \(\text{Kn}\) in the subsequent sections.

Eq.\eqref{eq:dimensional-fp} without acceleration term admits the following
self-consistent Langevin representation in phase space \cite{risken1989fokker, fei2017particle}:
\begin{equation}
    \mathrm{d}\boldsymbol{X}(t)
    =
    \boldsymbol{V}(t)\,\mathrm{d}t,
    \qquad
    \mathrm{d}\boldsymbol{V}(t)
    =
    -\nu
    \left[
        \boldsymbol{V}(t)
        -
        \boldsymbol{U}(\boldsymbol{X},t)
    \right]\mathrm{d}t
    +
    \sqrt{
        2\nu\,RT(\boldsymbol{X},t)
    }\,
    \mathrm{d}\boldsymbol{W}(t),
    \label{eq:langevin-dimensional}
\end{equation}
where $\boldsymbol{X}(t)\in\mathbb{R}^{d_x}$ and
$\boldsymbol{V}(t)\in\mathbb{R}^{d_v}$ denote, respectively, the random
position and velocity of a particle at continuous time $t$. The local
bulk velocity $\boldsymbol{U}(\boldsymbol{X},t)$ and thermal energy
$RT(\boldsymbol{X},t)$ are determined self-consistently from the local
moments of the distribution function $f$. Here, $\boldsymbol{W}(t) \in \mathbb{R}^{d_v}$ is a $d_v$-dimensional standard Wiener process with $d_v$ independent components. Its increment $\mathrm{d}\boldsymbol{W}(t) = \boldsymbol{W}(t+\mathrm{d}t) - \boldsymbol{W}(t) \sim \mathcal{N}(\boldsymbol{0}, \mathrm{d}t \mathbb{I})$ satisfies $\langle \mathrm{d}W_i(t) \rangle = 0$ and $\langle \mathrm{d}W_i(t) \mathrm{d}W_j(t) \rangle = \mathrm{d}t \delta_{ij}$, where $i, j \in \{1, \dots, d_v\}$ denote the component indices and $\langle\cdots\rangle$ denotes an ensemble average.
When $\nu$, \(\boldsymbol{U}\) and $T$ are  held constant during a time interval, the velocity equation reduces to an
Ornstein--Uhlenbeck (OU) process and admits the exact update
\begin{equation}
    \boldsymbol{V}(t)
    =
    \boldsymbol{U}
    +
    e^{-\nu t}
    \left(\boldsymbol{V}_0-\boldsymbol{U}\right)
    +
    \sqrt{RT\left(1-e^{-2\nu t}\right)}
    \boldsymbol{\xi},
    \label{eq:ou-velocity-exact}
\end{equation}
where \(\boldsymbol{\xi}\sim \mathcal{N}(\boldsymbol{0},\mathbb{I})\).

\subsection{Nondimensional form and asymptotic behavior}

Let \(l_0\), $t_0$, \(u_0\), \(\rho_0\), $n_0$ and \(T_0\) be the reference length, time, velocity, mass density, number density and temperature. The non-dimensional variables are defined by
\begin{equation}
    \hat{\boldsymbol{x}}=\frac{\boldsymbol{x}}{l_0},
    \quad
    \hat{t}=\frac{t}{t_0},
    \quad
    \hat{\boldsymbol{u}}=\frac{\boldsymbol{u}}{u_0},
    \quad
    \hat{f}=\frac{u_0^{d_v}}{\rho_0} f,
    \quad 
    \hat{n} = \frac{n} {n_0},
    \quad
    \hat{T}=\frac{T}{T_0},
    \quad 
    \hat{\phi} = \frac{\phi}{\phi_0}.
\end{equation}
The dimensionless quantities are related through
$$
u_0 = \sqrt{2RT_0}, \quad t_0=\frac{l_0}{u_0}, \quad n_0 = \frac{\rho_0}{m_0},\quad \phi_0 = \frac{m_0 u_0^2}{e}.
$$
Here electron mass is chosen as reference mass, so that $m_0 = m_e$, and the dimensionless relaxation rate and time are
\begin{equation}
    \hat{\nu}=\nu t_0,
    \quad
    \hat{\tau}=\frac{1}{\hat{\nu}}
    =
    \frac{1}{\nu t_0}.
\end{equation}
After dropping the hats, the nondimensional kinetic equation becomes
\begin{align}
    \frac{\partial f}{\partial t}
    +
    \boldsymbol{u}\cdot \nabla_{\boldsymbol{x}} f
    +
    \nabla_{\boldsymbol{x}}\phi\cdot \nabla_{\boldsymbol{u}} f
    &=
    \mathcal{Q}_{FP}(f),
    \label{eq:nondim-fp} \\
    \lambda_D^2\nabla_{\boldsymbol{x}}^2 \phi
    &=
    n-1,
    \label{eq:poisson}
\end{align}
where
\begin{equation}
    \mathcal{Q}_{FP}(f)
    =
    \nu \nabla_{\boldsymbol{u}}\cdot
    \left[
        (\boldsymbol{u}-\boldsymbol{U})f
        +
        \frac{T}{2} \nabla_{\boldsymbol{u}}f
    \right],
    \label{eq:nondim-fp-operator}
\end{equation}
and the normalized Debye length
\begin{align*}
\lambda_D=\sqrt{\frac{\epsilon_0 m_0u_0^2}{n_0 e^2}} \Big / l_0=\sqrt{\frac{2\epsilon_0 k_B T_0}{n_0 e^2}} \Big / l_0.
\end{align*}
The thermal pressure $p=\frac{1}{2} \rho T$. The electrostatic energy of the system, $E_{p}$, is defined as:
$$
{E}_{p}=\frac{\lambda_D^2}{2} \int_L\left|\nabla_{\boldsymbol{x}}\phi\right|^2 d x,
$$
where $L$ denotes the whole computational domain.

When $\nu \ll 1$, i.e. $\tau\gg 1$, collisions are weak over the characteristic time scale.
The distribution function can remain far from the local Maxwellian, so
the nonequilibrium kinetic evolution must be resolved. When
$\tau=O(1)$, particle transport and collisional
relaxation are all important, corresponding to a transitional regime. 
When $\nu\rightarrow \infty$, i.e. \(\tau\to0\), the collision operator
becomes stiff and rapidly drives the distribution function toward the
local Maxwellian. The system then approaches a near-equilibrium
hydrodynamic regime. A conventional kinetic scheme based on operator splitting between transport and collision generally requires a time step
\(\Delta t < \tau\), and a mesh size $\Delta x < l_{mfp}$. These requirements become prohibitively expensive
as \(\tau\to0\). Therefore, an asymptotic-preserving numerical method is
needed: it remains accurate and stable with macroscopic time steps and mesh sizes
independent of the small relaxation time.

\section{FP-BGK model}
\label{sec:fpbgk}

To construct an asymptotic UGKWP method for the VPFP equation, the FP operator is decomposed into a non-stiff particle-resolved FP component and an unresolved stiff remainder. The non-stiff part is solved by the OU process. The stiff remainder is modeled by a BGK-type relaxation operator. The FP relaxation term is re-written as
\begin{equation}
    \mathcal{Q}_{FP}(f) = \nu \mathcal{L}_{FP}(f), \quad \mathcal{L}_{FP}(f)=\nabla_{\boldsymbol{u}}\cdot
    \left[
        (\boldsymbol{u}-\boldsymbol{U})f
        +
        \frac{T}{2} \nabla_{\boldsymbol{u}}f
    \right].
\end{equation}
We introduce a modified particle relaxation rate \(\nu_p\), satisfying
\begin{equation}
    0\leq \nu_p \leq \nu.
\end{equation}
Then
\begin{equation}
    \nu \mathcal{L}_{FP}(f)
    =
    \nu_p \mathcal{L}_{FP}(f)
    +
    (\nu-\nu_p)\mathcal{L}_{FP}(f).
    \label{eq:fp-splitting}
\end{equation}
The first term is retained as a FP operator. The second term is a stiff thermalization remainder and is approximated by a BGK-type relaxation,
\begin{equation}
    (\nu-\nu_p)\mathcal{L}_{FP}(f)
    \approx
    \frac{g-f}{\tau_{B}}.
    \label{eq:stiff-fp-bgk}
\end{equation}
Here \(\tau_{B}\) is the relaxation time of the BGK closure. A choice of $\tau_B$ is
\begin{equation}
    \tau_{B}=\frac{C}{\nu-\nu_p}.
    \label{eq:taur-simple}
\end{equation}
The factor \(C=1/2\) follows from viscosity matching (see more details in \ref{app:viscosity-matching}). If the Prandtl number or other transport coefficients need to be adjusted, the BGK closure can be replaced by an ES-BGK or Shakhov-type closure.

With this approximation, Eq. \eqref{eq:dimensional-fp} becomes
\begin{equation}
    \frac{\partial f}{\partial t}
    +
    \boldsymbol{u}\cdot\nabla_{\boldsymbol{x}} f
    +
    \boldsymbol{a}\cdot\nabla_{\boldsymbol{u}} f
    =
    \nu_p \mathcal{L}_{FP}(f)
    +
    \frac{g-f}{\tau_{B}}.
    \label{eq:fp-bgk-model-rhs}
\end{equation}
Equivalently,
\begin{equation}
    \frac{\partial f}{\partial t}
    +
    \boldsymbol{u}\cdot\nabla_{\boldsymbol{x}} f
    +
    \boldsymbol{a}\cdot\nabla_{\boldsymbol{u}} f
    -
    \nu_p \mathcal{L}_{FP}(f)
    =
    \frac{g-f}{\tau_{B}}.
    \label{eq:fp-bgk-model}
\end{equation}
This equation is referred to as the FP-BGK model in this work.


The decomposition is introduced at the numerical level: $\nu_p$ depends on the numerical time step $\Delta t$. Specifically, $\nu_p$ should satisfy two requirements:
\begin{enumerate}
    \item In the rarefied regime, \(\nu_p\) should approach \(\nu\), so that the original FP dynamics is recovered.
    \item In the continuum regime, \(\nu_p\Delta t\) is required to remain uniformly bounded, i.e., \(\nu_p\Delta t=O(1)\). Although the exact OU velocity update is unconditionally stable, the particle transport and collision processes are treated separately. When \(\nu_p\Delta t\gg1\), the particle velocity changes substantially within a single transport step, invalidating the free-transport approximation based on a constant particle velocity and introducing a large splitting error in the particle trajectory and flux. Bounding \(\nu_p\Delta t\) keeps the particle-resolved transport and collision time scales comparable, while the faster unresolved relaxation is transferred to the wave component under the BGK framework.
\end{enumerate}

One choice is
\begin{equation}
    \nu_p
    =
    \min\left(
        \nu,
        \frac{C_p}{\Delta t}
    \right),
    \label{eq:nup-min}
\end{equation}
where \(C_p=O(1)\) is a prescribed parameter, in this work $C_P = 1$. This gives
\begin{equation}
    \nu_p \approx \nu,
    \qquad
    \nu\Delta t\ll 1,
\end{equation}
and
\begin{equation}
    \nu_p \approx \frac{C_p}{\Delta t},
    \qquad
    \nu\Delta t\gg 1.
\end{equation}
The two limiting regimes follow directly from Eq. \eqref{eq:nup-min}. In the rarefied regime,
\begin{equation}
    \nu\Delta t\ll 1.
\end{equation}
Then \(\nu_p=\nu\), \(\nu-\nu_p=0\), and \(\tau_{B}=\infty\). The decomposed model reduces to
\begin{equation}
    \partial_t f
    +
    \boldsymbol{u}\cdot\nabla_{\boldsymbol{x}} f
    +
    \boldsymbol{a}\cdot\nabla_{\boldsymbol{u}} f
    =
    \nu \mathcal{L}_{FP}(f).
\end{equation}
Thus the original FP kinetic equation is recovered.
In the continuum regime,
\begin{equation}
    \nu\Delta t\gg 1.
\end{equation}
Then \(\nu_p\Delta t=C_p \sim O(1)\), while
\begin{equation}
    \nu-\nu_p \approx \nu.
\end{equation}
The stiff relaxation is mainly carried by the BGK closure. Therefore, the distribution function is rapidly driven towards the local Maxwellian by the wave part.

\section{UGKWP method for the FP-BGK model}
\label{sec:ugkwp}

\subsection{Overview of algorithm}

The evolution of the multiscale flow with macroscopic external forces and FP diffusion is governed by the FP-BGK model now:
\begin{equation}
    \frac{\partial f}{\partial t}
    +
    \boldsymbol{u}\cdot\nabla_{\boldsymbol{x}} f
    +
    \boldsymbol{a}\cdot\nabla_{\boldsymbol{u}} f
    -
    \nu_p \mathcal{L}_{\mathrm{FP}}(f)
    =
    \frac{g-f}{\tau_{B}}.
    \label{eq:fp-bgk-model2}
\end{equation}

Within the UGKWP framework, the BGK relaxation term governs wave–-particle decomposition, including particle thermalization and resampling, whereas the FP and force terms determine the velocity-space friction, diffusion, and acceleration of the active particles.

To establish the wave-particle decomposition, we first consider the characteristic integral solution driven by the BGK relaxation process over a time interval $[0, t]$:
\begin{equation}
    f\left(\boldsymbol{x},\boldsymbol{u}, t\right) = \frac{1}{\tau_B} \int_{0}^{t} g(\boldsymbol{x}^{\prime}, \boldsymbol{u}, t^{\prime}) e^{-\left(t-t^{\prime}\right) / \tau_B} \mathrm{d} t^{\prime} + e^{-t / \tau_B} f_{0}\left(\boldsymbol{x}-\boldsymbol{u} t \right),
    \label{eq:BGKsoln}
\end{equation}
where $f_0$ is the initial distribution at $t=0$, and $g$ is the local equilibrium along the characteristic line $\boldsymbol{x}^{\prime} = \boldsymbol{x} - \boldsymbol{u}(t- t^{\prime})$. Expanding the equilibrium state $g$ via a first-order Taylor series around $(\boldsymbol{x}, t)$,
\begin{equation}
g^{\prime} = g + \nabla_{\boldsymbol{x}} g \cdot(\boldsymbol{x}^{\prime}-\boldsymbol{x}) + \partial_t g(t^{\prime}-t),
\label{eq:taylor g}
\end{equation}
and substituting Eq. \eqref{eq:taylor g} into Eq. \eqref{eq:BGKsoln} yields the multiscale analytic distribution function:
\begin{equation}
f(\boldsymbol{x}, \boldsymbol{u}, t) = \left(1-e^{-t / \tau_B}\right) g^{+}(\boldsymbol{x}, \boldsymbol{u}, t) + e^{-t / \tau_B} f_{0}\left(\boldsymbol{x} - \boldsymbol{u}t\right),
\label{eq:multiscale BGK soln}
\end{equation}
with the modified equilibrium $g^{+}$ expressed as:
\[
g^{+}\left( \boldsymbol{x},\boldsymbol{u}, t \right) = g\left( \boldsymbol{x},\boldsymbol{u} ,t\right) + \left( \frac{te^{- t\text{/}\tau_B}}{1 - e^{- t\text{/}\tau_B}} - \tau_B \right) \boldsymbol{u}\cdot \nabla_{\boldsymbol{x}} g\left( \boldsymbol{x},\boldsymbol{u},t \right) + \left( \frac{t}{1 - e^{- t\text{/}\tau_B}} - \tau_B \right) \partial_t g\left( \boldsymbol{x},\boldsymbol{u},t \right).
\]

Equation \eqref{eq:multiscale BGK soln} reveals the wave-particle duality: a simulation particle has a probability of $e^{-t/\tau_B}$ to undergo collisionless free transport, and a probability of $(1 - e^{-t/\tau_B})$ to collide and thermalize into the equilibrium wave component $g^+$. Consequently, the cumulative distribution function for a particle's free transport time $t_{f}$ is $F(t_{f} < t) = 1 - \exp(-t/\tau_B)$, from which $t_{f}$ is sampled via
\begin{equation}
    t_{f} = - \tau_B \ln(\eta), \quad \eta \sim \mathcal{U}(0,1).
\end{equation}

For a simulation time step $\Delta t$, particles with $t_{f} \ge \Delta t$ survive as simulation particles. Meanwhile, particles with $t_{f} < \Delta t$ are thermalized into the macroscopic wave field $g^+$.

Based on the above physical picture, for each time step \(t^n\rightarrow t^{n+1}\), the UGKWP-FP algorithm is summarized as follows:
\begin{enumerate}[label=\textbf{Step \arabic*:}, leftmargin=*]
    \item \textbf{Parameter initialization.} Evaluate the original FP relaxation rate $\nu_i$, the modified particle relaxation rate $\nu_{p,i}$, and the BGK relaxation time $\tau_{B,i}$ in cell $\Omega_i$.

    \item \textbf{First-stage velocity-space evolution ($\Delta t/2$).} The velocity-space evolution is performed in two successive steps.
First, both the particles and the wave component are advanced
under the electrostatic field over a half time
step as introduced in Section \ref{sec: acc electric}. Second, the
particle-resolved FP dynamics is applied only to
particles through the OU process with collision
frequency $\nu_{p,i}$ over a half time step shown in Section \ref{sec: ou process}.

    \item \textbf{Microscopic particle transport and flux accumulation.} For each simulation particle $P_k$ in cell $\Omega_i$, sample the BGK free-transport time $t_{f,k} $. Stream the particle over $\Delta t_k = \min(t_{f,k}, \Delta t)$ and update its spatial position.
    \begin{itemize}
        \item If $t_{f,k} < \Delta t$, absorb the particle into the wave component after duration $t_{f,k}$; otherwise, retain it in the particle component for the next time level.
        \item Accumulate the free-streaming particle flux $(\mathscr{F}_{\boldsymbol{W}})^{f,p}$ across cell interfaces during trajectory tracking.
    \end{itemize}

    \item \textbf{Macroscopic update and particle resampling.}  Compute the analytic wave fluxes $(\mathscr{F}_{\boldsymbol{W}})_s^g$ and $(\mathscr{F}_{\boldsymbol{W}})_s^{f,w}$ across cell interfaces. Update the cell-averaged conservative variables $\boldsymbol{W}_i$ via the finite-volume scheme in Eq. \eqref{eq: macro upadate rule}. Compute the conservative quantities carried by surviving particles, denoted by $\boldsymbol{W}_i^p$, and assign the remaining conservative quantities to the wave component:
    \[
        \boldsymbol{W}_i^h = \boldsymbol{W}_i - \boldsymbol{W}_i^p.
    \]
    Sample the fraction $e^{-\Delta t/\tau_{B,i}}$ of the wave component as new collisionless particles for the next time step, while keeping the remaining fraction as the analytic wave distribution.

    \item \textbf{Electrostatic field update.} Calculate the new spatial charge density distribution and solve the Poisson equation to update the self-consistent electric field $\boldsymbol{E}^{n+1}$.

    \item \textbf{Second-stage velocity-space evolution ($\Delta t/2$).} Apply the remaining half-step electrostatic acceleration and OU process over $\Delta t/2$ using the updated electric field $\boldsymbol{E}^{n+1}$ and the OU process with frequency $\nu_{p,i}$.
\end{enumerate}

The timestep is determined by a CFL condition
including,
$$
\Delta t =
\mathrm{CFL}
{
\frac{\Delta x}{|u|_{max}}
},
$$
where \(u_{\max}\) is the maximum
resolved particle velocity, \(\Delta x\)
is the cell size. 

\subsection{Acceleration by electric field}
\label{sec: acc electric}

Within the operator-splitting framework, the electrostatic acceleration is executed over a half time step $\delta t = \Delta t / 2$. For each discrete simulation particle $P_k$, the velocity update governed by Newton's second law over the half time step is given by:
\begin{equation}
    \boldsymbol{v}^*_k = \boldsymbol{v}_k + \boldsymbol{a}_i \left(\frac{\Delta t}{2}\right),
    \label{eq:particle-acc-update}
\end{equation}
where $\boldsymbol{a}_i = \nabla_{\boldsymbol{x}}\phi_i$ according to Eq.\eqref{eq:nondim-fp} is the local electrostatic acceleration evaluated in cell $\Omega_i$ containing particle $P_k$. Concurrently, the macroscopic bulk velocity of the wave component $\boldsymbol{U}^h_i$ undergoes an analogous half-step shift:
\begin{equation}
    \boldsymbol{U}_i^{h,*} = \boldsymbol{U}_i^h + \boldsymbol{a}_i \left(\frac{\Delta t}{2}\right).
    \label{eq:wave-acc-update}
\end{equation}
Accordingly, the momentum and total energy densities of the wave component are updated consistently with this velocity shift.

After both the particle and the wave component have been updated under the acceleration operator, the overall cell-averaged macroscopic conservative variables $\boldsymbol{W}_i = (\rho_i, \rho_i \boldsymbol{U}_i, E_i)^T$ are re-accumulated by summing the updated particle and wave contributions:
\begin{equation}
    \boldsymbol{W}_i^* = \boldsymbol{W}_i^{p,*} + \boldsymbol{W}_i^{h,*}.
    \label{eq:total-macro-acc-update}
\end{equation}

\subsection{Particle evolution by the Ornstein-Uhlenbeck process}
\label{sec: ou process}

In the numerical implementation, the continuous stochastic process in Eq. \eqref{eq:langevin-dimensional} is discretized for each simulation particle $P_k$. Within the operator-splitting framework, the micro-relaxation steps are executed in two half-step stages over duration $\delta t = \Delta t / 2$.

During each half-step velocity update over $\delta t = \Delta t / 2$, according to the exact OU process solution in Eq. \eqref{eq:ou-velocity-exact}, the updated particle velocity $\boldsymbol{v}_k^*$ is evaluated as:
\begin{align}
    \boldsymbol{v}_k^* = \boldsymbol{U}_i + e^{-\nu_{p,i} \frac{\Delta t}{2}} \left( \boldsymbol{v}_k - \boldsymbol{U}_i \right) + \sqrt{\frac{T_i}{2} \left( 1 - e^{-2\nu_{p,i} \frac{\Delta t}{2}} \right)} \, \boldsymbol{\xi}_k, \label{eq:particle-v-update}
\end{align}
where $\boldsymbol{\xi}_k \sim \mathcal{N}(\boldsymbol{0}, \mathbb{I})$ is a vector of independent standard normal random variables sampled for particle $P_k$, $\boldsymbol{v}_k$ represents the velocity prior to the sub-step, and $\nu_{p,i}$ is the cell-averaged particle friction coefficient.

It is worth noting that the stochastic velocity update in Eq. \eqref{eq:particle-v-update} does not inherently conserve the macroscopic momentum and energy of the particle part within an individual cell $\Omega_i$. Because the OU process models a continuous stochastic relaxation toward the local equilibrium state, the cumulative momentum and kinetic energy carried by the discrete particles undergo stochastic fluctuations and systematic shifts during this thermalization step.

To guarantee exact macroscopic conservation laws for the coupled system, the total conservative quantities carried by the updated particle component $\boldsymbol{W}_i^{p*}$, are re-accumulated from the particle distribution following the OU step. The conservative quantities allocated to the hydrodynamic wave component $\boldsymbol{W}_i^h$ are subsequently determined via a balance relation:
\begin{equation}
    \boldsymbol{W}_i^{h *}= \boldsymbol{W}_i - \boldsymbol{W}_i^{p *}.
    \label{eq:wave-conservative-balance}
\end{equation}
By directly absorbing microscopic momentum and energy variations into the analytical wave distribution, this re-allocation mechanism strictly enforces the exact conservation of total mass, momentum, and energy.

\subsection{Analytic wave fluxes evaluation and macroscopic variables update}
As for the evolution of macroscopic quantities, the discretized evolution equation of macroscopic variables is
\begin{equation}
    \boldsymbol{W}_{i}^{*} = \boldsymbol{W}_{i}^n - \sum_{s}^{}{\frac{\Delta t}{\left| \Omega_{i} \right|}\left| l_{s} \right|(\mathscr{F}_{\boldsymbol{W}})_s^{g}} - \sum_{s}^{}{\frac{\Delta t}{\left| \Omega_{i} \right|}\left| l_{s} \right|(\mathscr{F}_{\boldsymbol{W}})_s^{f,w}} + \frac{1}{\left| \Omega_{i} \right|}(\mathscr{F}_{\boldsymbol{W}})^{f,p}  ,
    \label{eq: macro upadate rule}
\end{equation}
where $(\mathscr{F}_{\boldsymbol{W}})_s^g$ is the equilibrium flux, $(\mathscr{F}_{\boldsymbol{W}})_s^{f,w}$ and $(\mathscr{F}_{\boldsymbol{W}})^{f,p}$ are the free transport flux contributed by wave and particles.

The numerical flux of the macroscopic conservative variable can be decomposed into the equilibrium and free-streaming fluxes according to Eq.~\eqref{eq:BGKsoln}. The equilibrium flux is
\begin{equation}
(\mathscr{F}_{\boldsymbol{W}})_s^g = \frac{1}{\Delta t}\int_{t^n}^{t^{n+1}}\boldsymbol{u}\cdot \boldsymbol{n}_s
\left[\frac{1}{\tau_B} \int_{0}^{t} g(\boldsymbol{x}^{\prime}, \boldsymbol{u},t^{'}) e^{-\left(t-t^{\prime}\right) / \tau_B} \mathrm{d} t^{\prime}\right]
\boldsymbol{\Psi} \mathrm{d}\boldsymbol{u}\mathrm{d}t,
\label{eq:macroscopic eq flux}
\end{equation}
where $\boldsymbol{\Psi}=\left(1,\boldsymbol{u}, \frac{1}{2}|\boldsymbol{u}|^2\right)^T$. The free streaming flux is
\begin{equation}
(\mathscr{F}_{\boldsymbol{W}})_s^f = \frac{1}{\Delta t}\int_{t^n}^{t^{n+1}}\boldsymbol{u}\cdot \boldsymbol{n}_s
\left[e^{-t / \tau_B} f_{0}\left(\boldsymbol{x}-\boldsymbol{u} t \right)\right]
\boldsymbol{\Psi} \mathrm{d}\boldsymbol{u}\mathrm{d}t.
\label{eq:macroscopic fr flux}
\end{equation}
The equilibrium flux can be calculated as
\begin{equation}
(\mathscr{F}_{\boldsymbol{W}})_s^g = \frac{1}{\Delta t} \int\boldsymbol{u} \cdot \boldsymbol{n}_{s}\left( q_{1}g_{0} + q_{2}\boldsymbol{u} \cdot \nabla{g}_{0} + q_{3}\partial_t g_{0} \right)\boldsymbol{\Psi} d\boldsymbol{u}.
\label{eq:eqflux-numerical}
\end{equation}

The analytical free-streaming flux $( \mathscr{F}_{\boldsymbol{W}})_s^{f,w}$ associated with the wave representation of the free-streaming part is given as
\begin{equation}
	( \mathscr{F}_{\boldsymbol{W}})_s^{f,w} = \frac{1}{\Delta t} \int\boldsymbol{u} \cdot \boldsymbol{n}_s\left( (q_{4}-\Delta te^{- \Delta t\text{/}\tau_B})g_{0}^{h} + (q_{5}+ \frac{\Delta t^{2}}{2}e^{- \Delta t\text{/}\tau_B})\boldsymbol{u} \cdot \nabla {g}_{0}^{h} \right)\boldsymbol{\Psi}d\boldsymbol{u},
	\label{eq:fw}
\end{equation}
where
$$
\begin{aligned}
q_1 &= \Delta t - \tau_B\left(1 - e^{-\Delta t/\tau_B}\right), \\
q_2 &= 2\tau_B^2\left(1 - e^{-\Delta t/\tau_B}\right) - \tau_B\Delta t - \tau_B\Delta t e^{-\Delta t/\tau_B}, \\
q_3 &= \frac{\Delta t^2}{2} - \tau_B\Delta t + \tau_B^2\left(1 - e^{-\Delta t/\tau_B}\right), \\
q_4 &= \tau_B\left(1 - e^{-\Delta t/\tau_B}\right), \\
q_5 &= \tau_B\Delta t e^{-\Delta t/\tau_B} - \tau_B^2\left(1 - e^{-\Delta t/\tau_B}\right).
\end{aligned}
$$ \(g_0^h\) denotes the equilibrium distribution function of the wave component $\boldsymbol{W}^h$. $g_{0\boldsymbol{x}}^h$ is the spatial slope of \(g_0^h\). 
The net particle contribution $(\mathscr{F}_{\boldsymbol{W}})^{f,p}$ is computed as
$$
(\mathscr{F}_{\boldsymbol{W}})^{f,p}=  \sum_{k \in P_{\partial\Omega_{i}^{+}}}^{}\boldsymbol{W}_{P_{k}} - \sum_{k \in P_{\partial\Omega_{i}^{-}}}^{}\boldsymbol{W}_{P_{k}},
$$
where $\boldsymbol{W}_{P_{k}} = \left( m_{k},m_{k}\boldsymbol{v}_{k},\frac{1}{2}m_{k}|\boldsymbol{v}_{k}|^{2} \right)$, $P_{\partial\Omega_{i}^{-}}$ and \(P_{\partial\Omega_i^+}\) are the index set of the particles streaming out and in of cell $\Omega_{i}$ during a time step.

\section{Asymptotic behaviors}
\label{sec:asymptotic-behavior}

The asymptotic behavior of the present method is determined jointly by
the FP--BGK decomposition and the wave--particle evolution. In the
following analysis, the spatial mesh size $\Delta x$ and time step
$\Delta t$ are held fixed.

\begin{proposition}
With
\begin{equation}
    \nu_p=\min\left(\nu,\frac{C_p}{\Delta t}\right),
    \qquad
    \tau_B=\frac{1}{2(\nu-\nu_p)},
    \label{eq:asymptotic-parameters}
\end{equation}
the UGKWP-FP method possesses the following asymptotic properties:
\begin{enumerate}
    \item In the collisionless limit $\nu\rightarrow0$, the method
    degenerates to a particle method for the Vlasov--Poisson system.

    \item In the continuum limit $\nu\rightarrow\infty$, the particle
    contribution vanishes exponentially and the method approaches a
    deterministic gas-kinetic scheme for the Navier--Stokes--Poisson
    system. The corresponding time step and mesh size are not restricted
    by the original FP relaxation time $\tau$ or the mean free path.
\end{enumerate}
\end{proposition}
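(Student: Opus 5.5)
The plan is to examine the two limits separately. In each we track how the parameters in \eqref{eq:asymptotic-parameters} enter the algorithm (Steps 1--6), and we hold $\Delta t$ and $\Delta x$ fixed throughout.

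\emph{Collisionless limit.} For $\nu\to0$ we have $\nu\Delta t< C_p$ eventually, so \eqref{eq:nup-min} gives $\nu_p=\nu$. Then $\nu-\nu_p=0$ and $\tau_B=\infty$.

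The first step is to show that the wave component disappears, treating it case by case.
\begin{itemize}
\item The thermalization probability $1-e^{-\Delta t/\tau_B}$ is $0$, and the sampled free time $t_f=-\tau_B\ln\eta$ is infinite almost surely. Hence no particle is absorbed in Step 3.
\item In Step 4 the fraction $e^{-\Delta t/\tau_B}=1$ of any wave content is resampled as particles. So after at most one step $\boldsymbol W^h_i=0$.
\item The coefficients behave as $q_1,q_3\to0$ and $q_2\to0$ as $\tau_B\to\infty$; this is checked by expanding $e^{-\Delta t/\tau_B}$ to second order in $\Delta t/\tau_B$. Therefore $(\mathscr F_{\boldsymbol W})^g_s\to0$.
\item $(\mathscr F_{\boldsymbol W})^{f,w}_s$ vanishes because $g^h_0=0$.
\end{itemize}

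The update \eqref{eq: macro upadate rule} then reduces to the particle flux alone. The OU step \eqref{eq:particle-v-update} has $e^{-\nu_p\Delta t/2}\to1$ and noise amplitude $\sqrt{T(1-e^{-\nu\Delta t})}\to0$. Velocities are therefore changed only by the half-step accelerations \eqref{eq:particle-acc-update} surrounding free transport. This is a Strang-split particle-in-cell scheme for Vlasov--Poisson, with $\boldsymbol E$ updated by the Poisson solve in Step 5.

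\emph{Continuum limit.} For $\nu\to\infty$, eventually $\nu_p=C_p/\Delta t$ is fixed, and $\tau_B=1/(2(\nu-C_p/\Delta t))=O(1/\nu)\to0$.

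The key quantities are the survival probability $e^{-\Delta t/\tau_B}\approx e^{-2\nu\Delta t}$ and the sampled fraction in Step 4. Both decay exponentially, so:
\begin{itemize}
\item the expected particle mass $\boldsymbol W^p_i$ is $O(e^{-2\nu\Delta t})$;
\item the particle flux $(\mathscr F_{\boldsymbol W})^{f,p}$ is of the same order;
\item the wave free-streaming coefficients $q_4-\Delta t e^{-\Delta t/\tau_B}$ and $q_5+\tfrac{\Delta t^2}{2}e^{-\Delta t/\tau_B}$ are $O(\tau_B)$ and $O(\tau_B^2)$ plus exponentially small terms.
\end{itemize}

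Expanding $q_1,q_2,q_3$ for $\tau_B\ll\Delta t$ gives $q_1=\Delta t-\tau_B$, $q_2=-\tau_B\Delta t+O(\tau_B^2)$ and $q_3=\Delta t^2/2-\tau_B\Delta t+O(\tau_B^2)$. Hence the equilibrium flux \eqref{eq:eqflux-numerical} becomes the gas-kinetic scheme flux $\int \boldsymbol u\cdot\boldsymbol n\,(g_0-\tau_B(\boldsymbol u\cdot\nabla g_0+\partial_t g_0)+\tfrac{\Delta t}{2}\partial_t g_0)\boldsymbol\Psi\,d\boldsymbol u$. Its Chapman--Enskog part yields Navier--Stokes stresses and heat flux with viscosity $\propto \tau_B p$. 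The appendix viscosity matching with $C=1/2$ identifies this viscosity with that of the original FP operator, up to the $O(\nu_p/\nu)$ correction, which vanishes in the limit. Together with the acceleration of $\boldsymbol U^h$ in \eqref{eq:wave-acc-update} and the Poisson solve, this is a GKS for Navier--Stokes--Poisson.

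The remaining claim is that the restrictions on $\Delta t$ and $\Delta x$ are independent of $\tau$. Three facts give this:
\begin{itemize}
\item $\Delta t$ is set only by the CFL condition;
\item the OU update is exact with a bounded rate $\nu_p\Delta t=C_p$;
\item the exponential integrators $q_j$ are uniformly bounded in $\tau_B$.
\end{itemize}

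The main obstacle is the continuum-limit flux consistency. One must show rigorously that the $O(\tau_B)$ terms reproduce the correct dissipation when $\Delta t\gg\tau_B$; the usual GKS concern is that numerical dissipation must not dominate when the flow is resolved. One must also control the residual OU perturbation of the few surviving particles. I would handle the latter by bounding its contribution by the particle mass, which is exponentially small. For the former I would rely on the appendix viscosity matching, arguing at the level of formal asymptotics rather than convergence.
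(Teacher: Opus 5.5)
Your route is the paper's: parameter limits, the survival probability $e^{-\Delta t/\tau_B}$, vanishing OU and particle contributions, expansion of the $q_j$, and $\mu_{\mathrm{num}}=p\tau_B\to\mu$. Your extra bookkeeping in the collisionless limit ($q_1,q_2,q_3\to0$ and full resampling of $\boldsymbol W^h_i$) is correct.

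\textbf{The gap is in the continuum-limit flux step.} With your own expansion $q_1=\Delta t-\tau_B+\tau_B e^{-\Delta t/\tau_B}$, the equilibrium flux \eqref{eq:eqflux-numerical} by itself equals $\int\boldsymbol u\cdot\boldsymbol n_s\,[(1-\tau_B/\Delta t)g_0-\tau_B\boldsymbol u\cdot\nabla g_0+(\Delta t/2-\tau_B)\partial_t g_0]\boldsymbol\Psi\,d\boldsymbol u+O(\tau_B^2)$. This is not the gas-kinetic flux you wrote, because the Euler part is scaled by $1-\tau_B/\Delta t$. Relative to the Euler flux this defect is $O(\tau_B/\Delta t)$. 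The viscous correction is $O(\tau_B\,|\boldsymbol u|/L)$ on a flow scale $L$. Any resolved scale satisfies $L\gtrsim\Delta x\gtrsim|\boldsymbol u|\Delta t$ under the CFL condition, so the defect is at least as large as the Navier--Stokes term you are trying to recover. You correctly found the wave free-streaming coefficients to be $O(\tau_B)$ and $O(\tau_B^2)$, but you then treated $(\mathscr F_{\boldsymbol W})^{f,w}_s$ as negligible. It cannot be discarded, since $O(\tau_B)$ is exactly Navier--Stokes order.

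\textbf{The repair is what the paper does in \eqref{eq:continuum-analytic-flux}: expand the sum $(\mathscr F_{\boldsymbol W})^g_s+(\mathscr F_{\boldsymbol W})^{f,w}_s$.}
\begin{itemize}
\item Because $\boldsymbol W^h_i=\boldsymbol W_i-\boldsymbol W^p_i=\boldsymbol W_i+O(e^{-\Delta t/\tau_B})$, one has $g_0^h=g_0+O(e^{-\Delta t/\tau_B})$ for a smooth reconstruction.
\item The exact identity $q_1+q_4-\Delta t\,e^{-\Delta t/\tau_B}=\Delta t\,(1-e^{-\Delta t/\tau_B})$ then gives coefficient $1$ on $g_0$, up to the exponentially small share carried by the surviving particles.
\item The $q_5$ term only shifts the $\boldsymbol u\cdot\nabla g_0$ coefficient away from $-\tau_B$ by $O(\tau_B^2/\Delta t)$.
\end{itemize}
This cancellation is also what settles the concern in your closing paragraph about whether the $O(\tau_B)$ terms give the correct dissipation. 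With this correction your continuum-limit argument coincides with the paper's, and the rest of your proposal needs no change.
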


\begin{proof}
We first examine the limiting behavior of the FP--BGK decomposition and
then the corresponding wave--particle discretization.

\paragraph{1. Collisionless limit}

When $\nu\rightarrow0$, one has $\nu\Delta t\ll1$ and therefore
\begin{equation}
    \nu_p=\nu,
    \qquad
    \nu-\nu_p=0,
    \qquad
    \tau_B=\infty.
\end{equation}
Consequently, the BGK remainder disappears and the decomposed equation
reduces to the original VPFP equation,
\begin{equation}
    \frac{\partial f}{\partial t}
    +\boldsymbol{u}\cdot\nabla_{\boldsymbol{x}}f
    +\boldsymbol{a}\cdot\nabla_{\boldsymbol{u}}f
    =
    \nu\mathcal{L}_{FP}(f).
\end{equation}
Taking the further limit $\nu\rightarrow0$ gives
\begin{equation}
    \frac{\partial f}{\partial t}
    +\boldsymbol{u}\cdot\nabla_{\boldsymbol{x}}f
    +\boldsymbol{a}\cdot\nabla_{\boldsymbol{u}}f
    =0,
    \qquad
    \lambda_D^2\nabla_{\boldsymbol{x}}^2\phi=n-1,
\end{equation}
which is the Vlasov--Poisson system.

At the algorithmic level, the BGK free-transport time satisfies
\begin{equation}
    t_f=-\tau_B\ln\eta\longrightarrow\infty,
    \qquad \eta\in(0,1).
\end{equation}
Thus every particle survives over the entire numerical time step,
\begin{equation}
    \min(t_f,\Delta t)=\Delta t,
\end{equation}
and no particle is absorbed into the wave component.

Moreover, the OU update over each substep $\delta t={\Delta t}/{2}$ is
\begin{equation}
    \boldsymbol{v}^{\,*}
    =
    \boldsymbol{U}
    +e^{-\nu_p\delta t}
    (\boldsymbol{v}-\boldsymbol{U})
    +
    \sqrt{\frac{T}{2}
    \left(1-e^{-2\nu_p\delta t}\right)}
    \boldsymbol{\xi}.
\end{equation}
Since $\nu_p=\nu\rightarrow0$,
\begin{equation}
    e^{-\nu_p\delta t}\rightarrow1,
    \qquad
    1-e^{-2\nu_p\delta t}\rightarrow0,
\end{equation}
and hence
\begin{equation}
    \boldsymbol{v}^{\,*}\rightarrow\boldsymbol{v}.
\end{equation}
Therefore, the FP collision step does not modify the particle velocity
in this limit. The particles are affected only by collisionless spatial
transport and electrostatic acceleration, and the UGKWP-FP method
reduces to a particle solver for the Vlasov--Poisson system.

\paragraph{2. Continuum limit}

When $\nu\rightarrow\infty$ with fixed $\Delta t$, one has
$\nu\Delta t\gg1$, and
\begin{equation}
    \nu_p=\frac{C_p}{\Delta t},
    \qquad
    \nu_p\Delta t=C_p=O(1).
\end{equation}
The BGK relaxation time becomes
\begin{equation}
    \tau_B
    =
    \frac{1}{2(\nu-\nu_p)}
    =
    \frac{1}{2\nu}
    \left[
        1+O\left(\frac{\nu_p}{\nu}\right)
    \right]
    \rightarrow0.
    \label{eq:tauB-continuum}
\end{equation}
Here, the second equality follows from the series
expansion.\footnote{
Since $\nu_p/\nu\to0$,
$(\nu-\nu_p)^{-1}
=\nu^{-1}(1-\nu_p/\nu)^{-1}
=\nu^{-1}[1+\nu_p/\nu
+O((\nu_p/\nu)^2)]$.
}
Thus, the unresolved stiff relaxation is asymptotically dominated by
the BGK component, while the particle-resolved FP frequency remains
bounded. The decomposed kinetic equation can be written as
\begin{equation}
        \partial_t f
        +\boldsymbol{u}\cdot\nabla_{\boldsymbol{x}} f
        +\boldsymbol{a}\cdot\nabla_{\boldsymbol{u}} f
    -
    \nu_p\mathcal{L}_{FP}(f)
    =
    \frac{g-f}{\tau_B}.
    \label{eq:fp-bgk-continuum-analysis}
\end{equation}

For the wave--particle decomposition, the fraction of particles that
survives one complete time step is
\begin{equation}
    P(t_f\geq\Delta t)
    =
    e^{-\Delta t/\tau_B}.
\end{equation}
Using Eq.~\eqref{eq:tauB-continuum},
\begin{equation}
    e^{-\Delta t/\tau_B}
    =
    \exp\left[-2(\nu-\nu_p)\Delta t\right]
    \longrightarrow0.
\end{equation}
Accordingly, the sampled particle mass and the corresponding
free-streaming particle flux satisfy
\begin{equation}
    M^p
    =
    O\left(e^{-\Delta t/\tau_B}\right) \rightarrow 0,
    \qquad
    (\mathscr{F}_{\boldsymbol{W}})^{f,p}
    =
    O\left(e^{-\Delta t/\tau_B}\right) \rightarrow 0.
    \label{eq:particle-flux-vanish}
\end{equation}
The OU evolution acts only on these active particles, and therefore its
explicit particle contribution also vanishes exponentially in the
continuum limit.

The macroscopic evolution is consequently determined by the analytic
wave flux. Expanding the coefficients of the equilibrium and
wave-based free-streaming fluxes for $\tau_B\rightarrow0$ in Eq. \eqref{eq:eqflux-numerical} and \eqref{eq:fw} gives \cite{liu2020unified}
\begin{align}
    (\mathscr{F}_{\boldsymbol{W}})^{analytic}_s
    &=
    (\mathscr{F}_{\boldsymbol{W}})^g_s
    +
    (\mathscr{F}_{\boldsymbol{W}})^{f,w}_s
    \nonumber\\
    &=
    \int
    \boldsymbol{u}\cdot\boldsymbol{n}_s
    \left[
        g_{0}
        -
        \tau_B
        \left(
        \boldsymbol{u}\cdot\nabla_{\boldsymbol{x}}g_{0}
        +
        \partial_t g_{0}
        \right)
        +
        \frac{\Delta t}{2}\partial_t g_{0}
    \right]
    \boldsymbol{\Psi}\,
    \mathrm{d}\boldsymbol{u}
    \nonumber\\
    &\quad
    +
    O(\tau_B^2)
    +
    O\left(e^{-\Delta t/\tau_B}\right).
    \label{eq:continuum-analytic-flux}
\end{align}
The first term in Eq.~\eqref{eq:continuum-analytic-flux} gives the Euler
flux, the $O(\tau_B)$ term gives the first-order Chapman--Enskog
correction, and the term involving $\Delta t/2$ provides the
second-order temporal evolution of the interface equilibrium state.
Hence the analytic flux is consistent with the Navier--Stokes flux.

In particular, the viscosity associated with the limiting BGK wave
flux is
\begin{equation}
    \mu_{\mathrm{num}}
    =
    p\tau_B
    =
    \frac{p}{2(\nu-\nu_p)}
    =
    \frac{p}{2\nu}
    \left[
        1+O\left(\frac{\nu_p}{\nu}\right)
    \right].
\end{equation}
Since the original FP model uses $\nu=p/(2\mu)$,
\begin{equation}
    \mu_{\mathrm{num}}
    =
    \mu
    \left[
        1+O\left(\frac{\nu_p}{\nu}\right)
    \right]
    \longrightarrow\mu.
\end{equation}
Therefore, the limiting wave flux recovers the viscosity of the
original FP model.

Combining Eq.~\eqref{eq:particle-flux-vanish} with
Eq.~\eqref{eq:continuum-analytic-flux}, the macroscopic update
converges to a deterministic discretization of the
Navier--Stokes--Poisson equations.
Therefore, provided that the spatial reconstruction and the
force-splitting procedure are second-order accurate, the UGKWP-FP
method becomes a second-order gas-kinetic scheme for the
Navier--Stokes--Poisson system in the continuum regime. 
\end{proof}

\section{Numerical studies}
\label{sec:tests}

\subsection{Homogeneous relaxation}

This benchmark case serves to investigate the differences in velocity-space relaxation dynamics between the BGK and FP collision models. As a baseline test, it provides essential physical insights that aid in understanding the relaxation behavior in subsequent, more complex cases. The current simulation is conducted using Particle-in-Cell FP (PIC-FP) and Particle-in-Cell BGK (PIC-BGK) methods in the absence of external acceleration fields. Implementations of PIC-FP and PIC-BGK are given in \ref{sec: pic}.

The system is initialized with a spatially uniform, counter-streaming two-beam velocity distribution. The initial phase-space distribution function $f(x, \boldsymbol{v}, 0)$ is expressed as:
\begin{equation}
    f(x, \boldsymbol{u}, 0) = \rho_0 f_0(\boldsymbol{u}),
    \label{eq:init_dist}
\end{equation}
where $\rho_0=1$ represents the background mass density, and $f_{0}(\boldsymbol{u})$ denotes the normalized two-stream Maxwellian velocity distribution. The velocity distribution $f_{0}(\boldsymbol{u})$ is composed of two symmetric counter-propagating beams:
\begin{equation}
    f_0(\boldsymbol{u}) = \frac{1}{2 (2\pi u_{th}^2)^{3/2}} \left[ \exp\left( -\frac{(u_x + U_0)^2 + u_y^2 + u_z^2}{2u_{th}^2} \right) + \exp\left( -\frac{(u_x - U_0)^2 + u_y^2 + u_z^2}{2u_{th}^2} \right) \right],
\end{equation}
where $U_0 = 4.0$ is the beam drift speed, and $u_{th} = 0.5$ is the thermal velocity. The Knudsen number $\text{Kn}=0.1$, with $\tau\approx 0.1$. The particle number is 1000 in each cell. The time step $\Delta = 0.01$. The electric field is not considered in this case.

Fig.~\ref{fig:homo-relaxation} illustrates the temporal evolution of the normalized velocity distribution function of the whole domain $f(u)$ obtained by the PIC-FP and PIC-BGK methods, starting from the identical bimodal initial condition. Significant disparities in the relaxation pathways are observed between the two collision operators, owing to their different mathematical formulations.

At early relaxation stages (Fig.~\ref{fig:homo-b}), the FP operator demonstrates a noticeably faster attenuation of the peak amplitudes compared to the BGK operator. Since the FP operator describes continuous velocity-space drift and diffusion, the two initial Maxwellian beams undergo rapid diffusive broadening, causing the peaks to flatten smoothly while spreading into the intermediate velocity regime. Conversely, the BGK model decays more slowly at $u=\pm 4.0$ without significant diffusive broadening, and a distinct central elevation emerges around $u = 0$, directly reflecting the early contribution of the equilibrium target distribution.

This contrast becomes more prominent at intermediate times (Fig.~\ref{fig:homo-c}). The FP model evolves into a broad, unimodal-like continuous plateau bridging the two initial beams. In contrast, the BGK distribution exhibits a multi-peak profile, where the residual of the initial counter-streaming beams coexist with the growing Maxwellian core centered at $u = 0$. 

Eventually, as the system approaches thermal equilibrium at late times (Fig.~\ref{fig:homo-d}), the profiles predicted by both operators converge to the same standard Maxwellian distribution. These results highlight that while both BGK and FP operators correctly recover the identical late-time equilibrium state, their transient relaxation mechanisms differ fundamentally. Due to its second-order diffusive nature, the FP operator is highly sensitive to large velocity-space gradients, driving a rapid dissipation and smoothing of steep non-equilibrium structures. In contrast, the BGK operator relaxes the distribution function towards equilibrium at a uniform rate that is insensitive to local velocity-space gradients, thereby preserving such steep gradient features over a significantly longer duration during non-equilibrium transients.

\begin{figure}
    \centering
    \begin{subfigure}[b]{0.48\textwidth}
    \centering
    \includegraphics[width=1.0\linewidth]{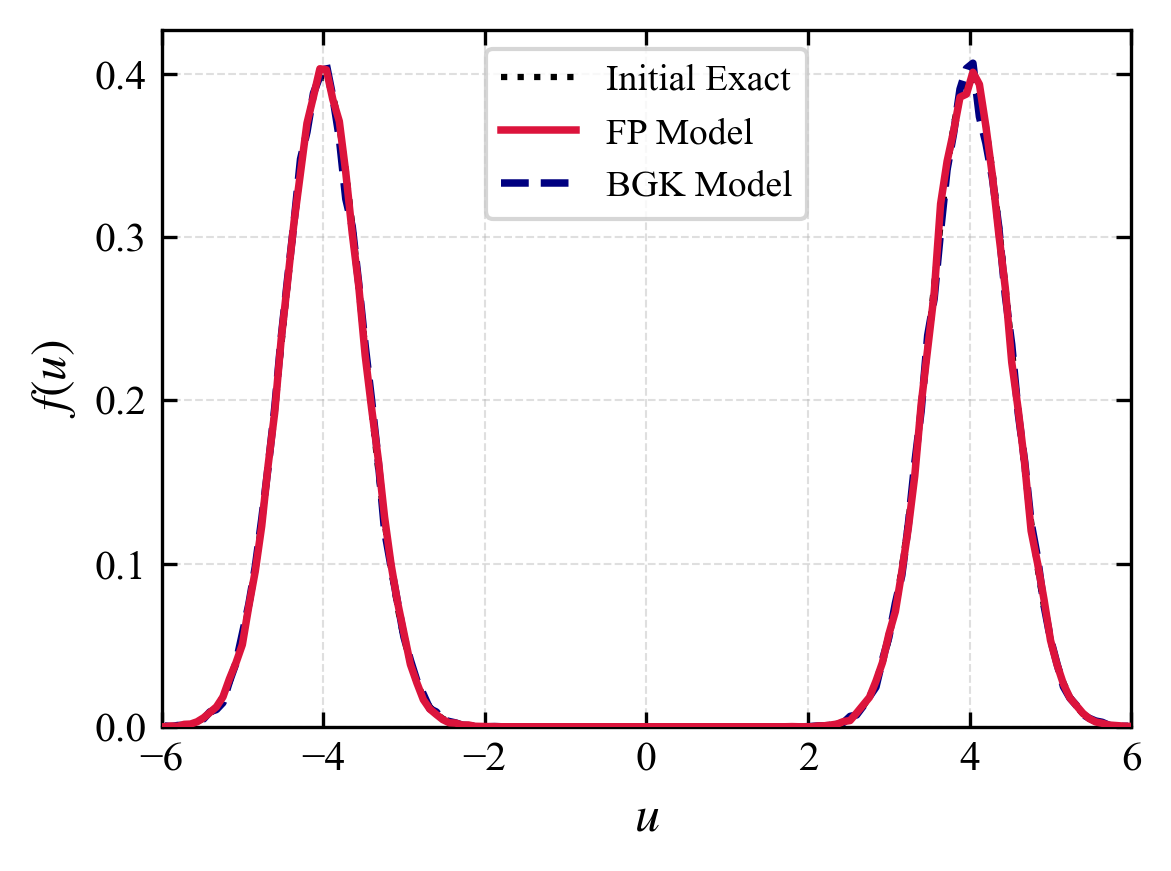}
    \caption{}
    \label{fig:homo-a}
    \end{subfigure}
    \hfill
    \begin{subfigure}[b]{0.48\textwidth}
    \centering
    \includegraphics[width=1.0\linewidth]{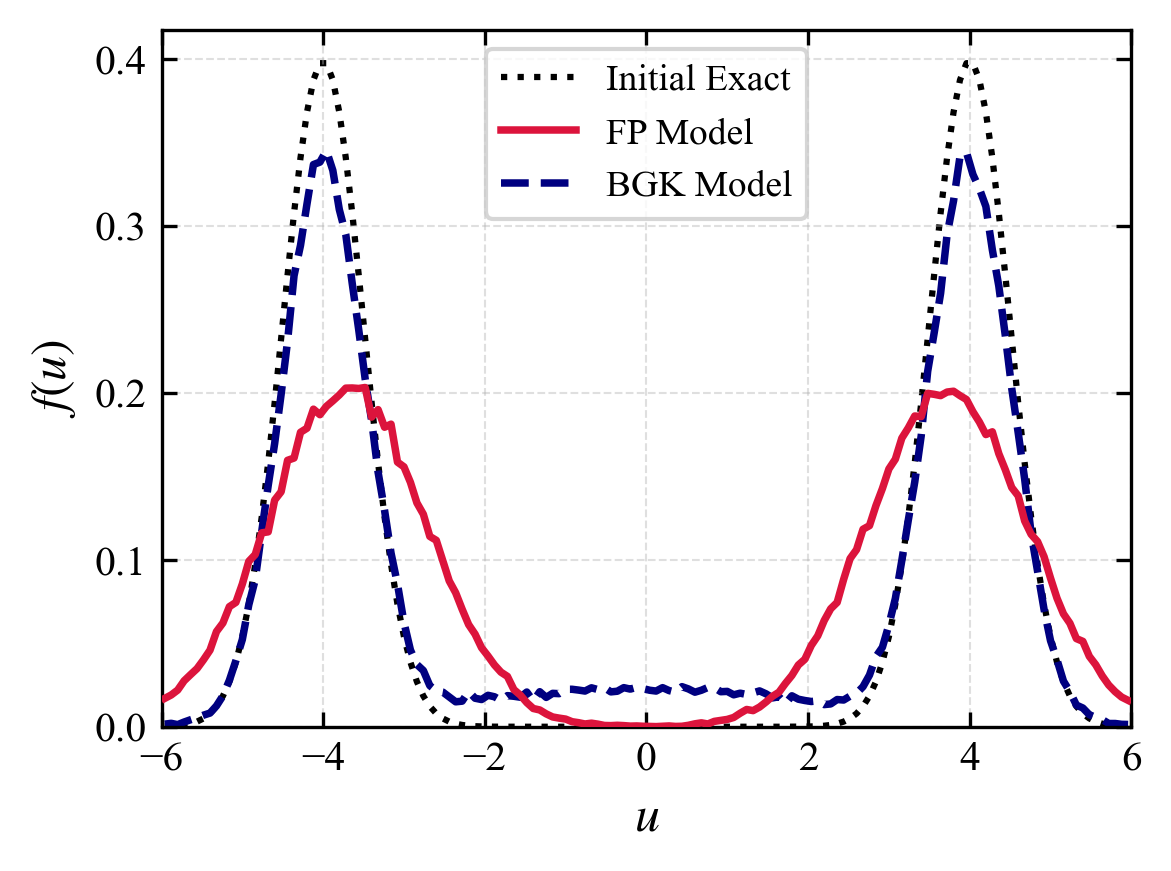}
    \caption{}
    \label{fig:homo-b}
    \end{subfigure}
    \hfill
    \begin{subfigure}[b]{0.48\textwidth}
    \centering
    \includegraphics[width=1.0\linewidth]{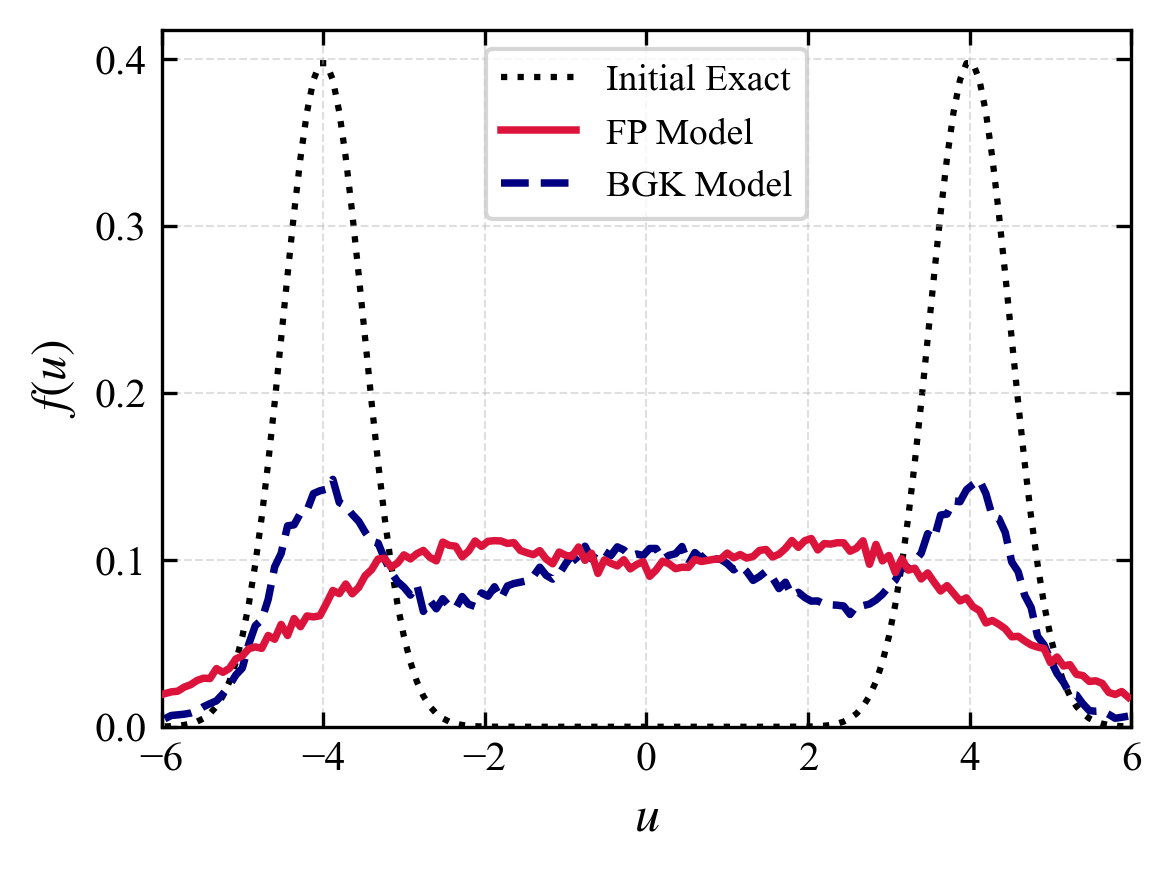}
    \caption{}
    \label{fig:homo-c}
    \end{subfigure}
  	\hfill
    \begin{subfigure}[b]{0.48\textwidth}
    \centering
    \includegraphics[width=1.0\linewidth]{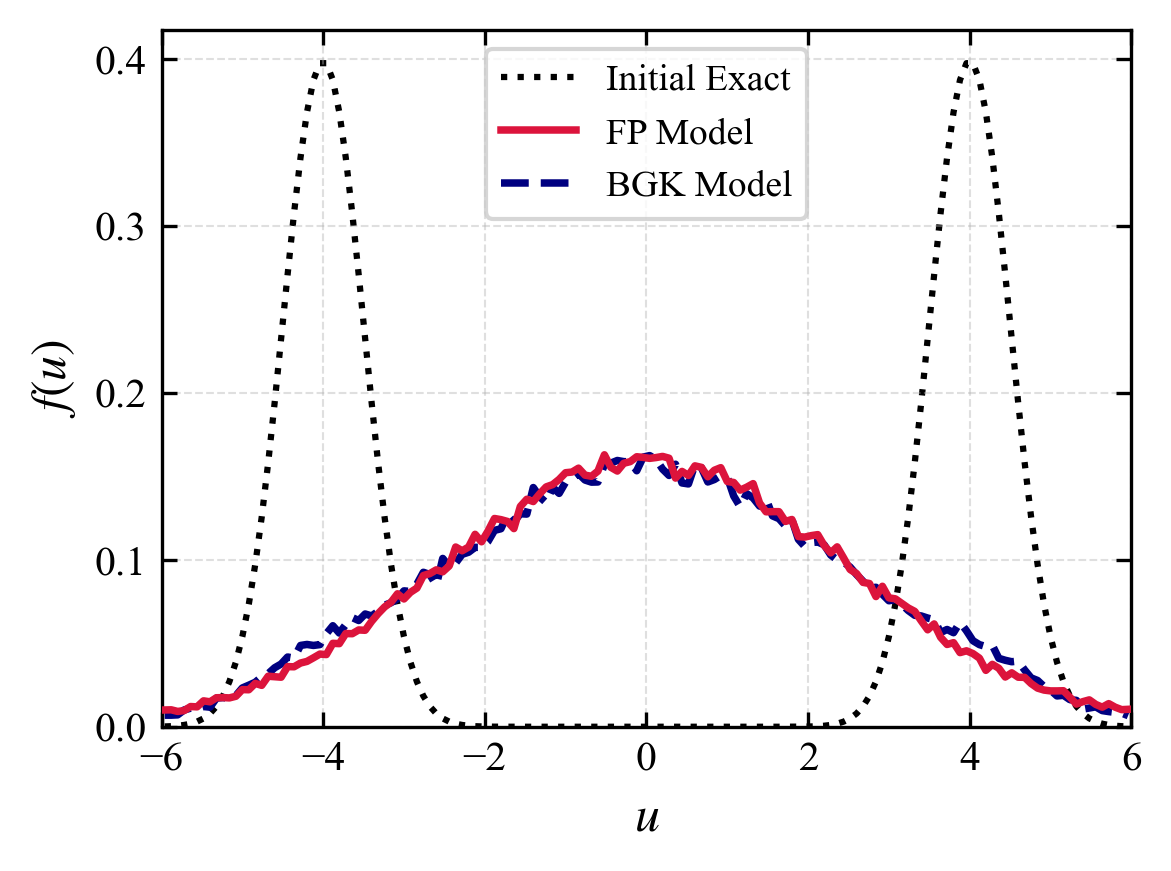}
    \caption{}
    \label{fig:homo-d}
    \end{subfigure}
    \caption{Comparison of domain-averaged velocity distribution functions $f(u)$ between PIC-FP and PIC-BGK models during homogeneous relaxation at different time instances: (a) $t/\tau = 0$, (b) $t/\tau \approx 0.1$, (c) $t/\tau \approx 0.7$, and (d) $t/\tau \approx 2$. The distributions are computed by   macroparticles across all spatial grid cells and normalized by the total mass so that $\int_{-\infty}^{\infty} f(u) \, \mathrm{d}u = 1$.}
    \label{fig:homo-relaxation}
\end{figure}

\subsection{Nonlinear Landau damping}

This benchmark investigates the dynamic differences between the FP and BGK collision operators across various flow regimes and demonstrates the asymptotic-preserving properties of the proposed UGKWP-FP scheme. We examine a wide range of Knudsen numbers, spanning from $\text{Kn}=\infty$, $\text{Kn}=0.1$ to $\text{Kn}=0.001$. This comprehensive range evaluates the scheme's performance across diverse flow regimes without requiring explicit resolution of the mean free path.

The system is initialized with a spatially perturbed Maxwellian distribution in phase space:
$$f(x, \boldsymbol{u}, 0) = \rho_0 \left[1 + \alpha \cos(k x)\right] f_0(\boldsymbol{u}),$$
where $\rho_0 = 1.0$ is the uniform background density, $\alpha$ is the perturbation amplitude, and $k$ is the wavenumber. The term $f_0(\boldsymbol{u})$ denotes the unperturbed 3D Maxwellian velocity distribution with a thermal velocity of $u_{th} = 1.0$:
$$f_0(\boldsymbol{u}) = \frac{1}{(2\pi u_{th}^2)^{3/2}} \exp\left(-\frac{u_x^2 + u_y^2 + u_z^2}{2 u_{th}^2}\right).$$

To suppress initial statistical fluctuations, the phase-space distribution is initialized using quasi-random numbers based on low-discrepancy bit-reversal sequences \cite{hammersley2013monte}. The computational domain is defined as $x \in [0, L]$ with length $L = 2\pi/k$, and is uniformly discretized by 128 spatial grid points. The wavenumber and initial perturbation amplitude are set to $k = 0.5$ and $\alpha = 0.5$, respectively. The particle number is initialized as 1000 in each cell and will adaptively change according to the local Knudsen number. The CFL number is taken as 0.5 for UGKWP-FP.

\begin{figure}
    \centering
    \begin{subfigure}[b]{0.48\textwidth}
    \centering
    \includegraphics[width=1.0\linewidth]{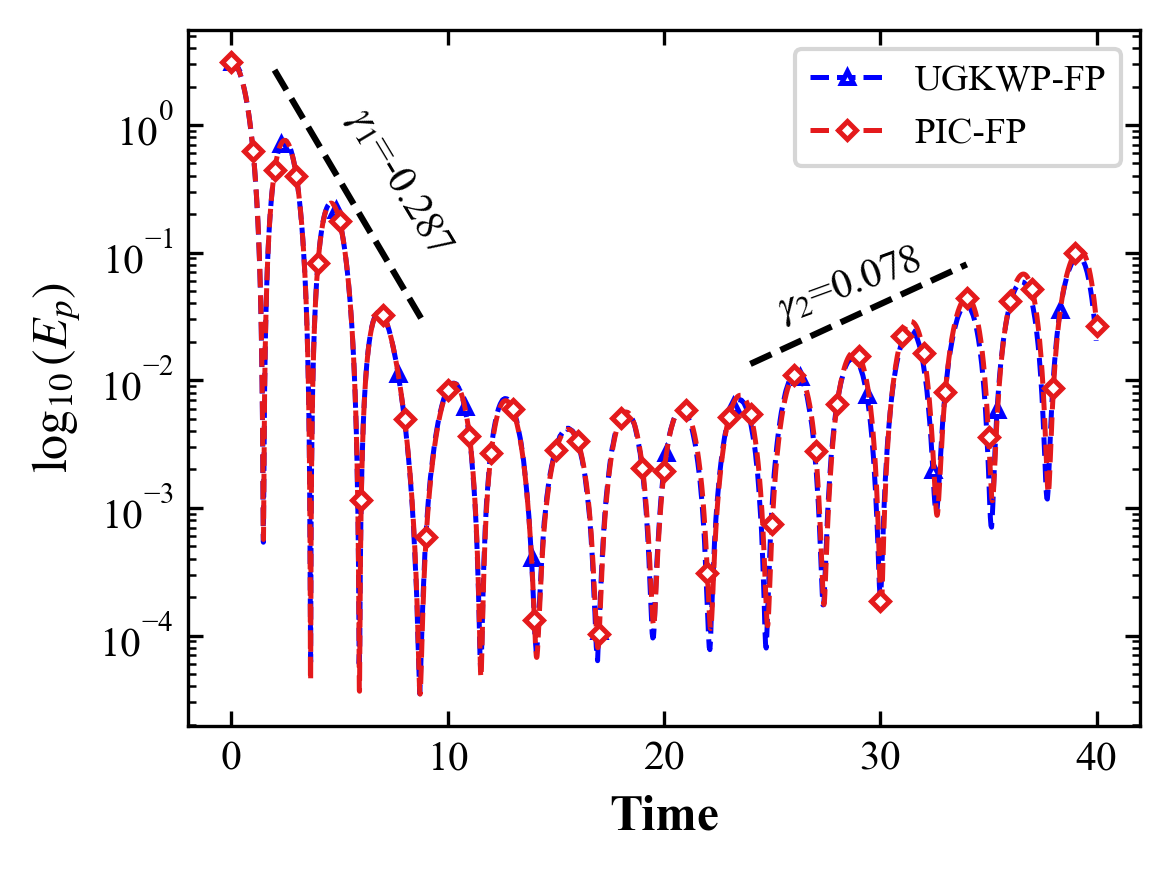}
    \caption{}
    \label{fig:nld-kninf}
    \end{subfigure}
    \begin{subfigure}[b]{0.48\textwidth}
    \centering
    \includegraphics[width=1.0\linewidth]{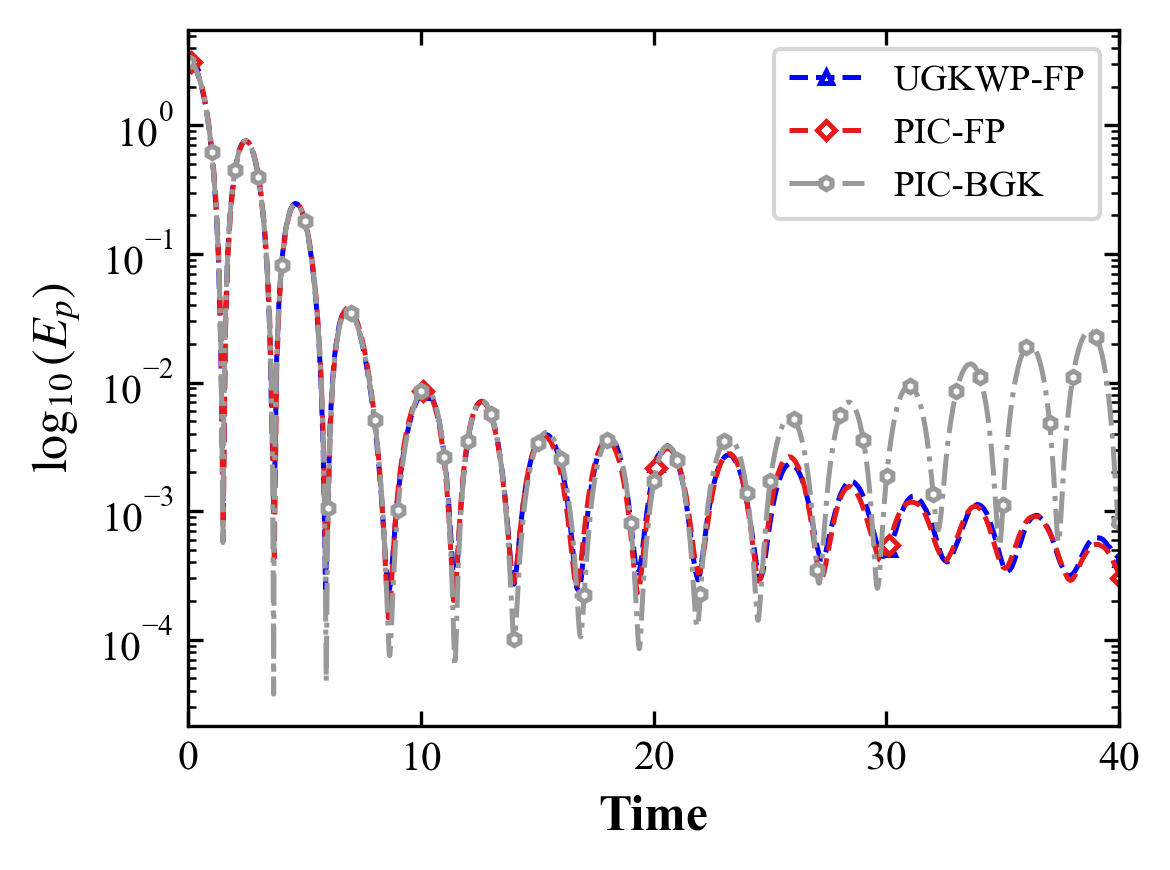}
    \caption{}
    \label{fig:nld-kn100-E}
    \end{subfigure}
    \caption{Temporal evolution of the electric field energy for the nonlinear Landau damping in the highly rarefied regime at (a) $\text{Kn}=\infty$, (b) $\text{Kn}=100$. Results are compared among the UGKWP-FP, PIC-FP, and PIC-BGK methods.}
\end{figure}

Figure \ref{fig:nld-kninf} illustrates the temporal evolution of the electric field energy in the completely collisionless limit ($\text{Kn} = \infty$). In the absence of collisions, the system is entirely governed by the Vlasov-Poisson dynamics, where wave-particle resonance dictates the initial energy evolution and the subsequent nonlinear particle trapping. As depicted in the figure, the electrostatic energy profile obtained from the UGKWP-FP method is in good agreement with that of the reference PIC-FP scheme and the theoretical initial damping rate ($\gamma_1 = -0.287$) and the secondary growth rate ($\gamma_2 = 0.078$). This alignment demonstrates that the proposed UGKWP-FP algorithm can recover the collisionless kinetic limit.

\begin{figure}
    \centering
    \begin{subfigure}[b]{0.325\textwidth}
    \centering
    \includegraphics[width=1.0\linewidth]{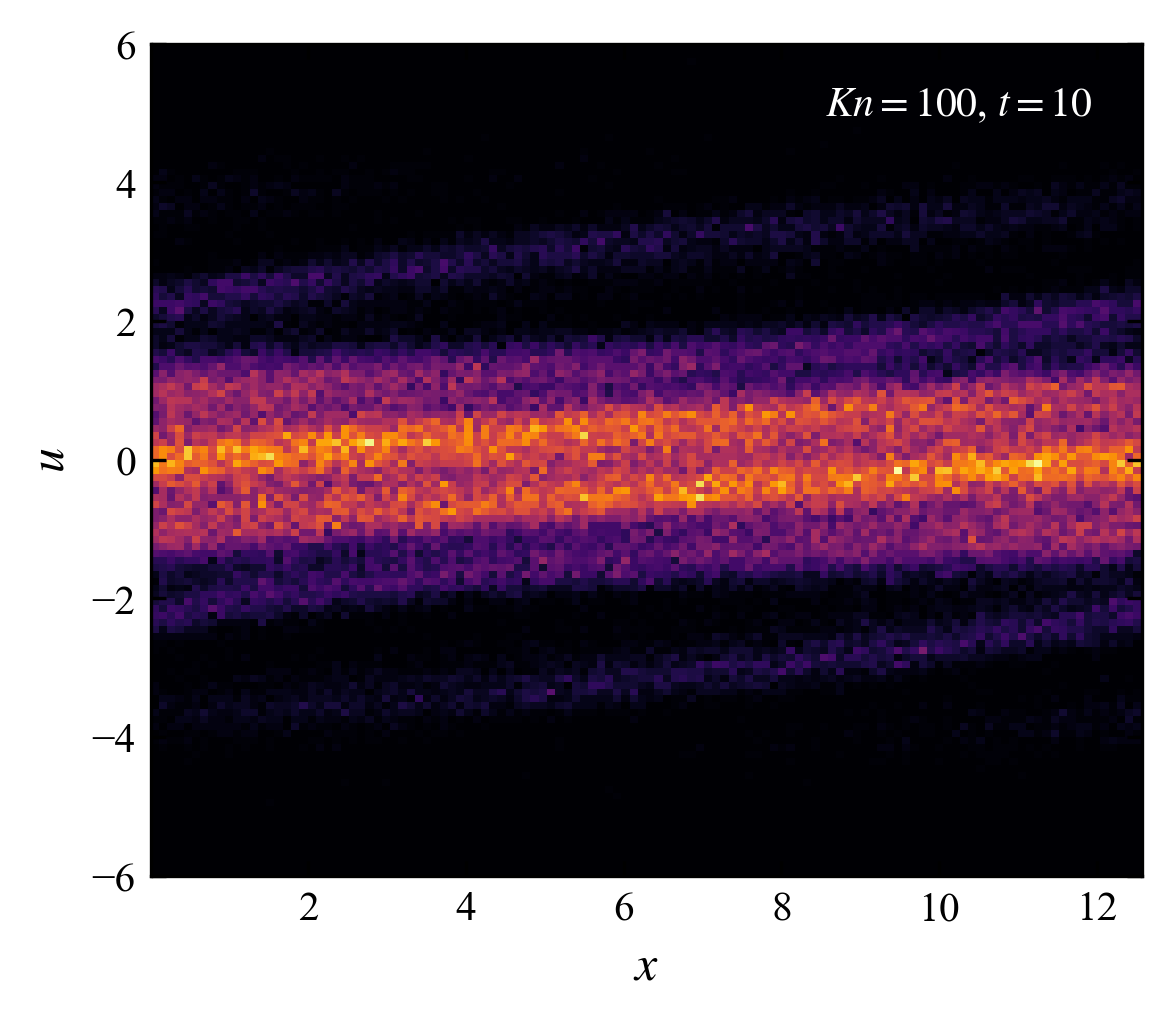}
    \end{subfigure}
    \hfill
    \begin{subfigure}[b]{0.325\textwidth}
    \centering
    \includegraphics[width=1.0\linewidth]{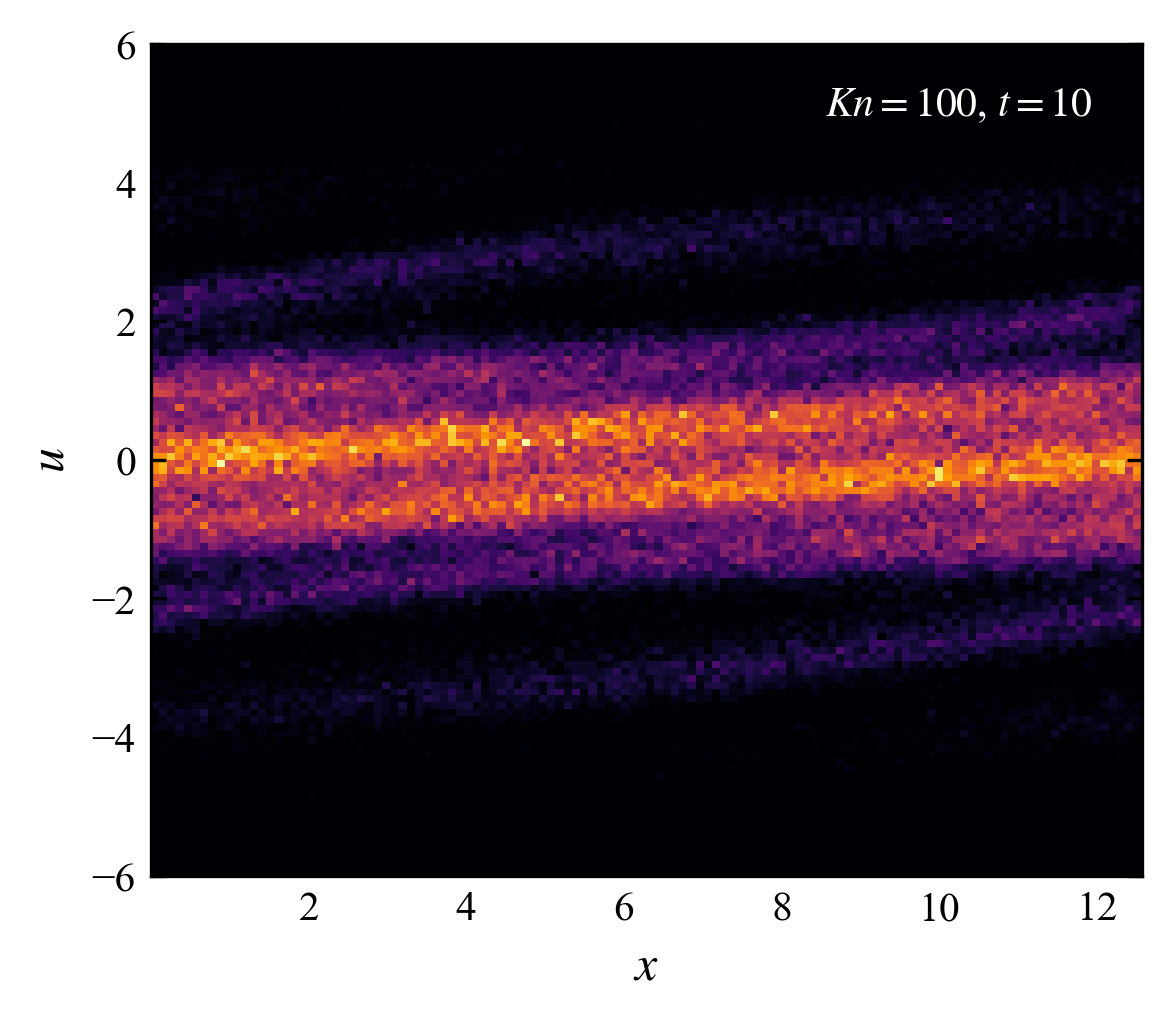}
    \end{subfigure}
    \hfill
    \begin{subfigure}[b]{0.325\textwidth}
    \centering
    \includegraphics[width=1.0\linewidth]{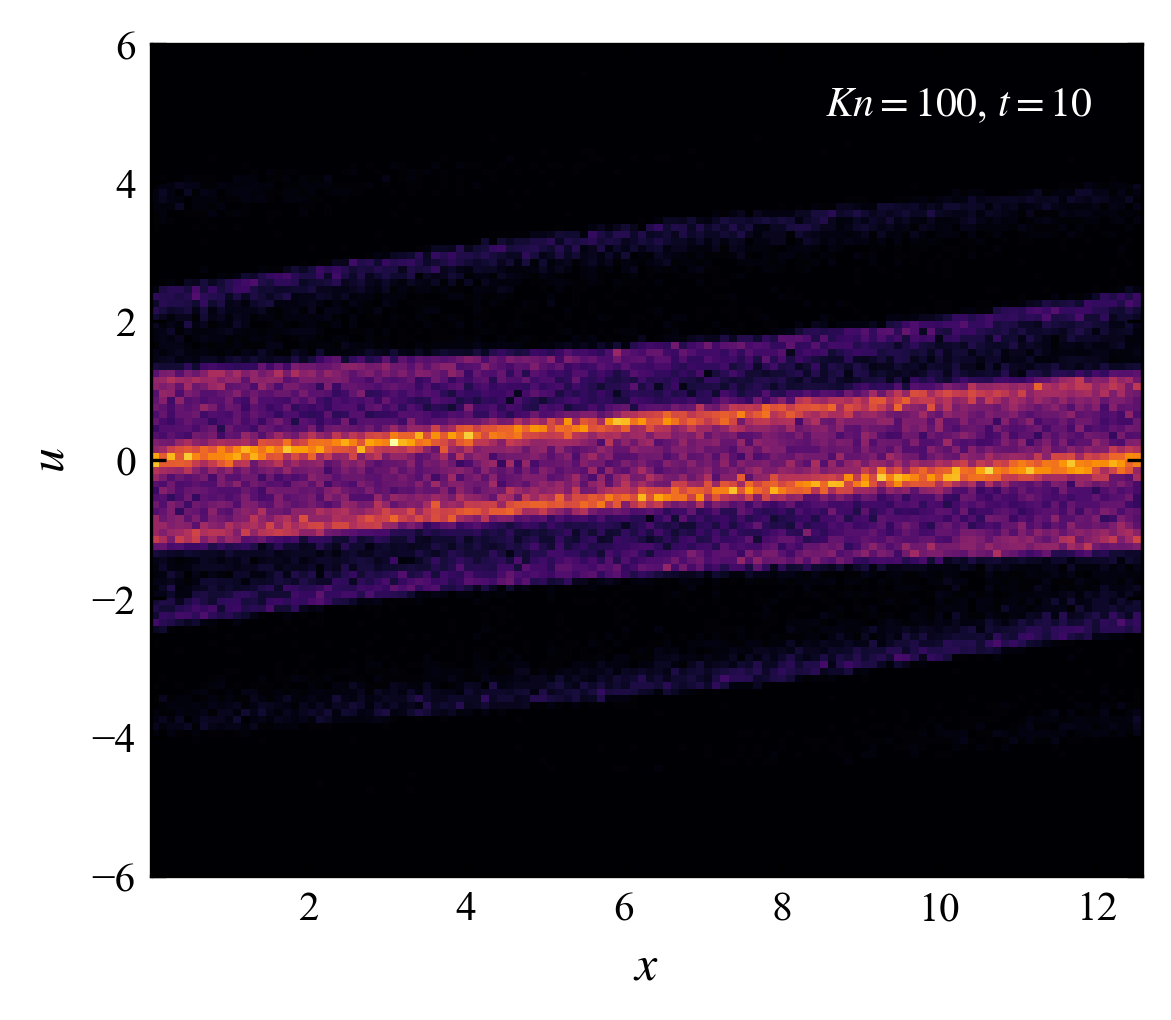}
    \end{subfigure}
    \vfill
    \centering
    \begin{subfigure}[b]{0.325\textwidth}
    \centering
    \includegraphics[width=1.0\linewidth]{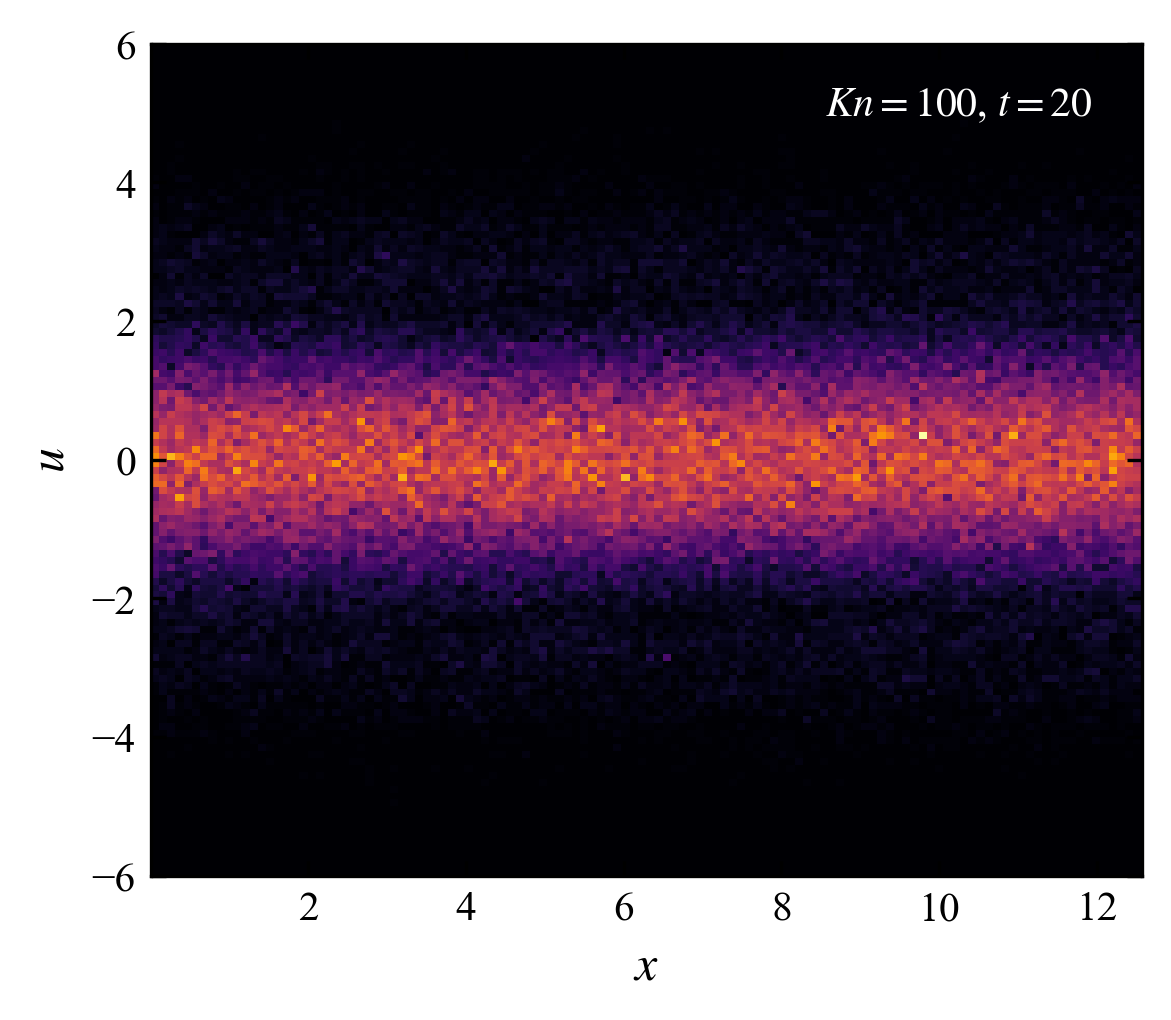}
    \end{subfigure}
    \hfill
    \begin{subfigure}[b]{0.325\textwidth}
    \centering
    \includegraphics[width=1.0\linewidth]{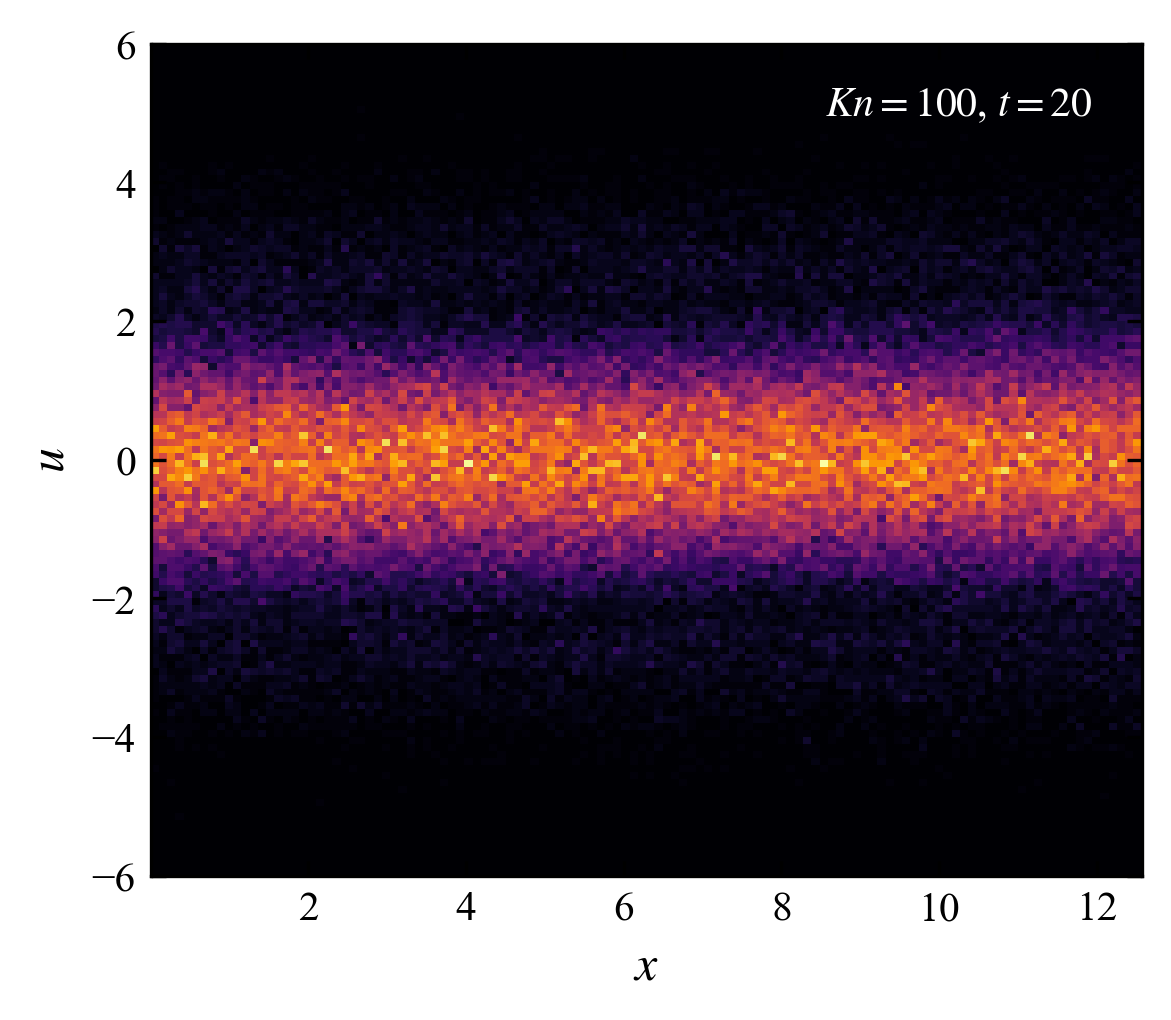}
    \end{subfigure}
    \hfill
    \begin{subfigure}[b]{0.325\textwidth}
    \centering
    \includegraphics[width=1.0\linewidth]{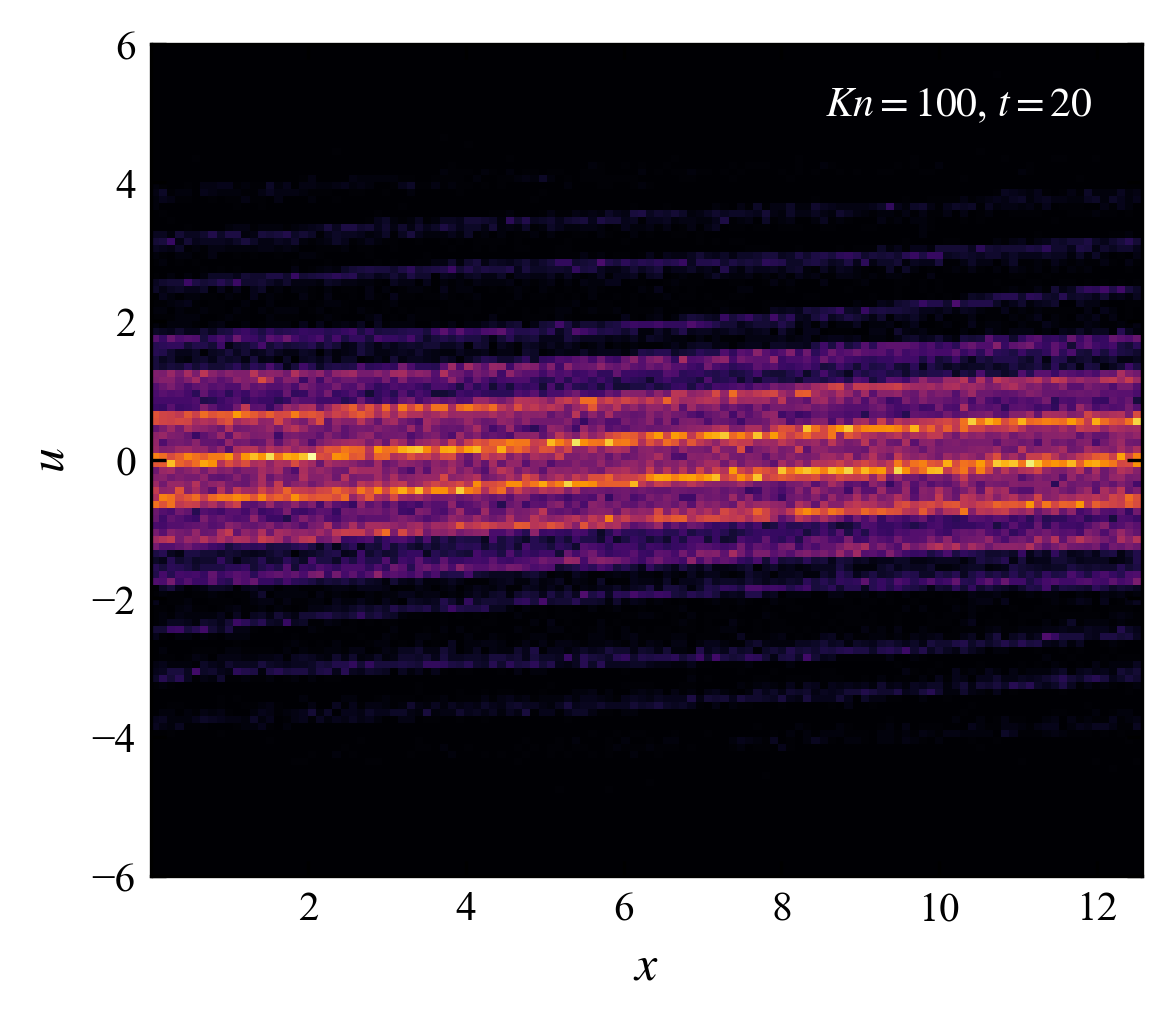}
    \end{subfigure}
    \vfill
    \centering
    \begin{subfigure}[b]{0.325\textwidth}
    \centering
    \includegraphics[width=1.0\linewidth]{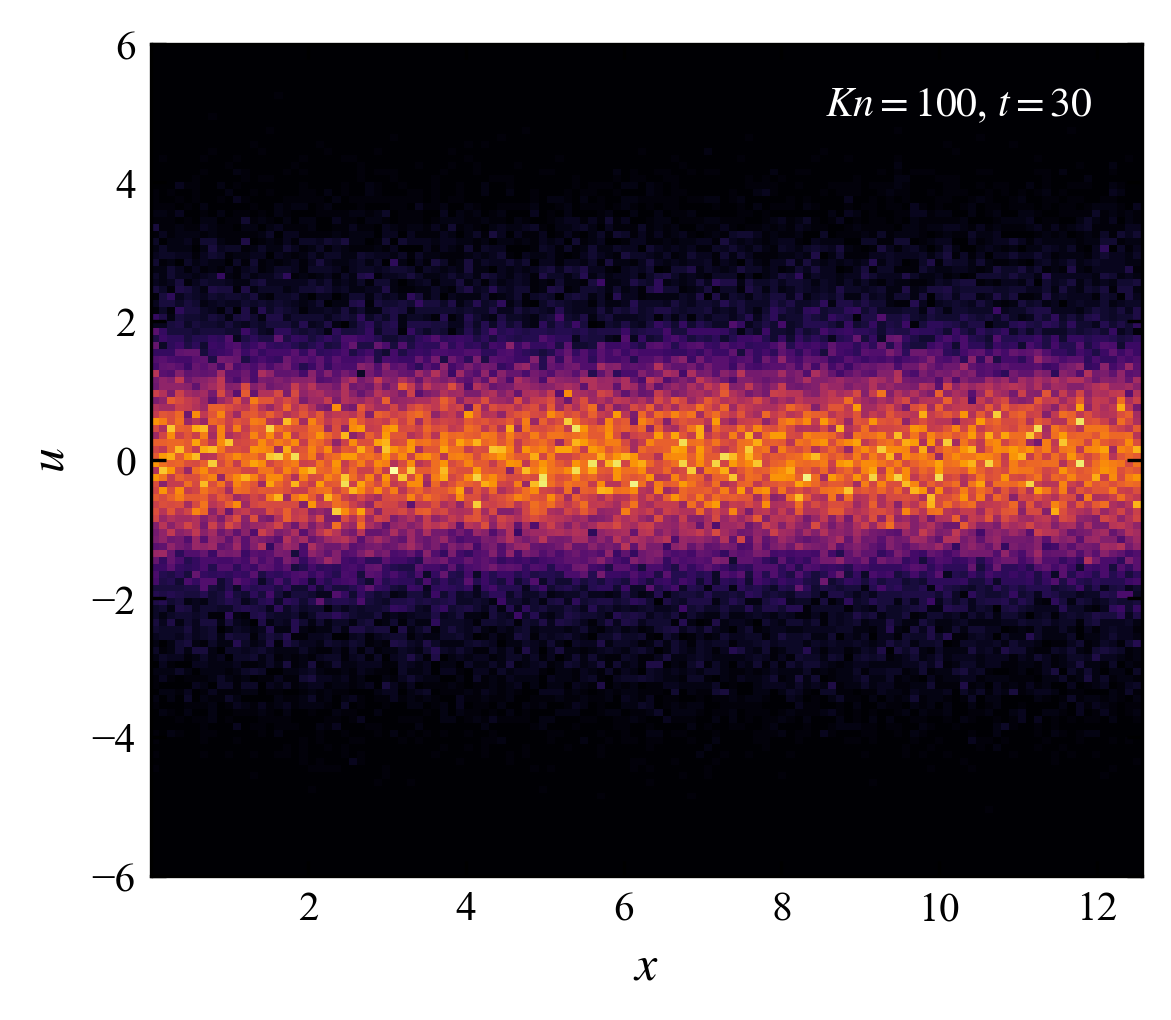}
    \end{subfigure}
    \hfill
    \begin{subfigure}[b]{0.325\textwidth}
    \centering
    \includegraphics[width=1.0\linewidth]{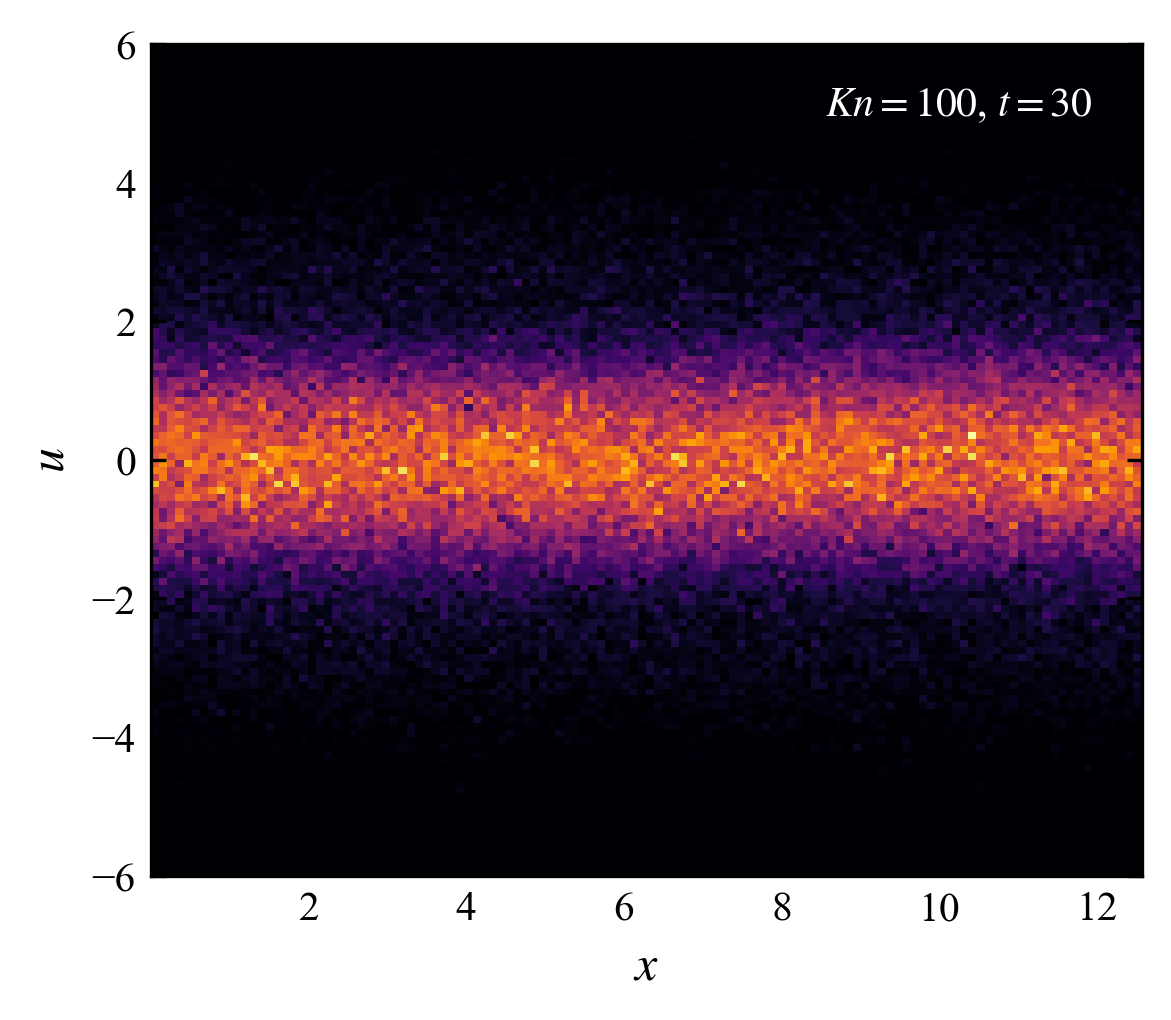}
    \end{subfigure}
    \hfill
    \begin{subfigure}[b]{0.325\textwidth}
    \centering
    \includegraphics[width=1.0\linewidth]{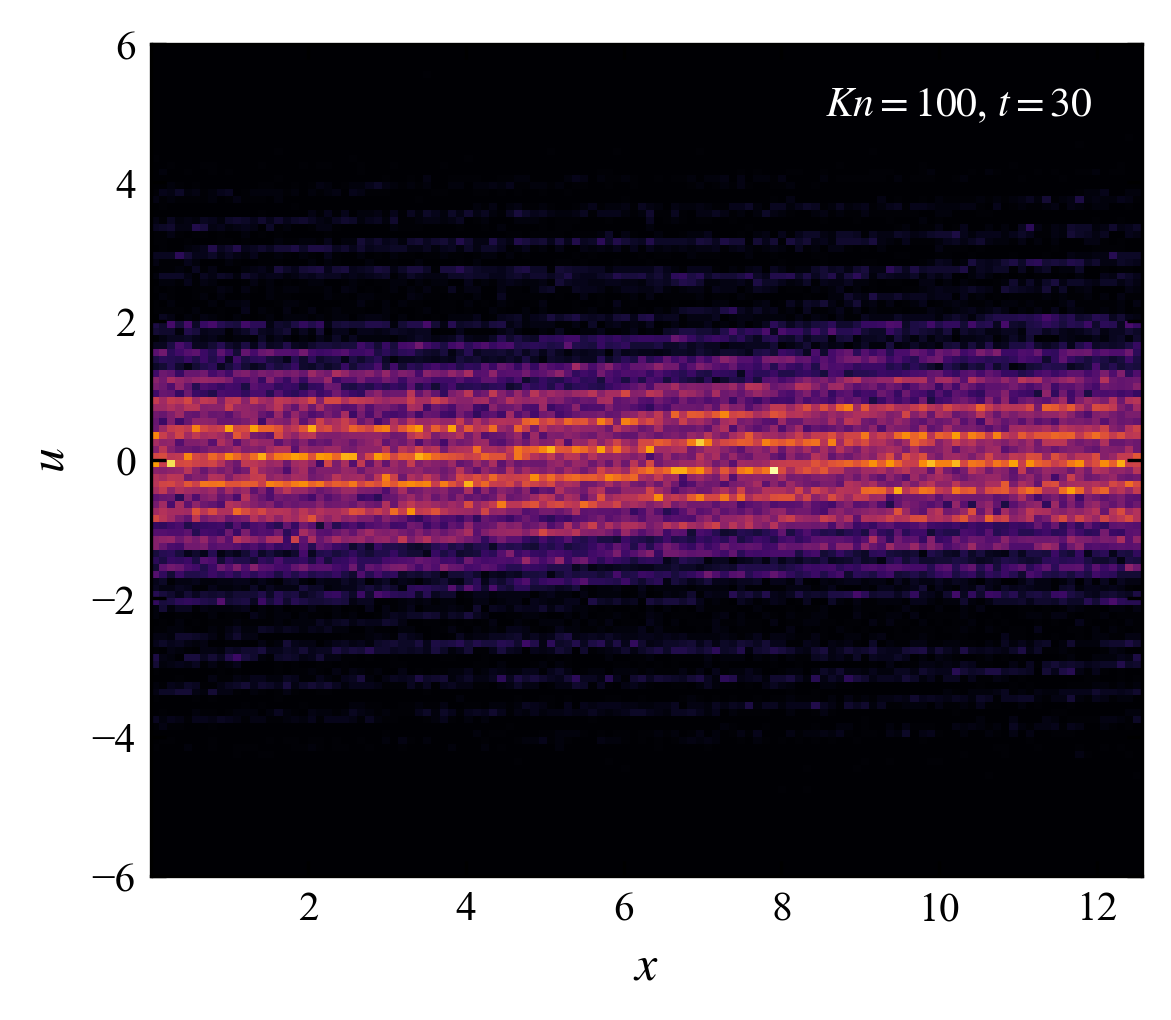}
    \end{subfigure}
    \vfill
    \centering
    \begin{subfigure}[b]{0.325\textwidth}
    \centering
    \includegraphics[width=1.0\linewidth]{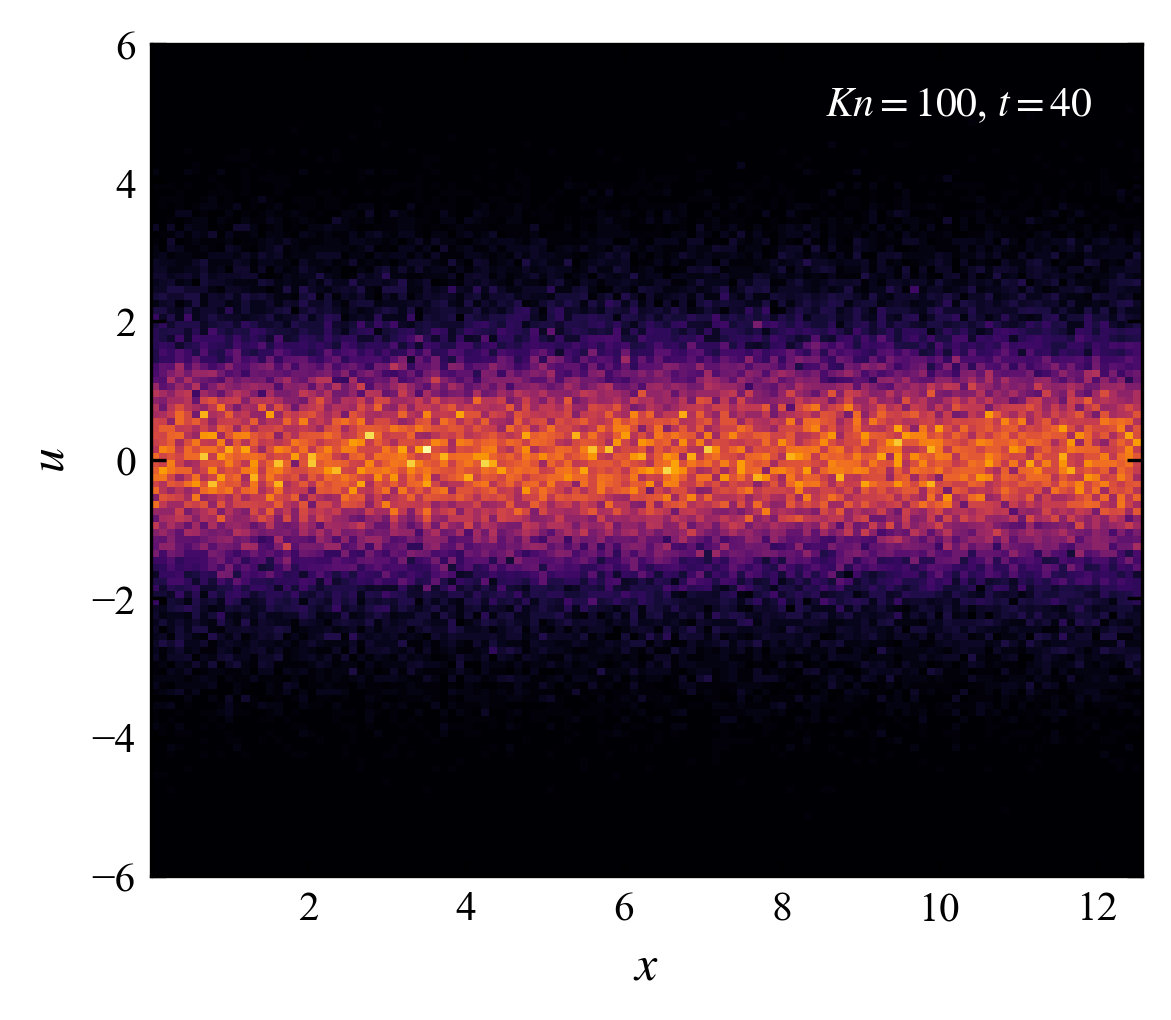}
    \end{subfigure}
    \hfill
    \begin{subfigure}[b]{0.325\textwidth}
    \centering
    \includegraphics[width=1.0\linewidth]{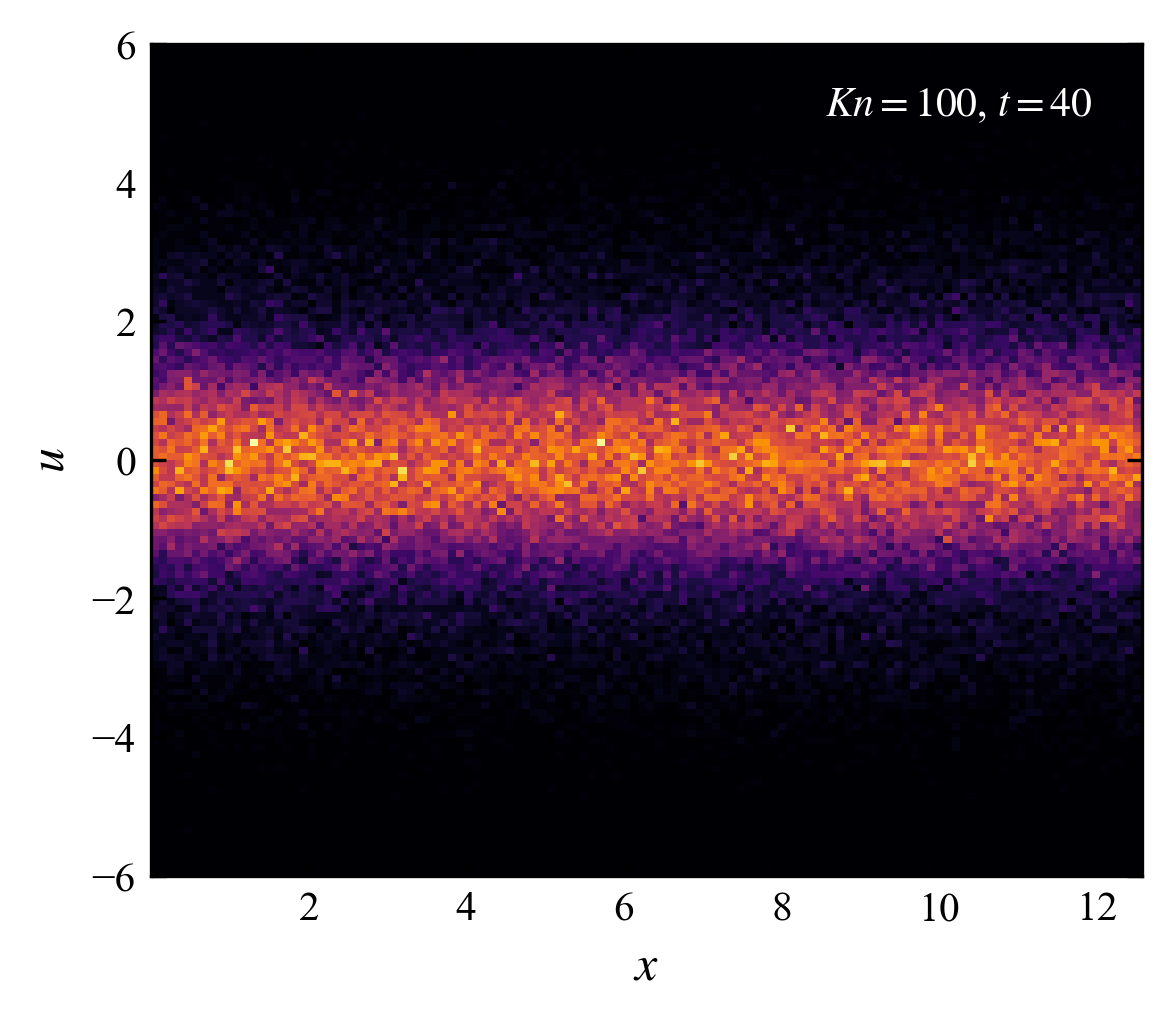}
    \end{subfigure}
    \hfill
    \begin{subfigure}[b]{0.325\textwidth}
    \centering
    \includegraphics[width=1.0\linewidth]{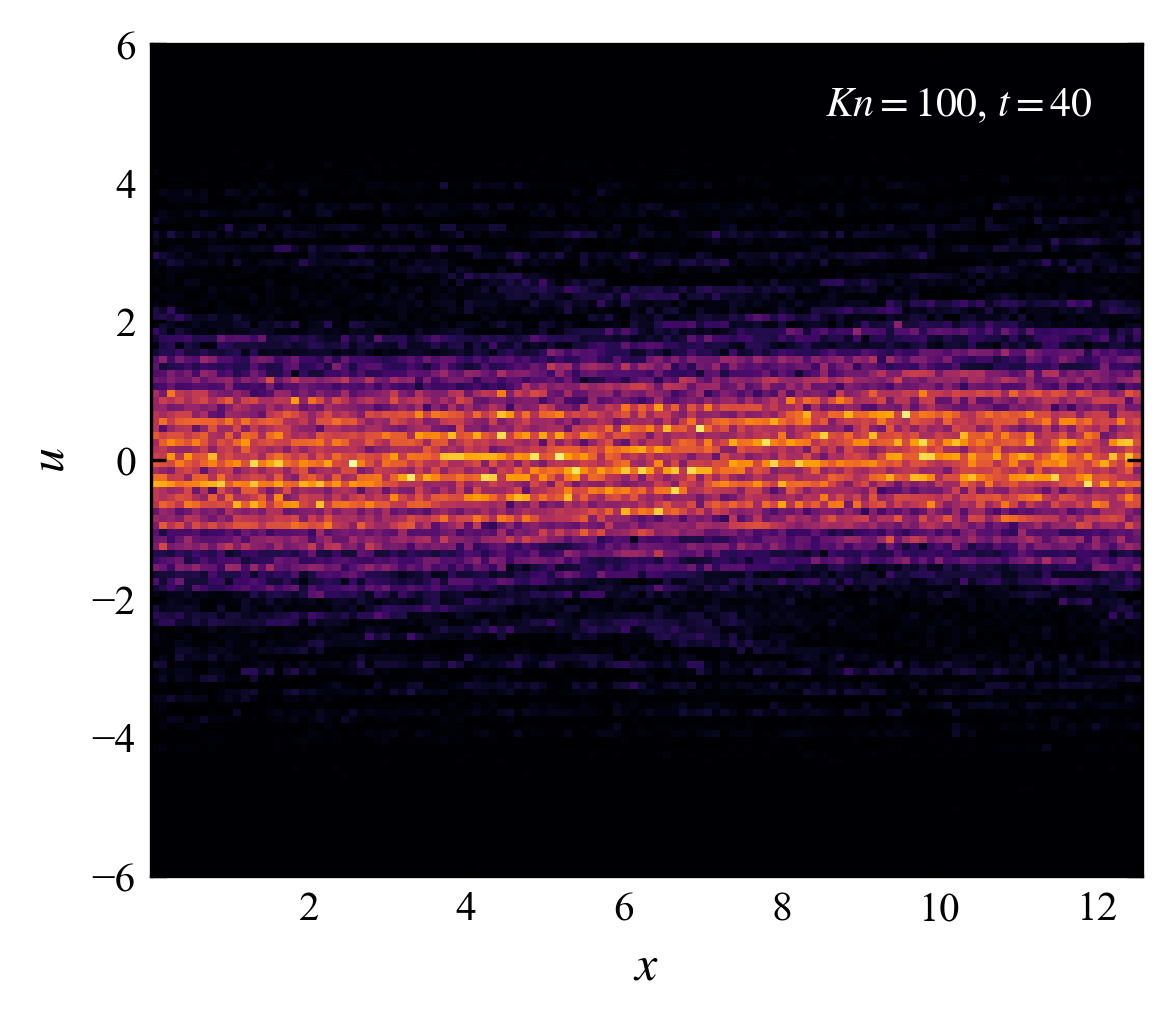}
    \end{subfigure}
	\caption{Snapshots of the phase-space distribution function for the nonlinear Landau damping in the highly rarefied regime ($\text{Kn}=100$). The rows, from top to bottom, correspond to different simulation times at $t=10$, $20$, $30$, and $40$. The columns, from left to right, display the results obtained using the PIC-FP, UGKWP-FP, and PIC-BGK methods, respectively.}
    \label{fig:nld-phase-kn100}
\end{figure}

To further investigate the influence of weak collisions, we examine the highly rarefied regime with $\text{Kn} = 100$, corresponding to a relaxation time of $\tau \approx 100$ and a relaxation rate of $\nu \approx 0.01$. Figure~\ref{fig:nld-kn100-E} and Figure~\ref{fig:nld-phase-kn100} present the temporal evolution of electrostatic energy and the phase-space distribution, respectively. The results obtained by the proposed UGKWP-FP algorithm exhibit close agreement with the benchmark PIC-FP solution.

During the early evolution ($t < 20$), both the FP and BGK models yield nearly identical electric field energy profiles, capturing the characteristic wave-particle interactions. However, discrepancies emerge in the late-stage evolution ($t > 20$). While the FP model exhibits a continuous decay in the electric field energy, the BGK model predicts a notable resurgence and growth in energy amplitude. This discrepancy stems from the distinct mechanisms through which the FP and BGK terms act on velocity space. 

As shown in Fig.~\ref{fig:nld-phase-kn100}, the phase space of the FP model becomes smoothed by $t = 10$ and relaxes toward an equilibrium state for $t \ge 20$. Conversely, the BGK model allows fine phase-space structures to persist throughout $t = 20 \text{--} 40$, thereby driving the secondary energy growth observed in Fig.~\ref{fig:nld-kn100-E}. These findings highlight that, even under highly rarefied conditions ($\text{Kn}=100$), the diffusion mechanism inherent to the FP model plays a different role compared to the BGK model in suppressing phase-space filamentation and driving the system toward physical thermalization.

\begin{figure}
    \centering
    \begin{subfigure}[b]{0.48\textwidth}
    \centering
    \includegraphics[width=1.0\linewidth]{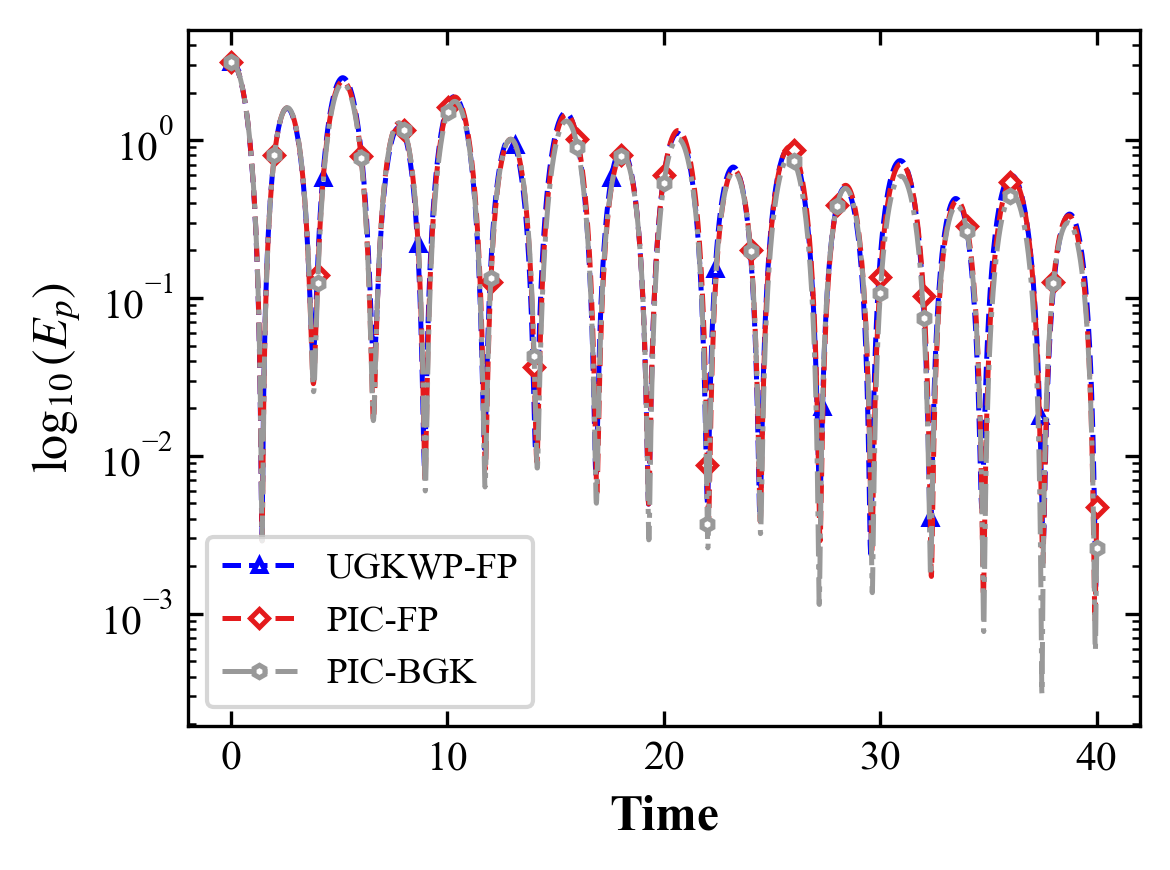}
    \caption{}
    \label{fig:nld-kn0.1-a}
    \end{subfigure}
    \hfill
    \begin{subfigure}[b]{0.48\textwidth}
    \centering
    \includegraphics[width=1.0\linewidth]{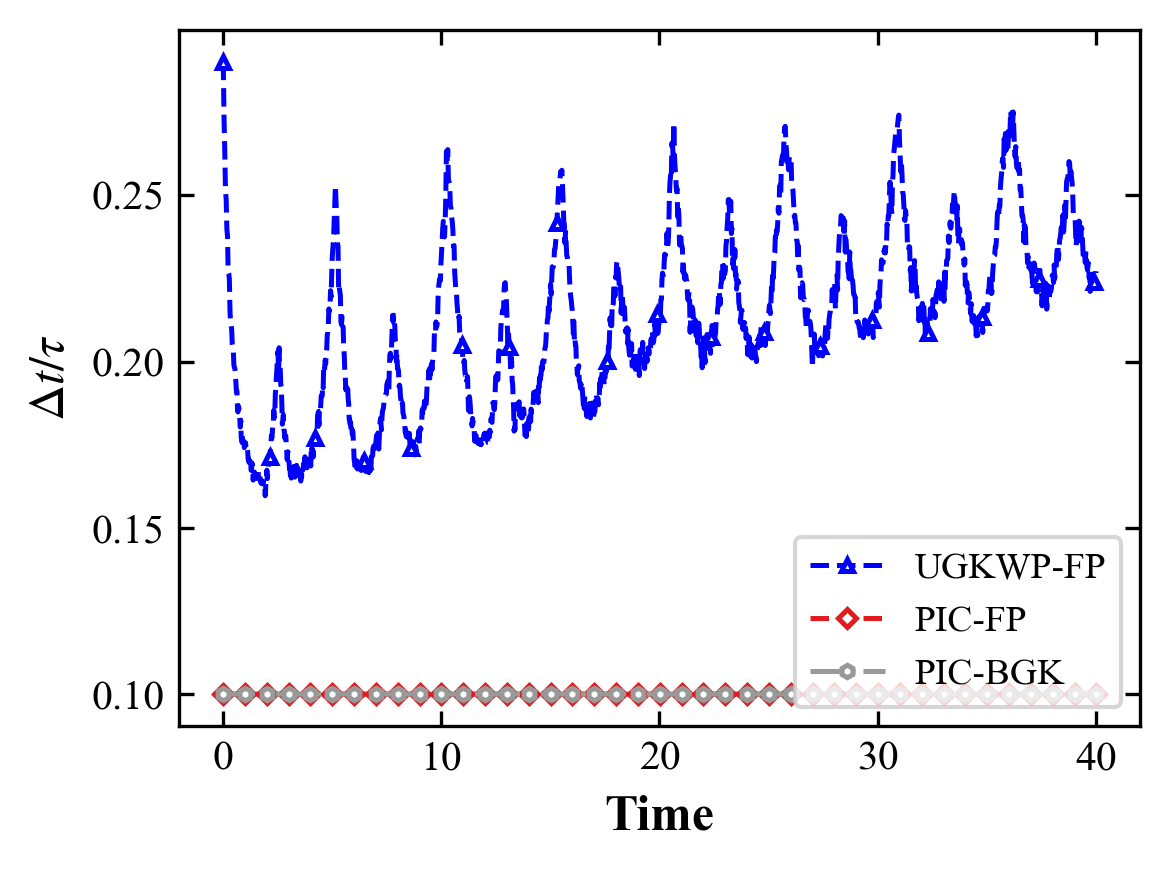}
    \caption{}
    \label{fig:nld-kn0.1-b}
    \end{subfigure}
    \caption{Temporal evolution of (a) the electric field energy and (b) the normalized time step $\Delta t/\tau$ for nonlinear Landau damping at $\text{Kn} = 0.1$. Results are compared among the UGKWP-FP, PIC-FP, and PIC-BGK methods.}
    \label{fig:nld-kn0.1}
\end{figure}

Figure \ref{fig:nld-kn0.1} presents the temporal evolution of the electric field energy and the normalized time step for nonlinear Landau damping in the transitional regime at $\text{Kn} = 0.1$. In this regime, the macroscopic time step $\Delta t$ and the microscopic collision time $\tau$ are on the same order of magnitude. As shown in Fig.~\ref{fig:nld-kn0.1}(a), the energy evolution predicted by the UGKWP-FP method is in good agreement with the reference PIC-FP solution. Notably, the discrepancy between the BGK and FP collision models becomes quite subtle in this transitional regime. This suggests that as collisionality increases, the macroscopic relaxation effects described by the two collision operators gradually converge. 

Figure \ref{fig:nld-kn0.001} presents the simulation results for nonlinear Landau damping in the near-continuum regime at $\text{Kn}=0.001$. As shown in Fig.~\ref{fig:nld-kn0.001-a}, strong collisional relaxation
 heavily suppress the collisionless Landau damping mechanism. Consequently, while the electric field energy exhibits sustained oscillations,  its overall magnitude remains nearly undissipated over time. The energy evolution profiles predicted by the proposed UGKWP-FP scheme agree well with those obtained from the reference PIC-FP and PIC-BGK models. This agreement confirms that the UGKWP-FP method correctly recovers the physics in the continuum limit.

Furthermore, Fig.~\ref{fig:nld-kn0.001-b} highlights the computational efficiency and the asymptotic-preserving property of the proposed scheme by comparing the temporal evolution of the normalized time step, $\Delta t/\tau$. For the PIC-FP and PIC-BGK solvers, the time steps are strictly constrained by the stiff microscopic collision time $\tau \sim 0.001$, forcing $\Delta t \le \tau$. Besides, the mesh size is also restricted by the mean free path $\Delta x \le l_{mfp} \sim 0.001$. In contrast, the UGKWP-FP method dynamically adopts a time step that is significantly larger than the mean collision time $\Delta t / \tau \approx 20$. Given that the employed grid size is also vastly larger than the mean free path ($\Delta x = 0.1 \gg l_{mfp} \sim 0.001$), this comparison comprehensively demonstrates the capability of the UGKWP-FP scheme to efficiently simulate near-continuum flows without explicitly resolving the underlying kinetic scales.

\begin{figure}
    \centering
    \begin{subfigure}[b]{0.48\textwidth}
    \centering
    \includegraphics[width=1.0\linewidth]{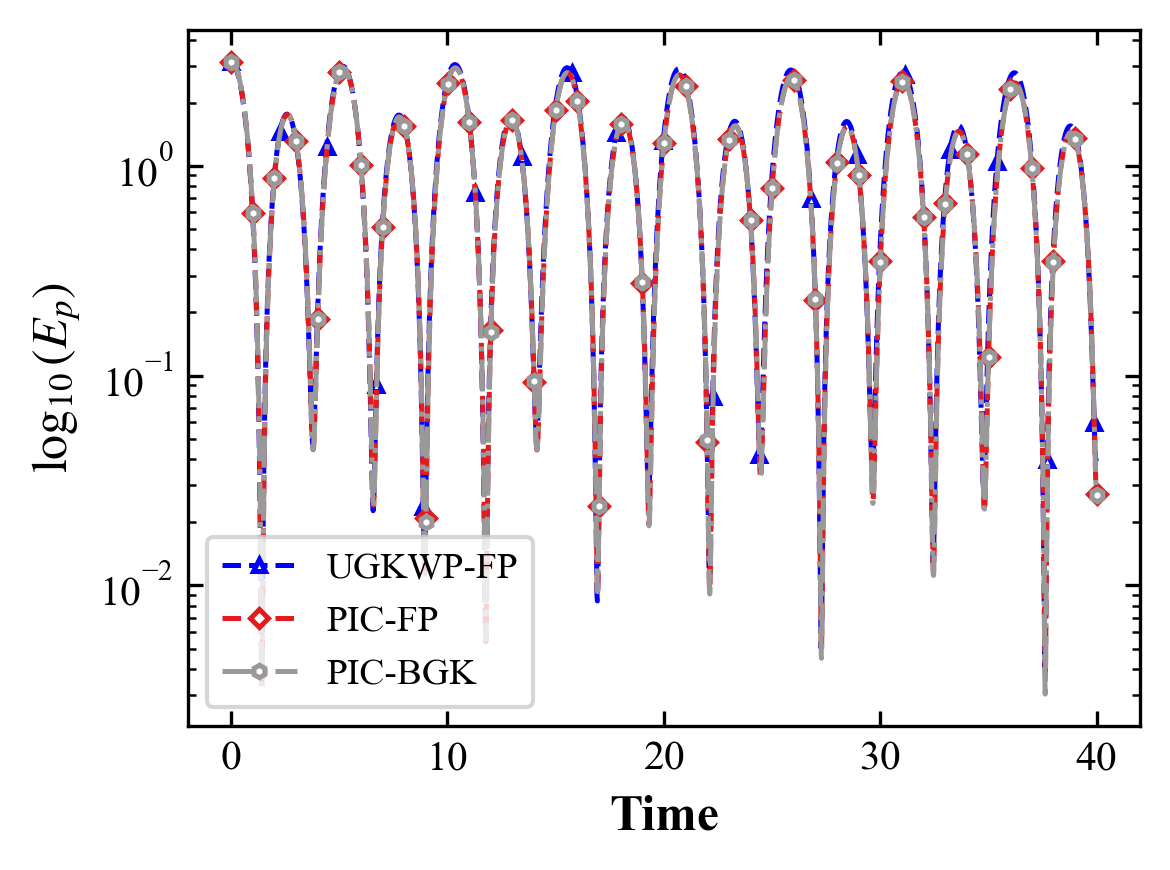}
    \caption{}
    \label{fig:nld-kn0.001-a}
    \end{subfigure}
    \hfill
    \begin{subfigure}[b]{0.48\textwidth}
    \centering
    \includegraphics[width=1.0\linewidth]{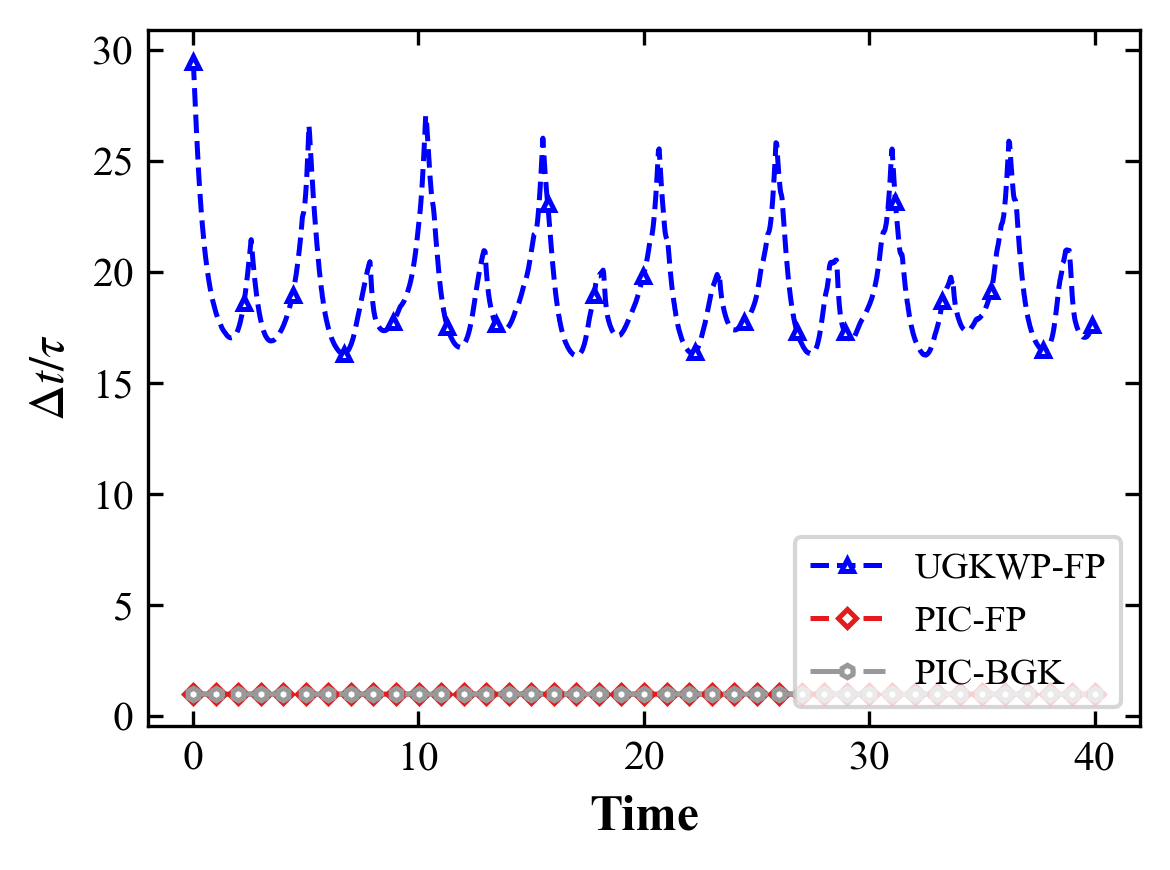}
    \caption{}
    \label{fig:nld-kn0.001-b}
    \end{subfigure}
    \caption{Temporal evolution of (a) the electric field energy and (b) the normalized time step $\Delta t/\tau$ for nonlinear Landau damping in the near-continuum regime at $\text{Kn} = 0.001$. Results are compared among the UGKWP-FP, PIC-FP, and PIC-BGK methods.}
    \label{fig:nld-kn0.001}
\end{figure}

\subsection{Bump-on-tail instability}

We further investigate the difference of the FP and BGK collision operators on the bump-on-tail instability case. This benchmark is particularly sensitive to numerical velocity-space diffusion, which can smooth the positive slope of the distribution near the resonant phase velocity and thereby weaken the instability \cite{Drummond1962}. The initial phase-space distribution function consists of a spatially perturbed superposition of a stationary bulk Maxwellian and a drifting beam Maxwellian:
\begin{equation}
f(x, \boldsymbol{u}, 0) = \rho_0 \left[1 + \alpha \cos(k x)\right] \left( n_p f_{M,p}(\boldsymbol{u}) + n_b f_{M,b}(\boldsymbol{u}) \right),
\end{equation}
where $\rho_0=1$ is the background mass density, $\alpha=0.04$ is the perturbation amplitude, and $k=0.3$ is the wavenumber. The bulk plasma fraction is set to $n_p = 0.9$ with a thermal velocity of $u_{th,p} = 1.0$, while the secondary beam fraction is $n_b = 0.1$ with a reduced thermal velocity $u_{th,b} = 0.5$ and a macroscopic drift velocity $U_b = 4.5$ along the $x$-direction. The normalized 3D Maxwellian distributions for the bulk and the beam are respectively given by:
\begin{equation}
f_{M,p}(\boldsymbol{u}) = \frac{1}{(2\pi u_{th,p}^2)^{3/2}} \exp\left(-\frac{u_x^2 + u_y^2 + u_z^2}{2 u_{th,p}^2}\right),
\end{equation}
\begin{equation}
f_{M,b}(\boldsymbol{u}) = \frac{1}{(2\pi u_{th,b}^2)^{3/2}} \exp\left(-\frac{(u_x - U_b)^2 + u_y^2 + u_z^2}{2 u_{th,b}^2}\right).
\end{equation}
The particle number is initialized as 1000 in each cell and will adaptively change according to the local Knudsen number. The CFL number is taken as 0.5 for UGKWP-FP.

\begin{figure}
    \centering
    \begin{subfigure}[b]{0.48\textwidth}
    \centering
    \includegraphics[width=1.0\linewidth]{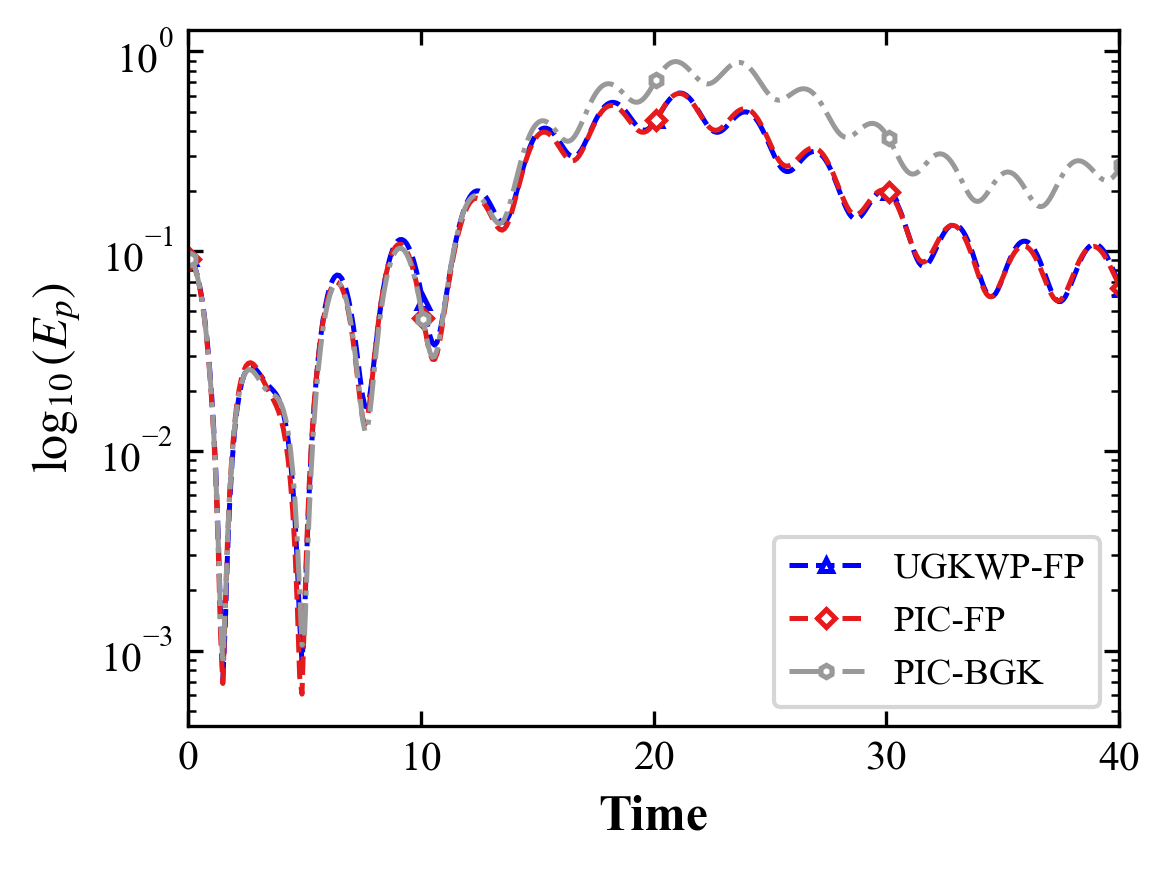}
    \caption{}
    \label{fig:bti-kn100-E}
    \end{subfigure}
        \begin{subfigure}[b]{0.48\textwidth}
    \centering
    \includegraphics[width=1.0\linewidth]{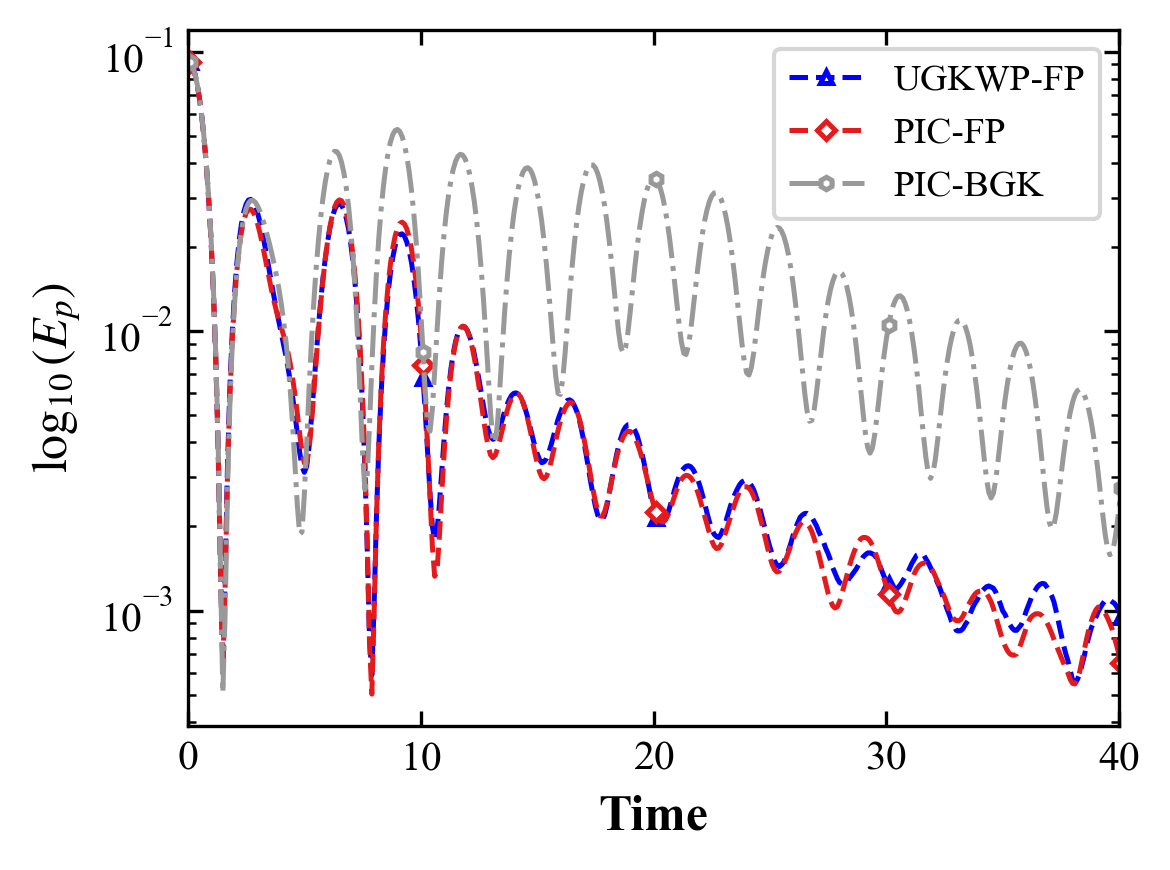}
    \caption{}
    \label{fig:bti-kn10-E}
    \end{subfigure}
    \caption{Temporal evolution of the electric field energy for the bump-on-tail instability in the rarefied regime at (a) $\text{Kn}=100$ and (b) $\text{Kn}=10$. Results are compared among the UGKWP-FP, PIC-FP, and PIC-BGK methods.}
\end{figure}

\begin{figure}
    \centering
    \begin{subfigure}[b]{0.325\textwidth}
    \centering
    \includegraphics[width=1.0\linewidth]{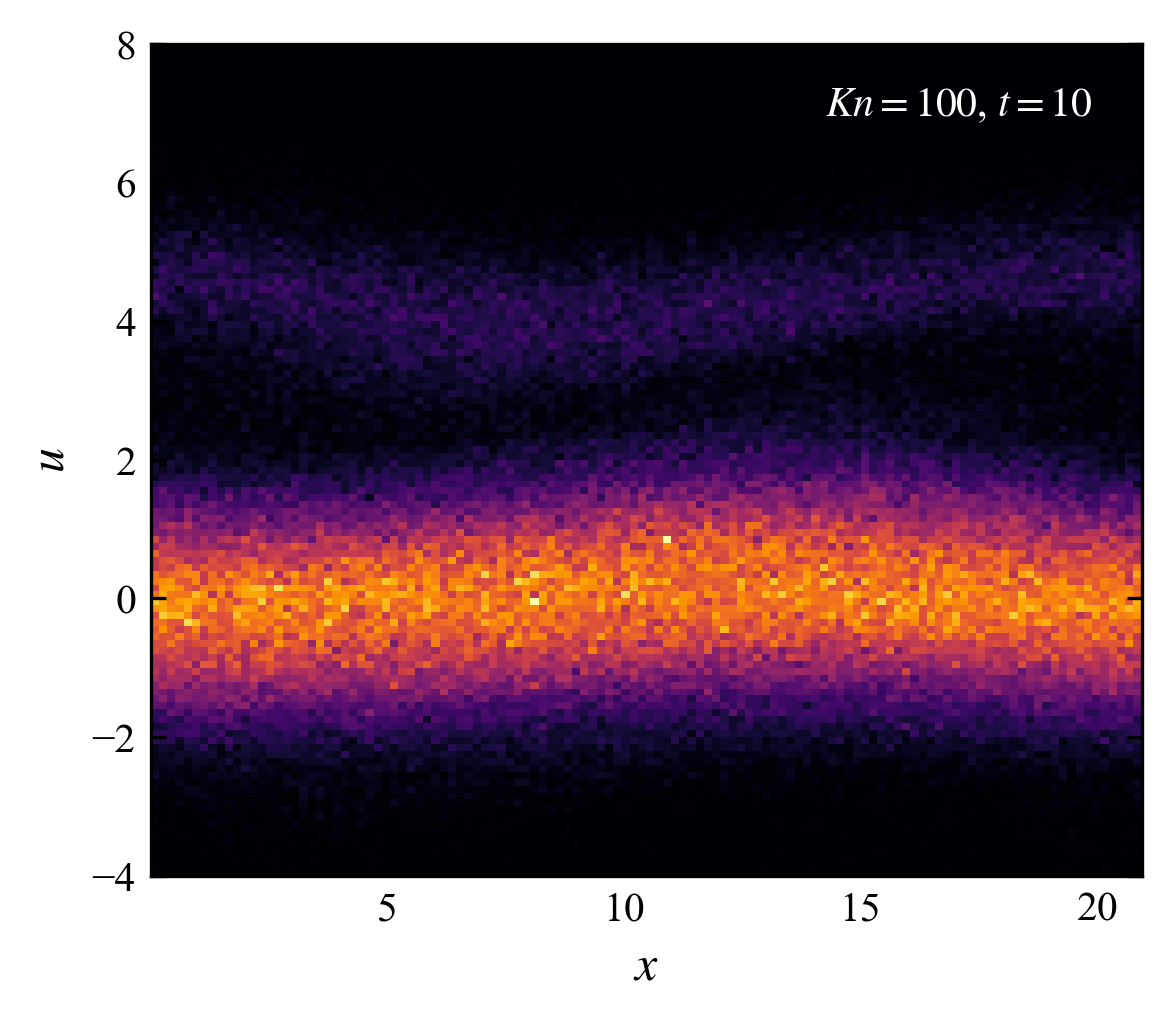}
    \end{subfigure}
    \hfill
    \begin{subfigure}[b]{0.325\textwidth}
    \centering
    \includegraphics[width=1.0\linewidth]{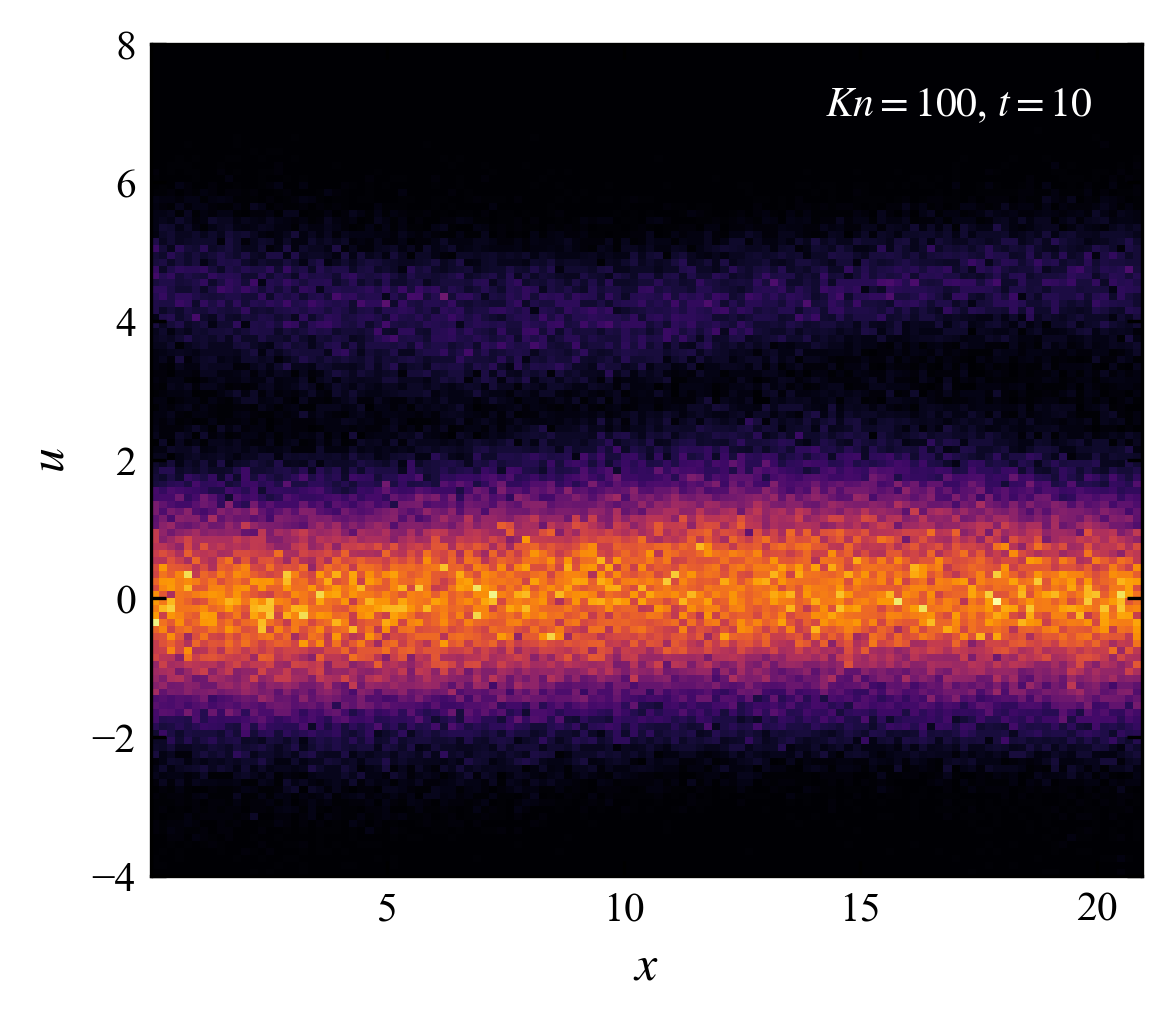}
    \end{subfigure}
    \hfill
    \begin{subfigure}[b]{0.325\textwidth}
    \centering
    \includegraphics[width=1.0\linewidth]{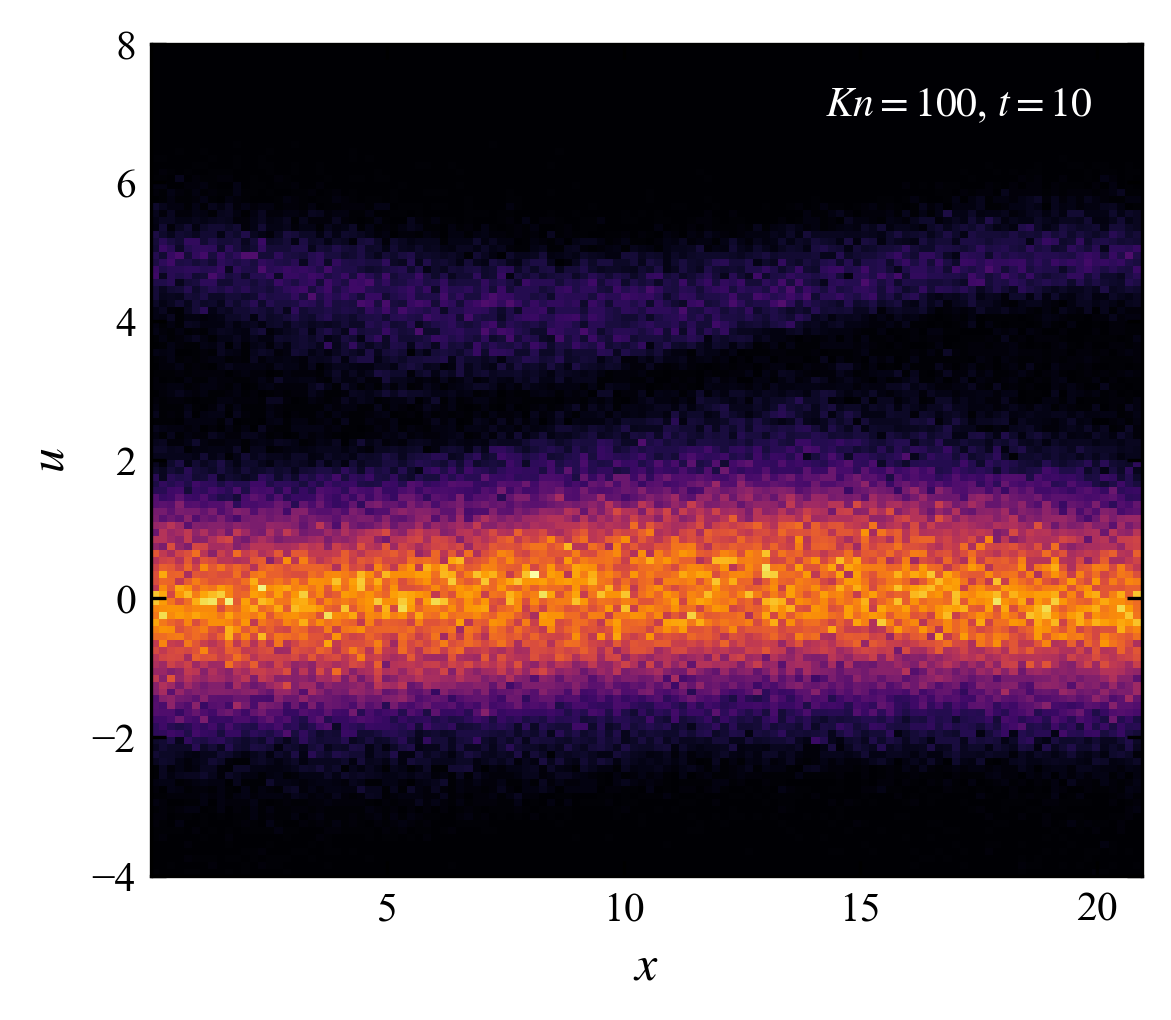}
    \end{subfigure}
    \vfill
    \centering
    \begin{subfigure}[b]{0.325\textwidth}
    \centering
    \includegraphics[width=1.0\linewidth]{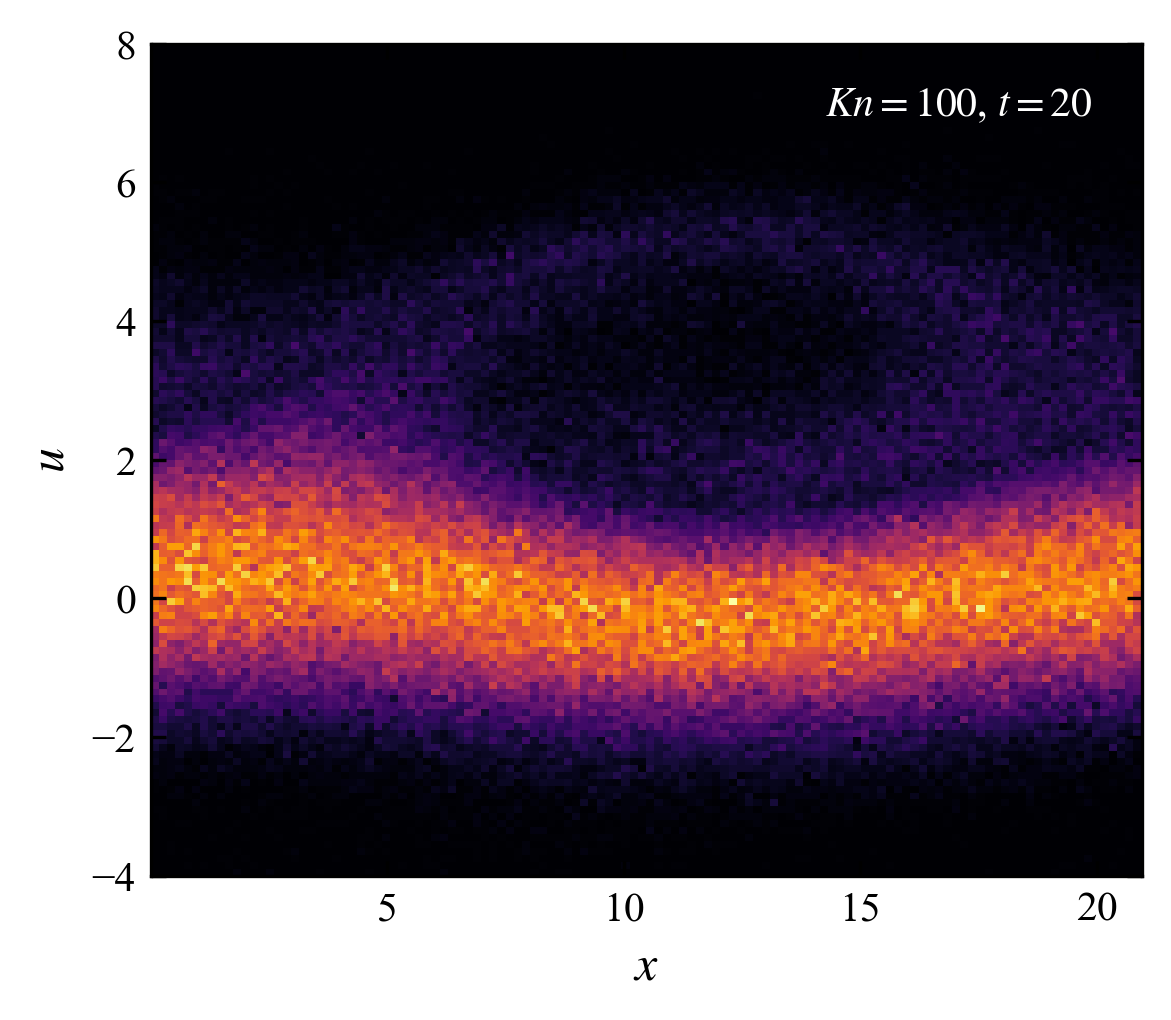}
    \end{subfigure}
    \hfill
    \begin{subfigure}[b]{0.325\textwidth}
    \centering
    \includegraphics[width=1.0\linewidth]{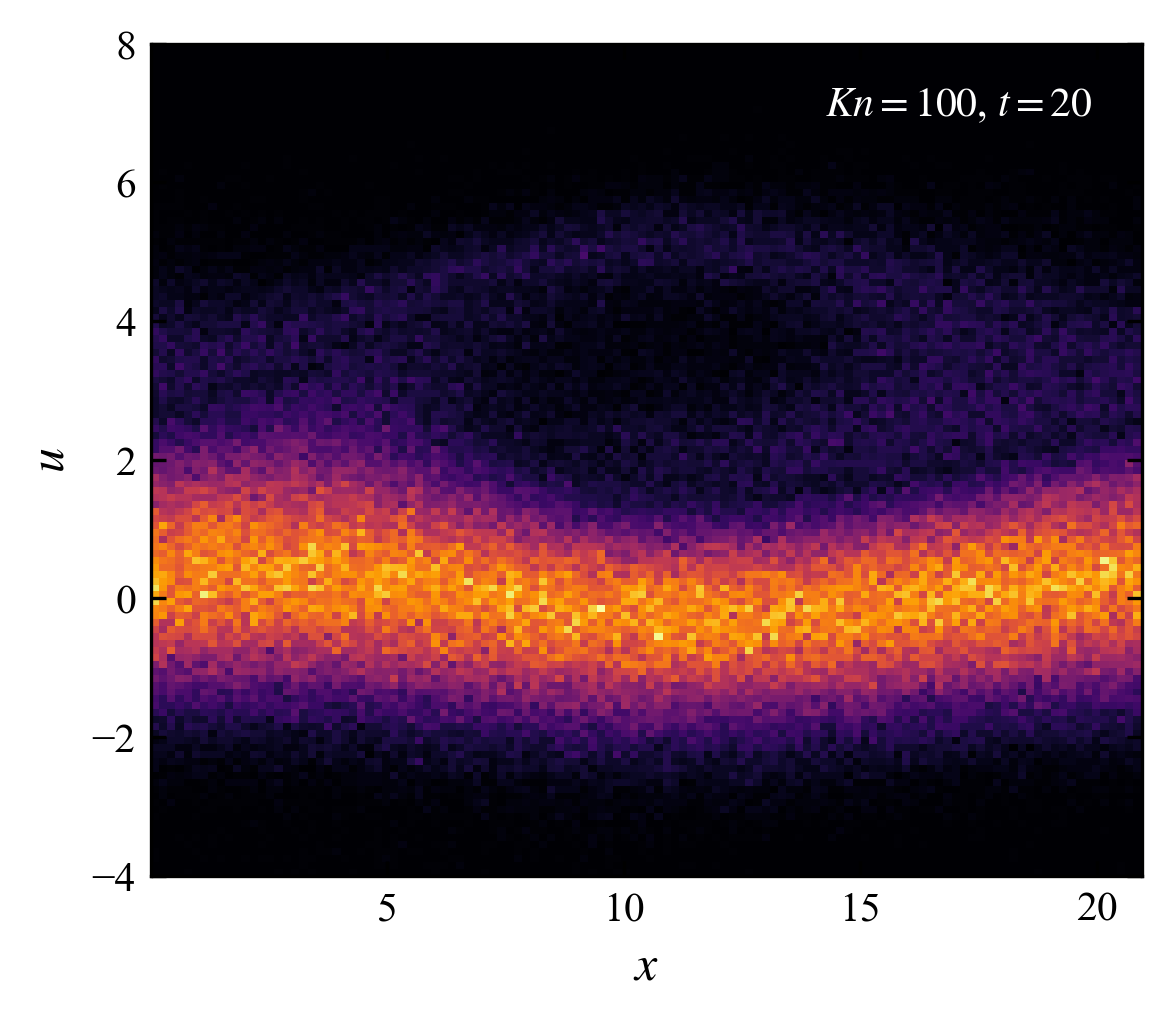}
    \end{subfigure}
    \hfill
    \begin{subfigure}[b]{0.325\textwidth}
    \centering
    \includegraphics[width=1.0\linewidth]{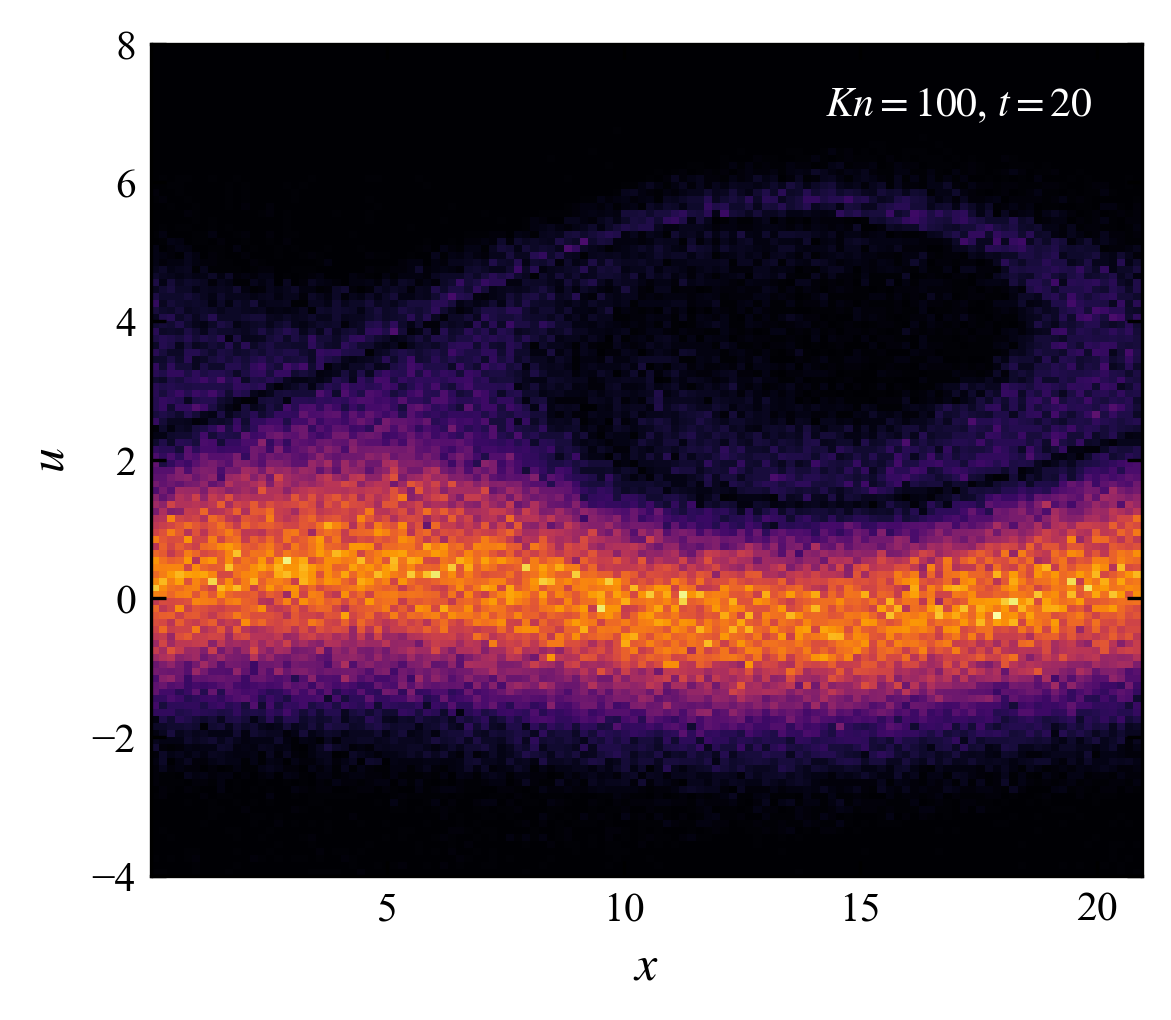}
    \end{subfigure}
    \vfill
    \centering
    \begin{subfigure}[b]{0.325\textwidth}
    \centering
    \includegraphics[width=1.0\linewidth]{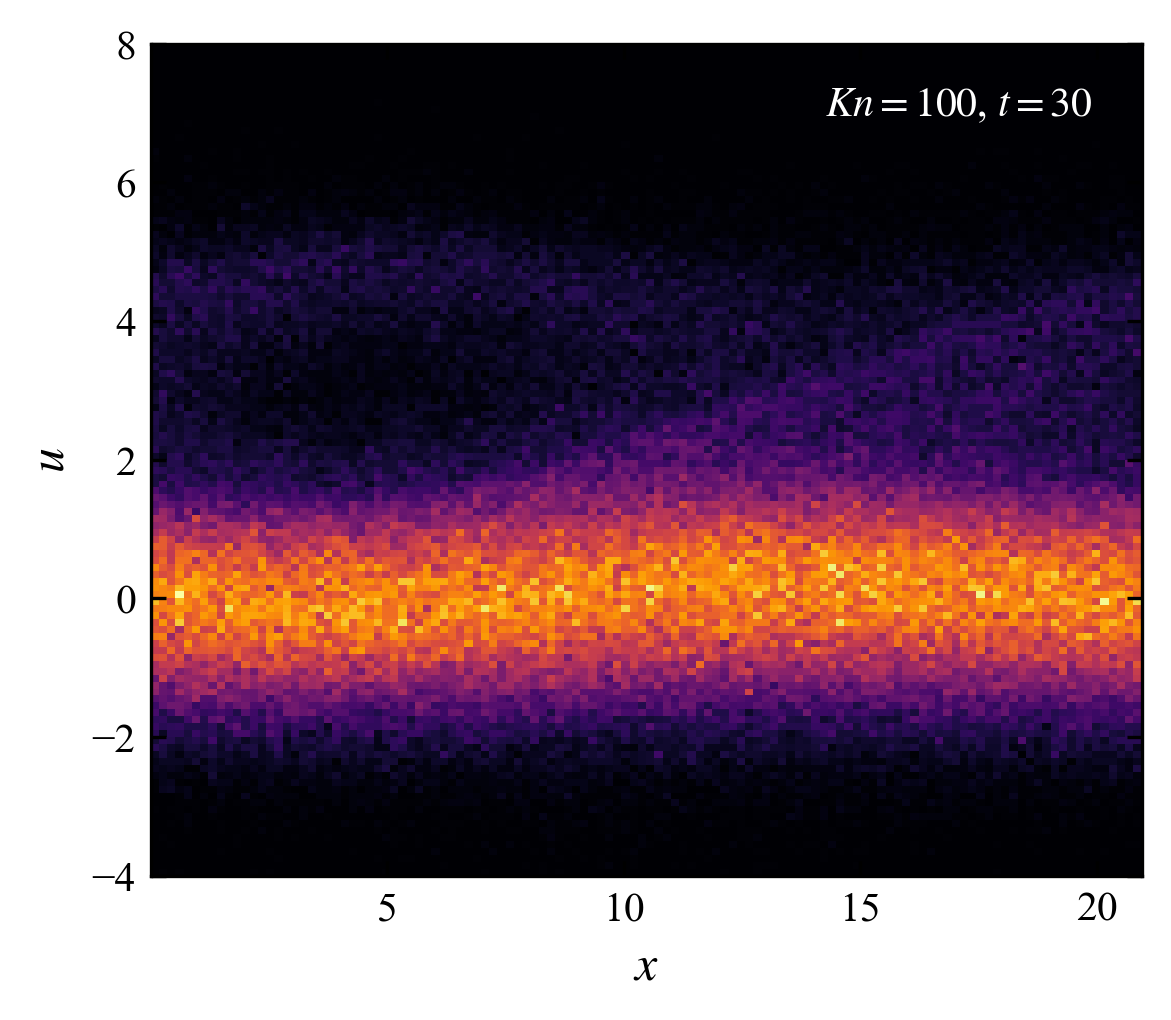}
    \end{subfigure}
    \hfill
    \begin{subfigure}[b]{0.325\textwidth}
    \centering
    \includegraphics[width=1.0\linewidth]{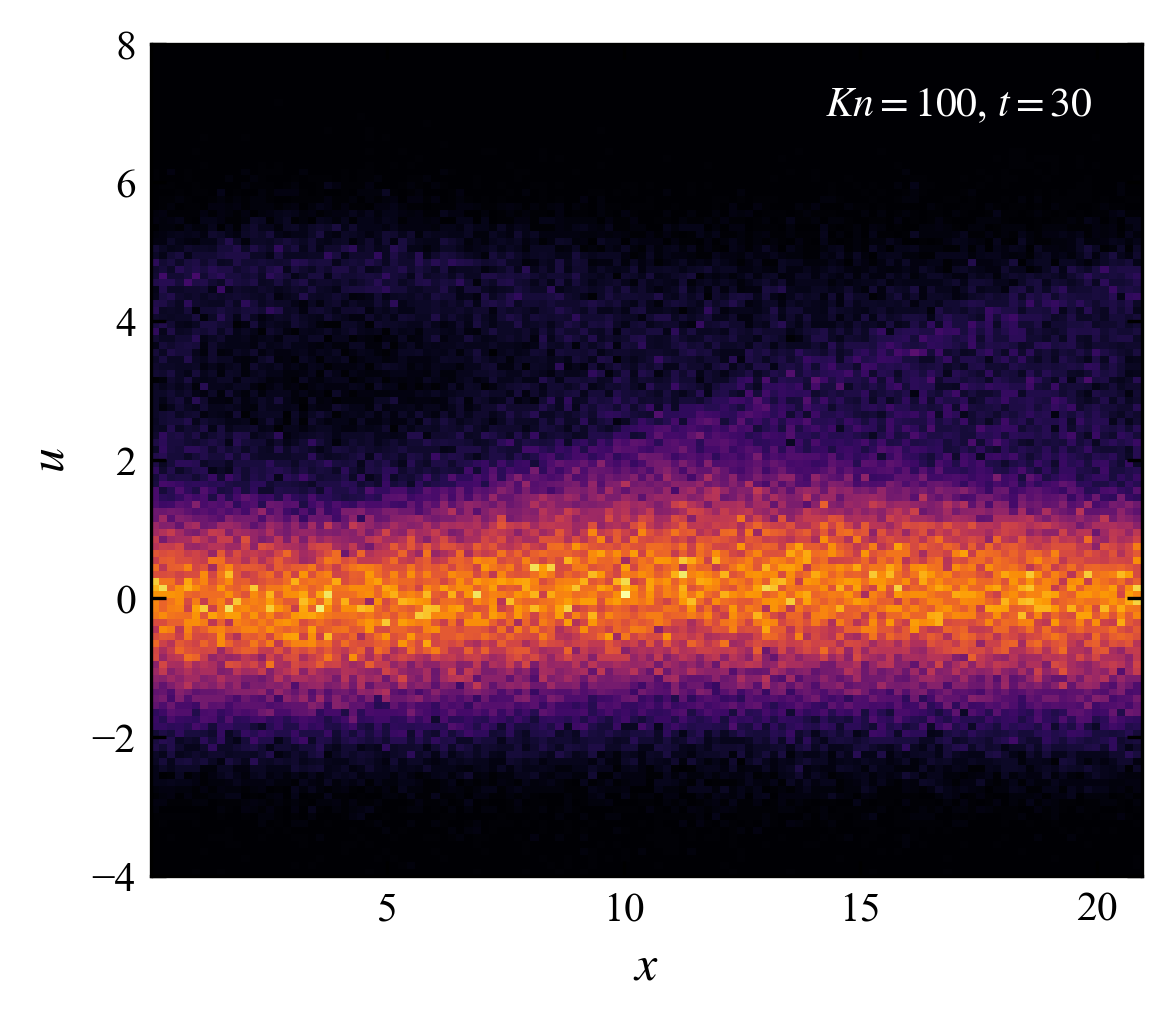}
    \end{subfigure}
    \hfill
    \begin{subfigure}[b]{0.325\textwidth}
    \centering
    \includegraphics[width=1.0\linewidth]{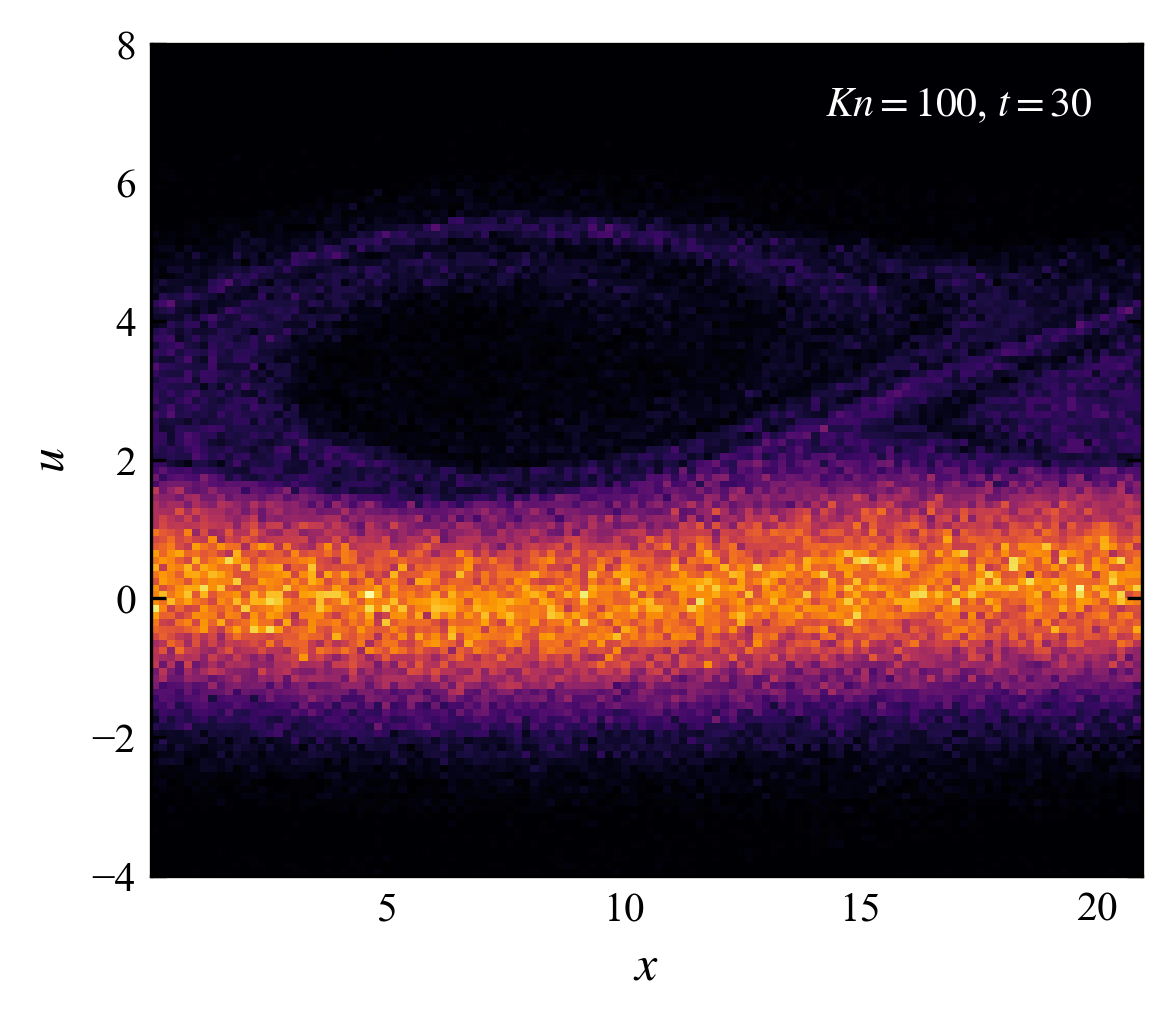}
    \end{subfigure}
    \vfill
    \centering
    \begin{subfigure}[b]{0.325\textwidth}
    \centering
    \includegraphics[width=1.0\linewidth]{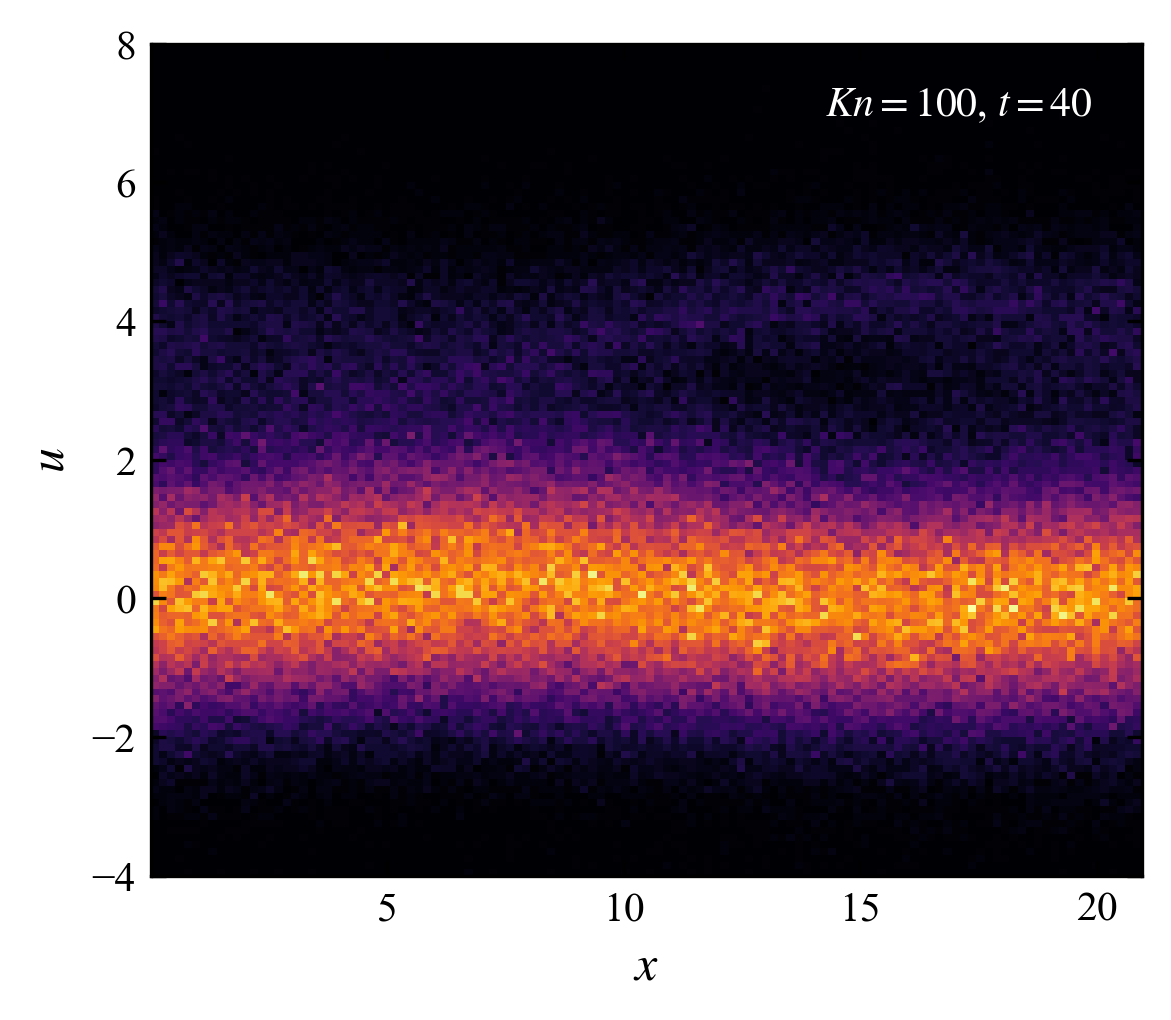}
    \end{subfigure}
    \hfill
    \begin{subfigure}[b]{0.325\textwidth}
    \centering
    \includegraphics[width=1.0\linewidth]{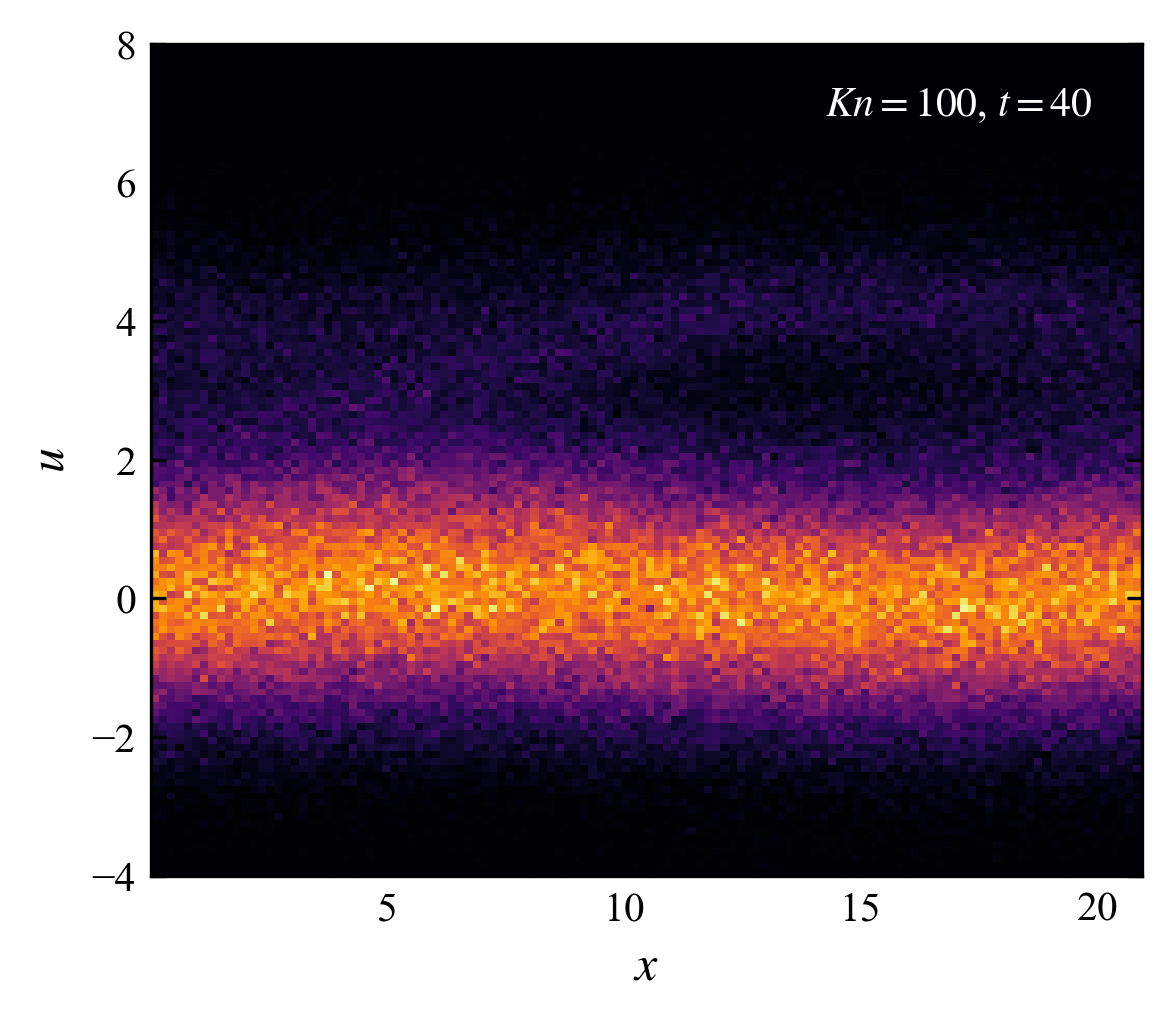}
    \end{subfigure}
    \hfill
    \begin{subfigure}[b]{0.325\textwidth}
    \centering
    \includegraphics[width=1.0\linewidth]{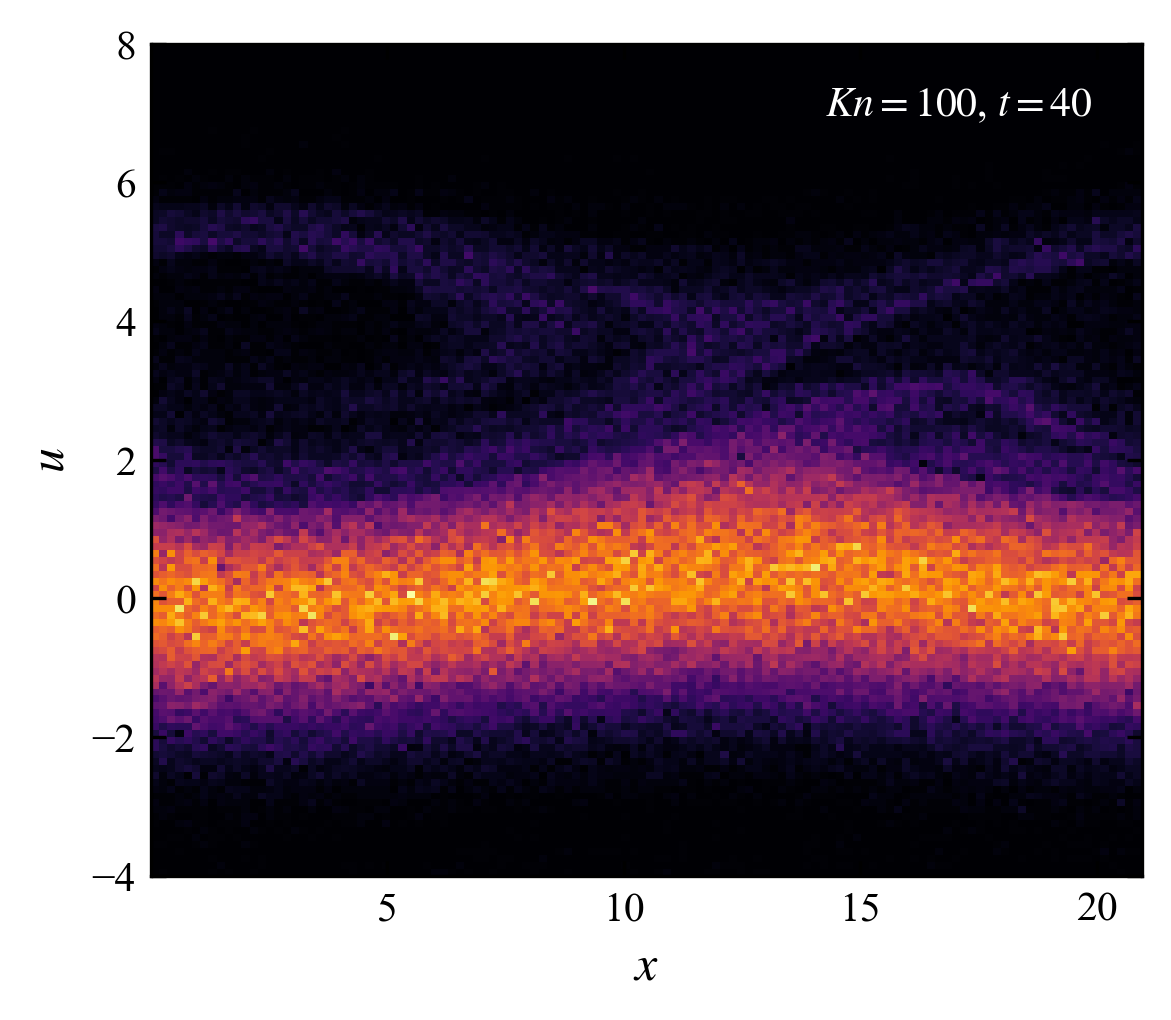}
    \end{subfigure}
	\caption{Snapshots of the phase-space distribution function for the bump-on-tail instability in the highly rarefied regime ($\text{Kn}=100$). The rows, from top to bottom, correspond to different simulation times at $t=10$, $20$, $30$, and $40$. The columns, from left to right, display the results obtained using the PIC-FP, UGKWP-FP, and PIC-BGK methods, respectively.}
    \label{fig:bti-phase-kn100}
\end{figure}

The temporal evolution of the electric field energy and the corresponding phase-space diagrams at $\text{Kn}=100$ are presented in Fig.~\ref{fig:bti-kn100-E} and Fig.~\ref{fig:bti-phase-kn100}, respectively. As demonstrated in both the energy profiles and the microscopic phase-space structures, the proposed UGKWP-FP scheme exhibits good agreement with the reference PIC-FP solution. 

Meanwhile, noticeable discrepancies emerge between the FP and BGK collision models during the nonlinear stage of the instability (after $t \approx 15$). As observed in the rightmost column of Fig.~\ref{fig:bti-phase-kn100}, the PIC-BGK method preserves sharp, high-gradient velocity structures and distinct phase-space vortices. In contrast, the FP operator, governed by continuous velocity-space diffusion, efficiently targets and smooths out these localized steep gradients, causing the fine structures to damp out more rapidly. This fundamental difference in velocity-space dissipation leads to the distinction in the electric energy evolution. 

Furthermore, unlike the purely collisionless limit where nonlinear particle trapping typically sustains a relatively stable energy plateau after saturation, the electric field energy in this case exhibits a gradual decay following the initial saturation peaks. At $\text{Kn}=100$, the macroscopic relaxation time is on the order of $\tau \approx 100$, corresponding to a very low collision frequency of $\nu \approx 0.01$. Despite this sparsity of collisions, their cumulative effect over time gradually degrades the particle trapping mechanism. Because the FP operator diffuses localized velocity fragments more aggressively than the BGK relaxation process, it disrupts the trapped particle orbits within the phase-space vortices more effectively. Consequently, the FP model yields a lower saturation amplitude compared to the BGK model.

\begin{figure}
    \centering
    \begin{subfigure}[b]{0.325\textwidth}
    \centering
    \includegraphics[width=1.0\linewidth]{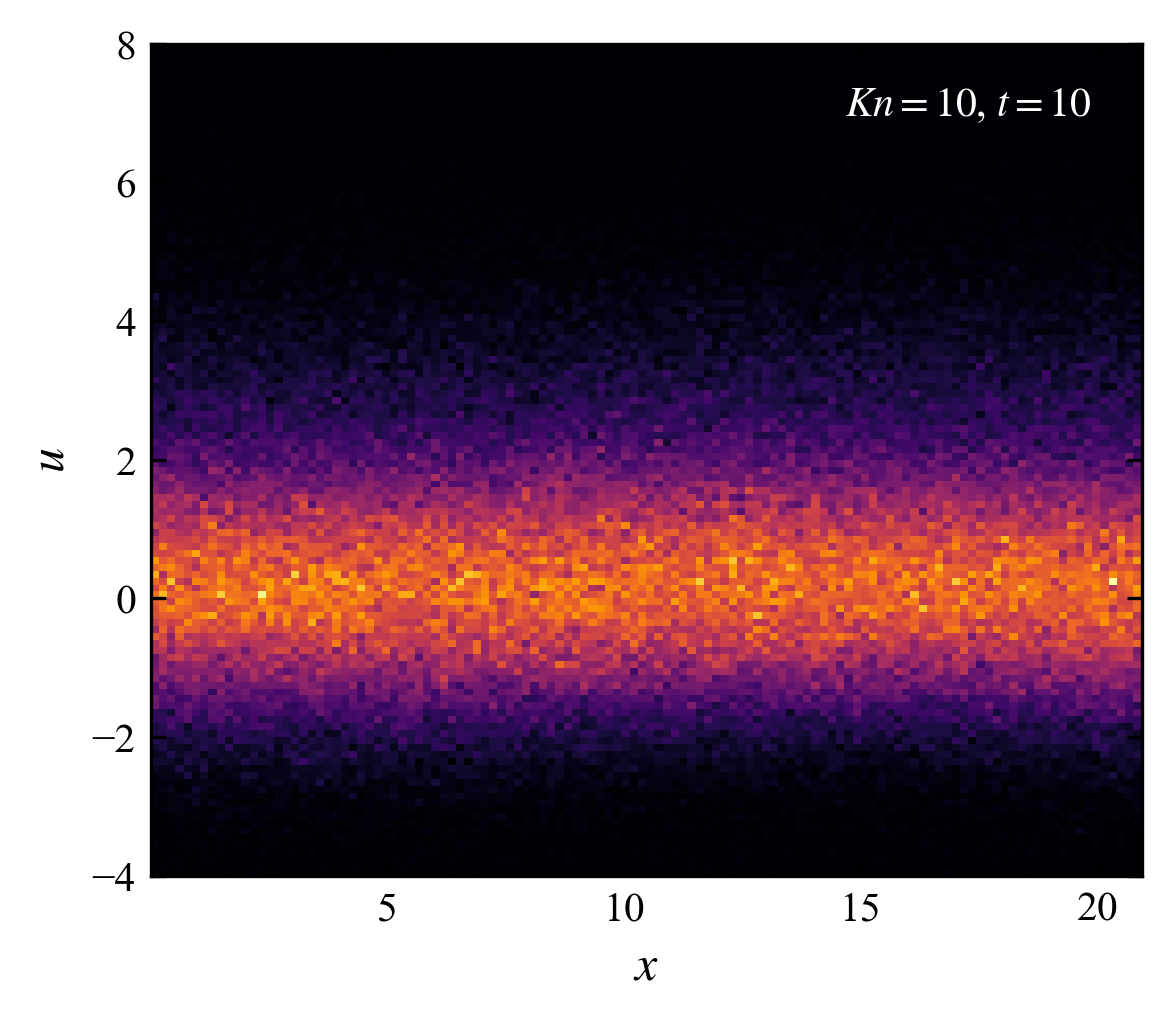}
    \end{subfigure}
    \hfill
    \begin{subfigure}[b]{0.325\textwidth}
    \centering
    \includegraphics[width=1.0\linewidth]{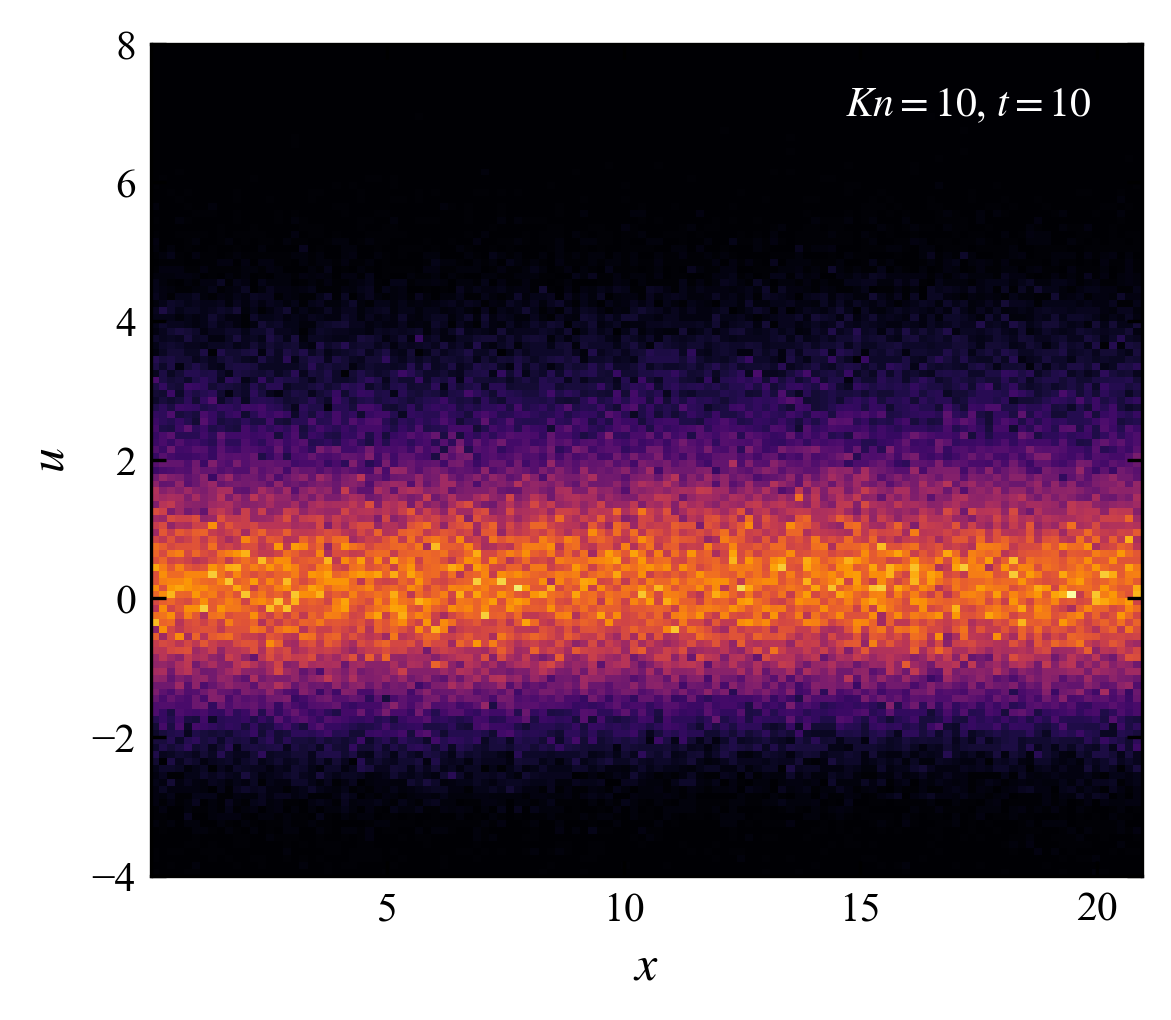}
    \end{subfigure}
    \hfill
    \begin{subfigure}[b]{0.325\textwidth}
    \centering
    \includegraphics[width=1.0\linewidth]{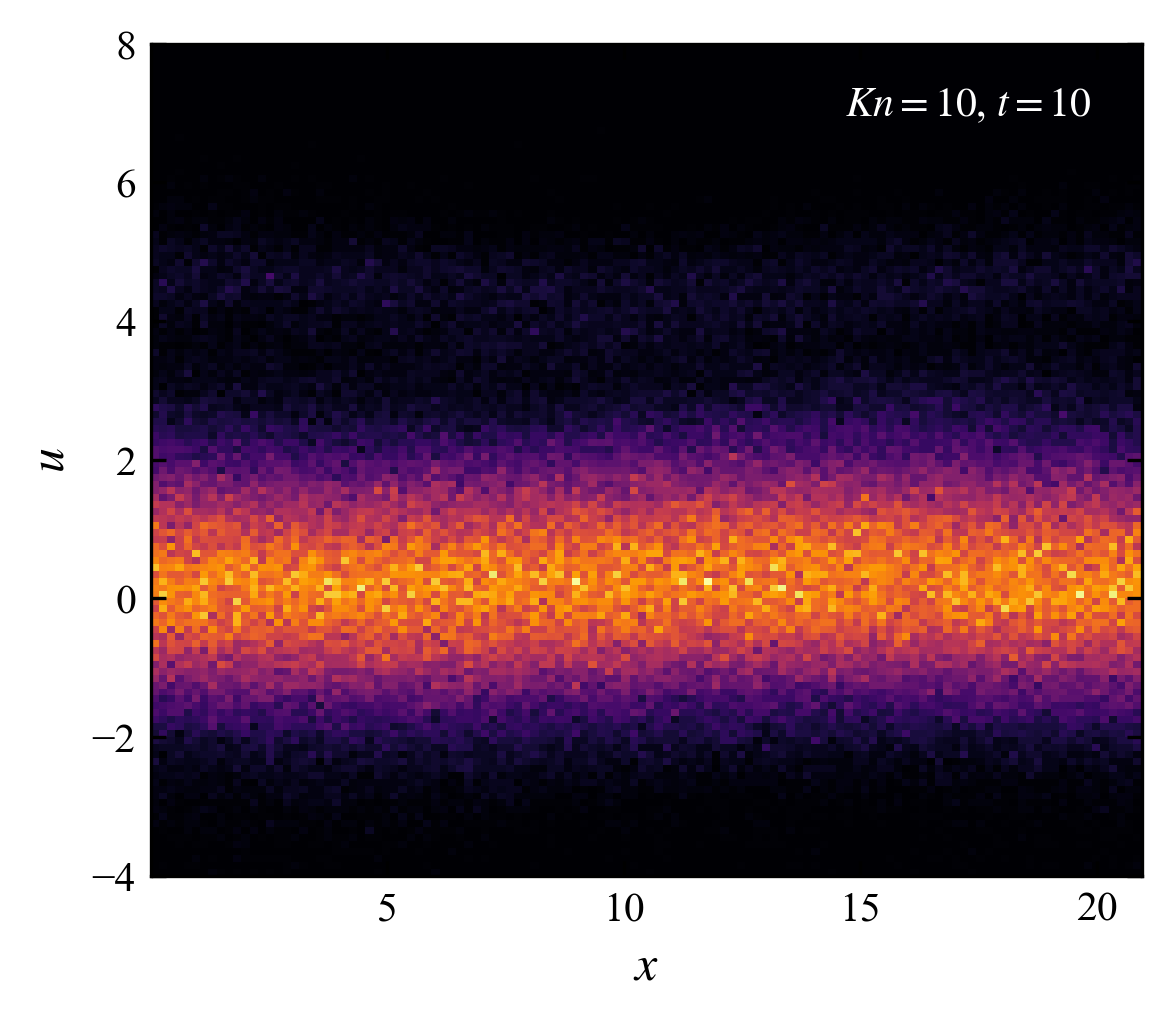}
    \end{subfigure}
    \vfill
    \centering
    \begin{subfigure}[b]{0.325\textwidth}
    \centering
    \includegraphics[width=1.0\linewidth]{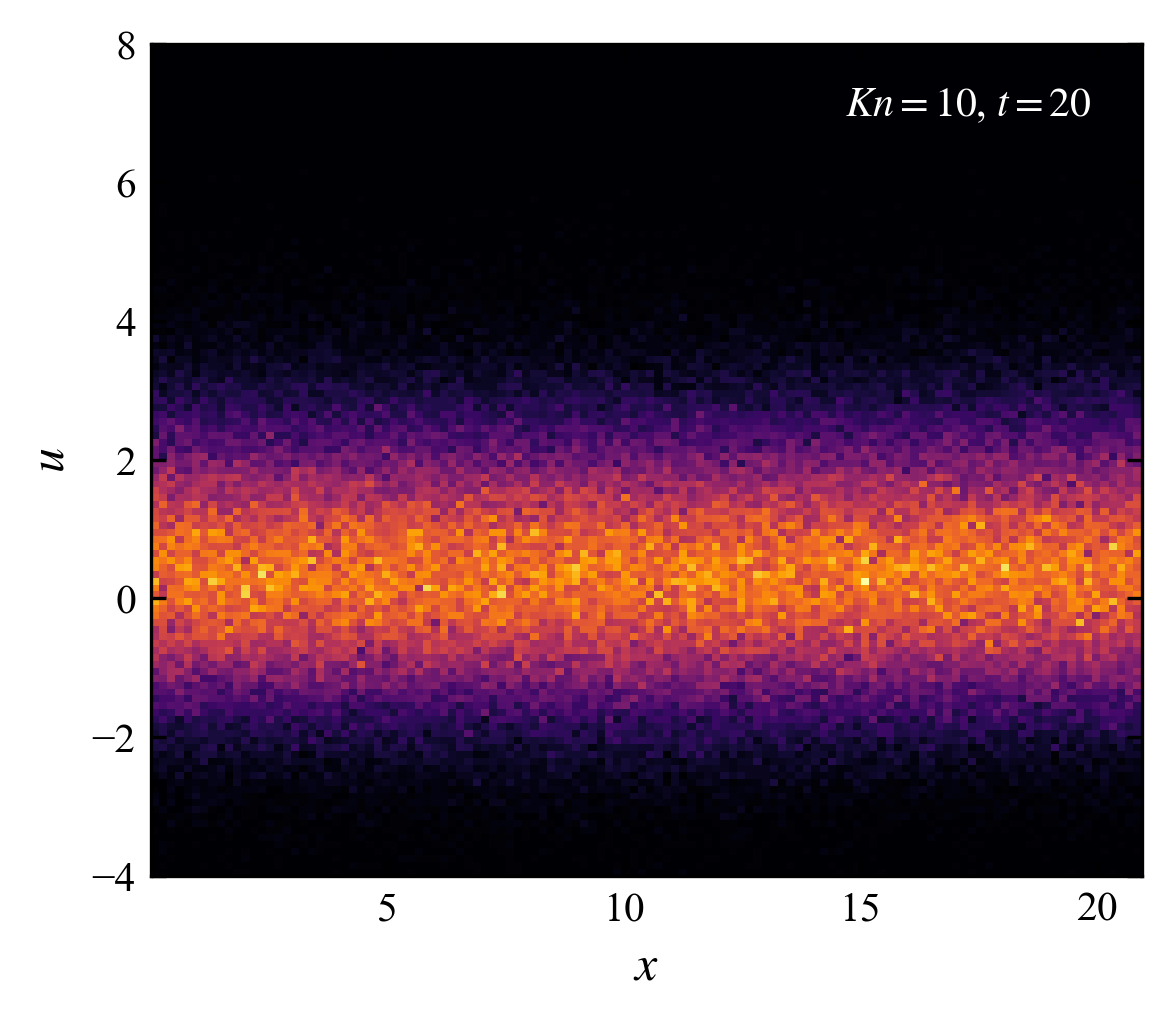}
    \end{subfigure}
    \hfill
    \begin{subfigure}[b]{0.325\textwidth}
    \centering
    \includegraphics[width=1.0\linewidth]{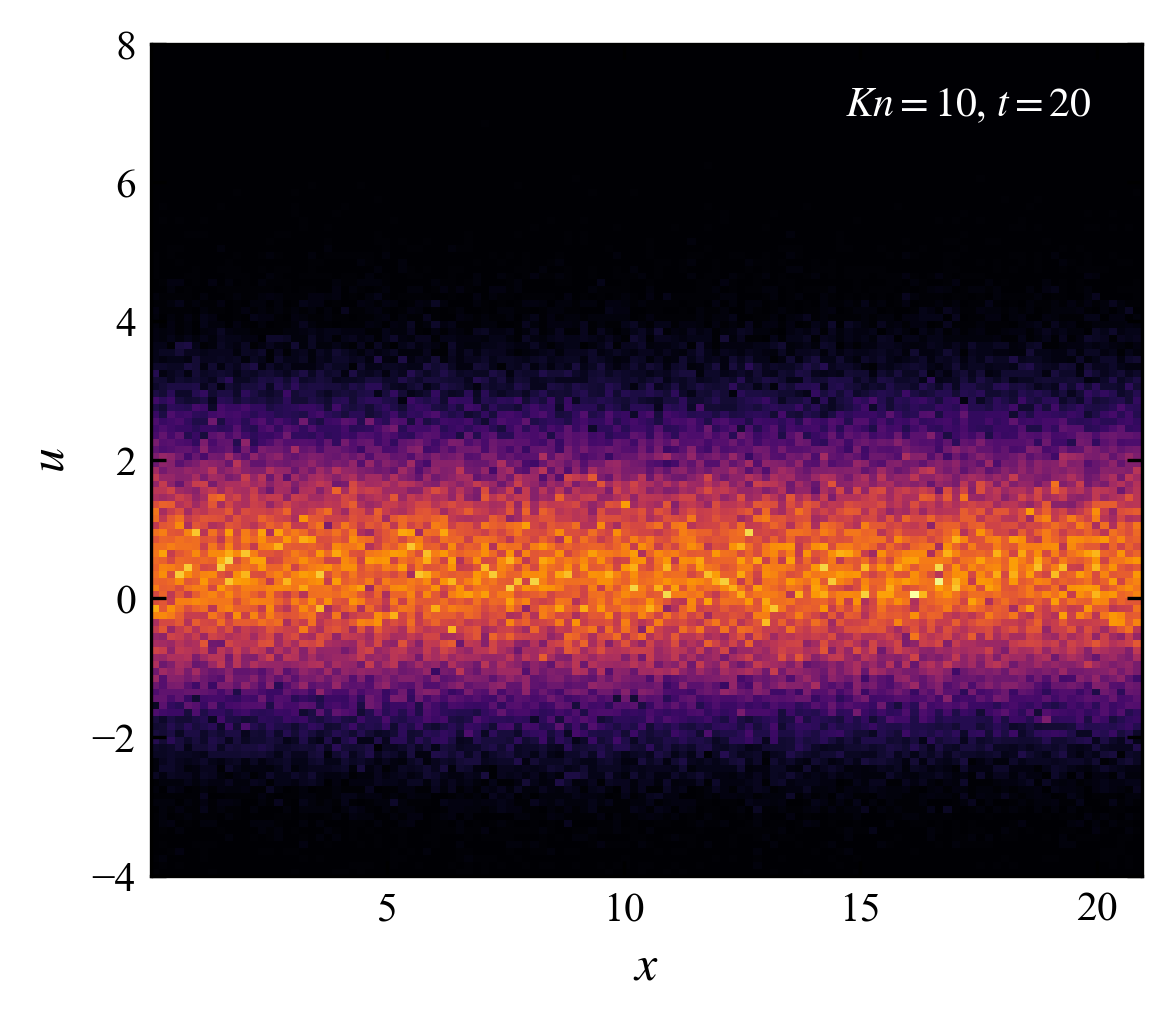}
    \end{subfigure}
    \hfill
    \begin{subfigure}[b]{0.325\textwidth}
    \centering
    \includegraphics[width=1.0\linewidth]{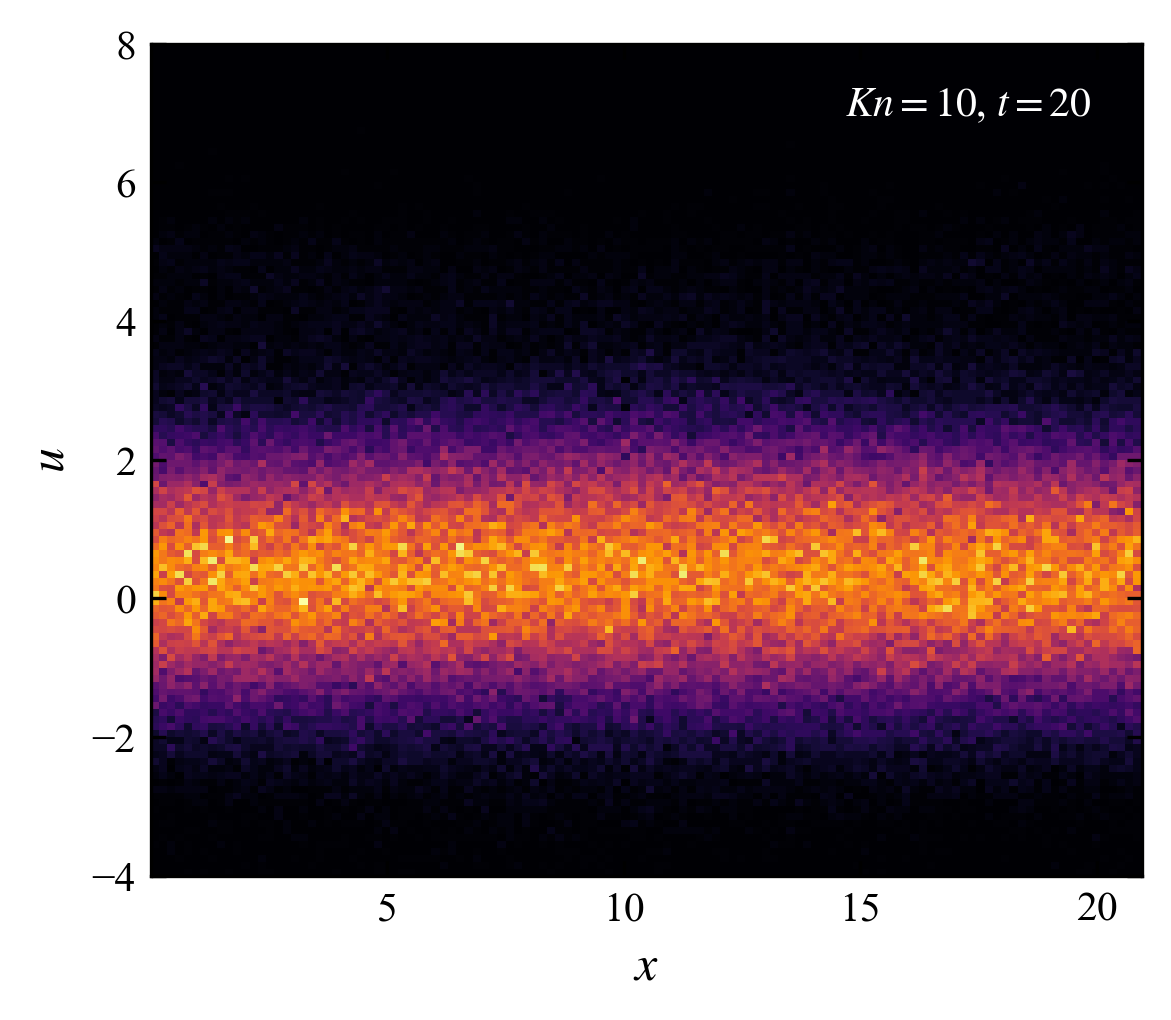}
    \end{subfigure}
	\caption{Snapshots of the phase-space distribution function for the bump-on-tail instability in the highly rarefied regime ($\text{Kn}=10$). The rows, from top to bottom, correspond to different simulation times at $t=10$ and $20$. The columns, from left to right, display the results obtained using the PIC-FP, UGKWP-FP, and PIC-BGK methods, respectively.}
    \label{fig:bti-phase-kn10}
\end{figure}

At $\text{Kn}=10$, the disparity between the FP and BGK collision models emerges at an earlier stage ($t \approx 8$), as illustrated in Fig.~\ref{fig:bti-kn10-E}. Throughout the entire evolution, the proposed UGKWP-FP method maintains close agreement with the reference PIC-FP solution, whereas the PIC-BGK solution exhibits a markedly higher electric field energy. This divergence is directly reflected in the phase-space distributions at $t=10$ (Fig.~\ref{fig:bti-phase-kn10}): the secondary bump structure remains partially preserved in the PIC-BGK model, providing a weak yet persistent drive for the instability. In contrast, the continuous velocity-space diffusion inherent to the FP operator rapidly smooths out this localized non-equilibrium feature. Beyond $t \approx 10$, the bump is progressively eliminated in both models as the velocity distributions relax toward a near-equilibrium state. Nevertheless, the PIC-BGK solution maintains a higher overall energy level into the late stage, this occurs because the prolonged survival of the velocity bump during the preceding phase sustains the field growth, leaving a larger baseline energy amplitude at the onset of the damping-dominated regime.

\begin{figure}
    \centering
    \begin{subfigure}[b]{0.48\textwidth}
    \centering
    \includegraphics[width=1.0\linewidth]{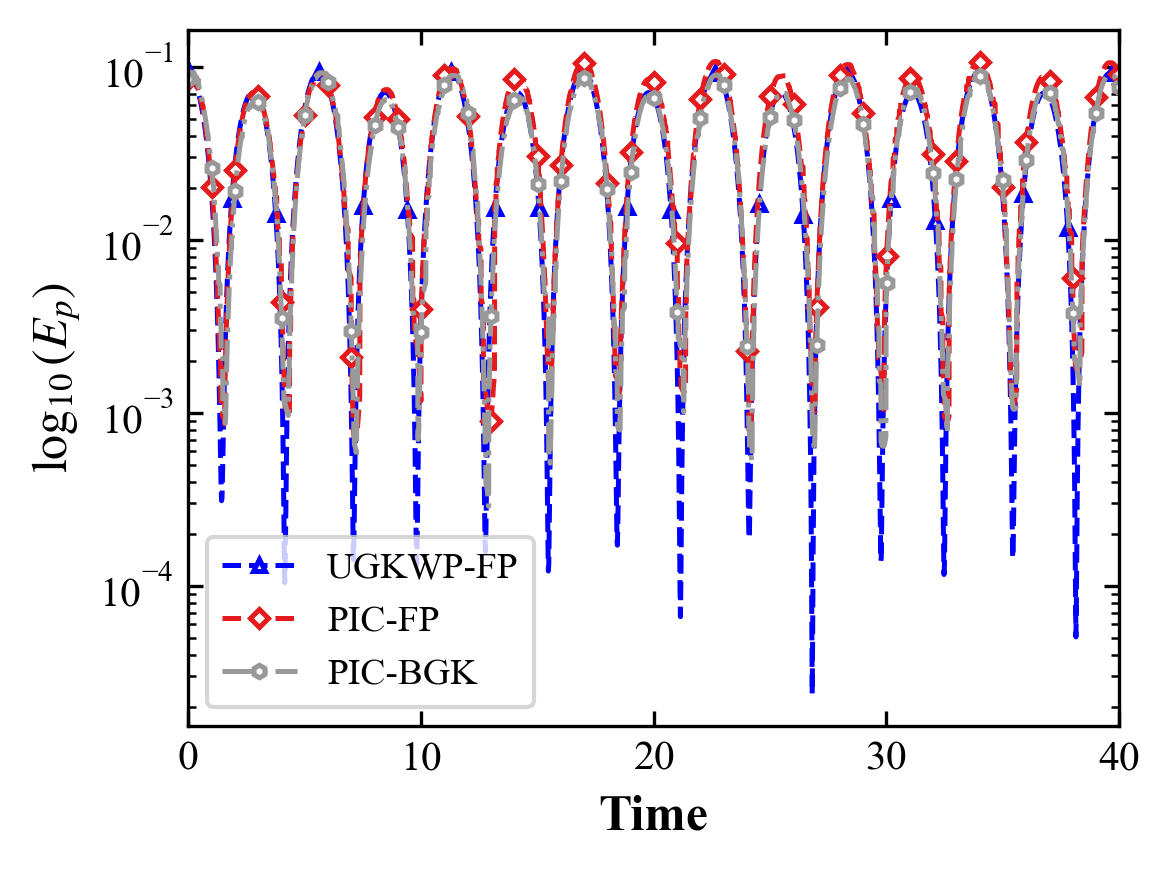}
    \caption{}
    \label{fig:bti-kn0.001-a}
    \end{subfigure}
    \hfill
    \begin{subfigure}[b]{0.48\textwidth}
    \centering
    \includegraphics[width=1.0\linewidth]{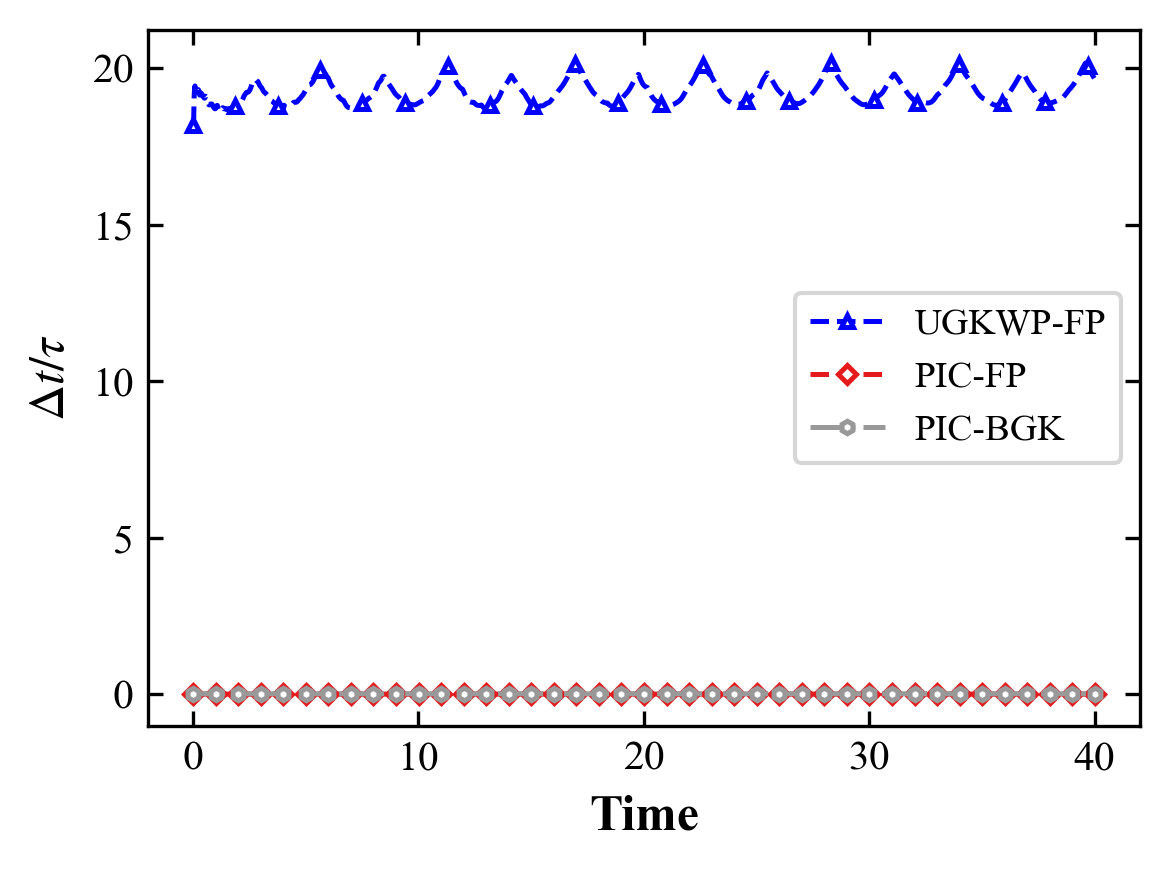}
    \caption{}
    \label{fig:bti-kn0.001-b}
    \end{subfigure}
    \caption{Temporal evolution of (a) the electric field energy and (b) the normalized time step $\Delta t/\tau$ for bump-on-tail instability in the near-continuum regime at $\text{Kn} = 0.001$. Results are compared among the UGKWP-FP, PIC-FP, and PIC-BGK methods.}
    \label{fig:bti-kn0.001}
\end{figure}

In the near-continuum regime $\text{Kn} = 0.001$, strong collisional relaxation
 rapidly damp the bump, completely suppressing the instability. As shown in Fig.~\ref{fig:bti-kn0.001-a}, the electric field energy no longer exhibits exponential growth or nonlinear particle trapping; instead, it transitions into oscillations around a stable amplitude, similar to the behavior observed in collisional Landau damping. The energy profile predicted by the UGKWP-FP scheme agrees well with the reference PIC-FP solution, confirming that the multiscale framework accurately captures the collision-dominated plasma dynamics.

Fig.~\ref{fig:bti-kn0.001-b} highlights the computational efficiency and asymptotic-preserving property of the UGKWP-FP method through the normalized time step, $\Delta t/\tau$. Conventional particle solvers (PIC-FP and PIC-BGK) are severely restricted by the stiff kinetic scale, requiring $\Delta t \le \tau \sim 0.001$ and grid resolutions fine enough to resolve the mean free path $\Delta x \le l_{mfp} \sim 0.001$, which leads to prohibitive computational costs. In contrast, UGKWP-FP operates stably with a time step up to twenty times larger than the collision time $\Delta t / \tau \approx 20$ on a coarse spatial grid ($\Delta x = 0.08 \gg l_{\text{mfp}}$), significantly reducing the computational burden while correctly recovering the continuum-limit physics.

\subsection{Sod shock tube}
\label{sec:sod}

To evaluate the asymptotic-preserving property and shock-capturing capability of the UGKWP-FP method across different flow regimes, the Sod shock tube problem is calculated. The electric field is not considered in this case. The spatial domain is set to $x \in [0, 1]$ with a spatial resolution of $\Delta x = 0.01$. In this case, $\omega=\omega_{ref} = 0.81$. The initial macroscopic state is initialized as:
\begin{equation}
(\rho, u, p) = \begin{cases} 
(1.0, \, 0, \, 1.0), & x \le 0.5, \\ 
(0.125, \, 0, \, 0.1), & x > 0.5. 
\end{cases}
\end{equation}
Three representative Knudsen numbers are simulated up to $t = 0.15$: $\text{Kn} = 0.1$, $\text{Kn} = 0.001$, and $\text{Kn} = 10^{-5}$. The reference solutions are provided by the UGKS based on the BGK model.

\begin{figure}
    \centering
    \begin{subfigure}[b]{0.32\textwidth}
    \centering
    \includegraphics[width=1.0\linewidth]{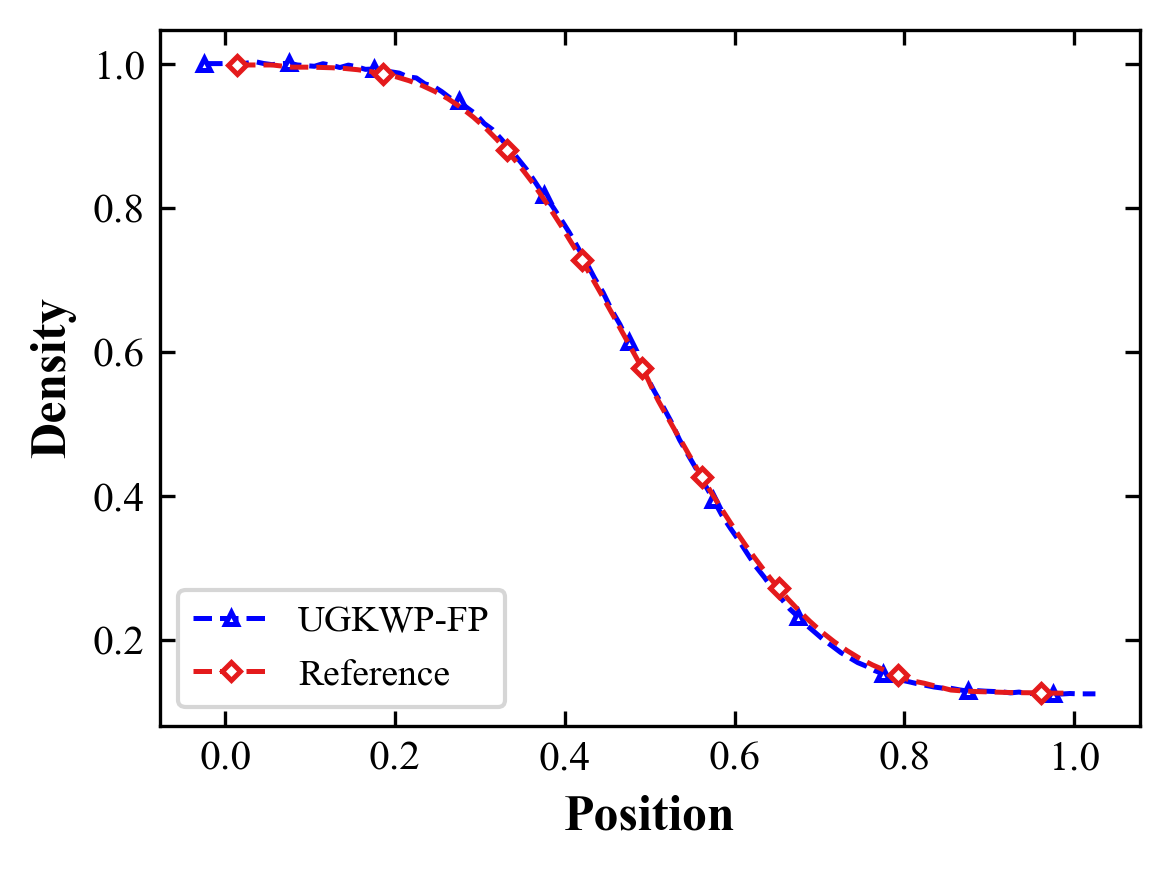}
    \caption{}
    \label{fig:sod-kn0.1-a}
    \end{subfigure}
    \hfill
    \begin{subfigure}[b]{0.32\textwidth}
    \centering
    \includegraphics[width=1.0\linewidth]{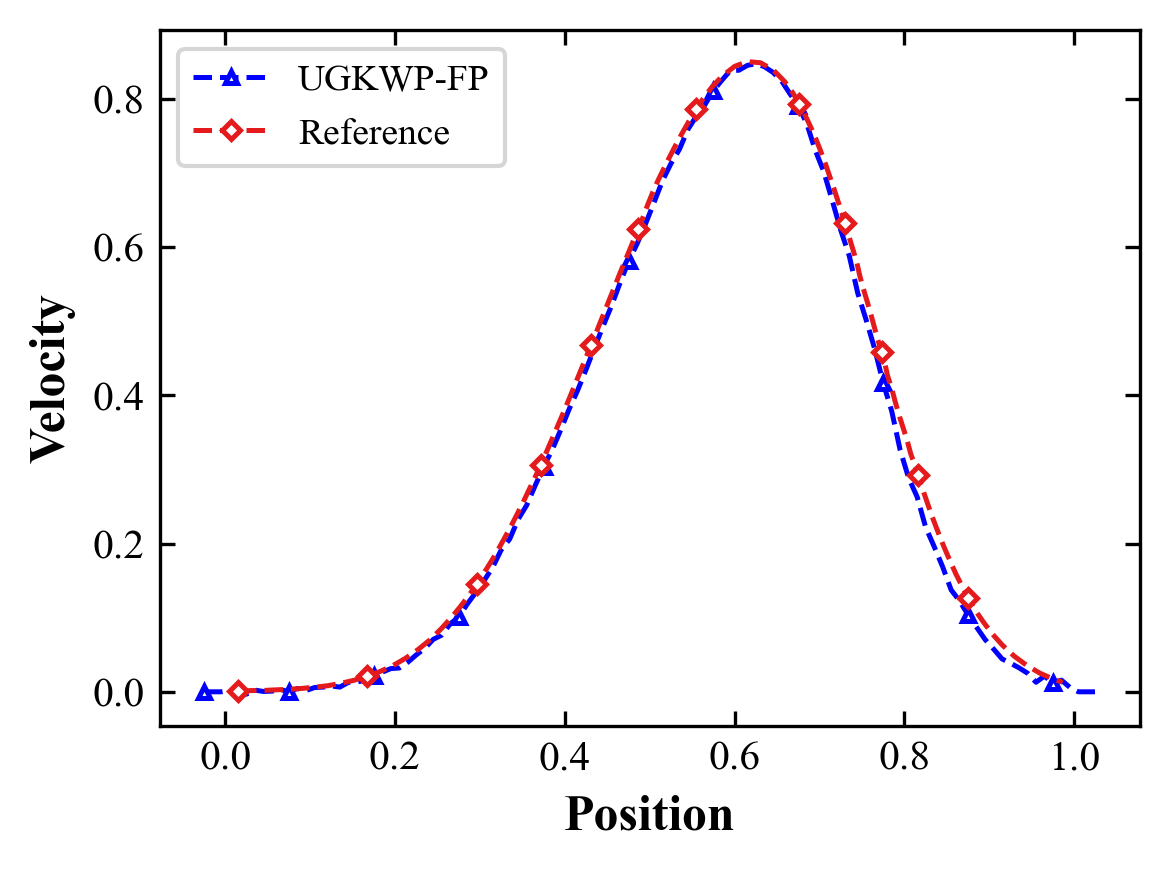}
    \caption{}
    \label{fig:sod-kn0.1-b}
    \end{subfigure}
    \hfill
    \begin{subfigure}[b]{0.32\textwidth}
    \centering
    \includegraphics[width=1.0\linewidth]{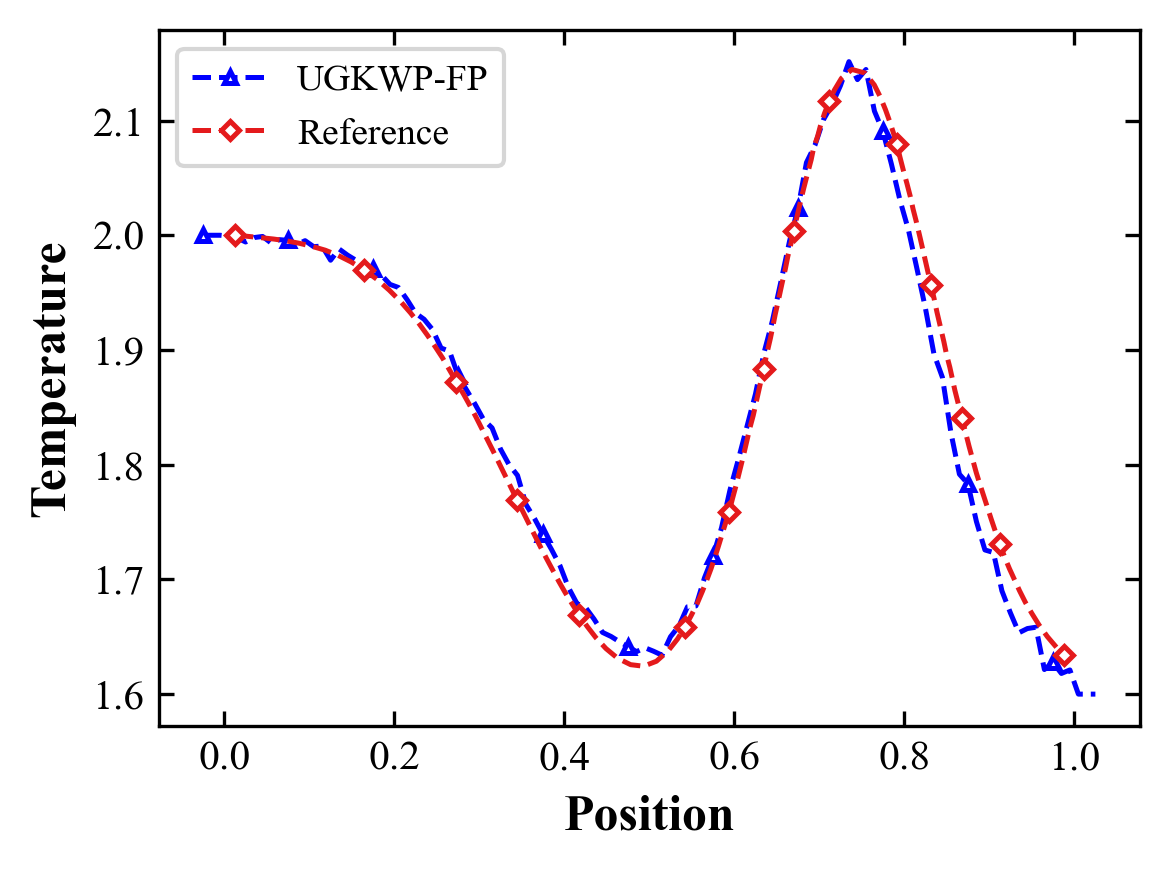}
    \caption{}
    \label{fig:sod-kn0.1-c}
    \end{subfigure}
    \caption{(a) Density, (b) velocity, (c) temperature profiles of Sod shock tube at t=0.15 with Kn=0.1. $\Delta x = 0.01$, $\Delta t = 0.005 \sim 0.01 \tau$. The reference solution is calculated by UGKS with BGK collision.}
    \label{fig:sod-kn0.1}
\end{figure}

\begin{figure}
    \centering
    \begin{subfigure}[b]{0.32\textwidth}
    \centering
    \includegraphics[width=1.0\linewidth]{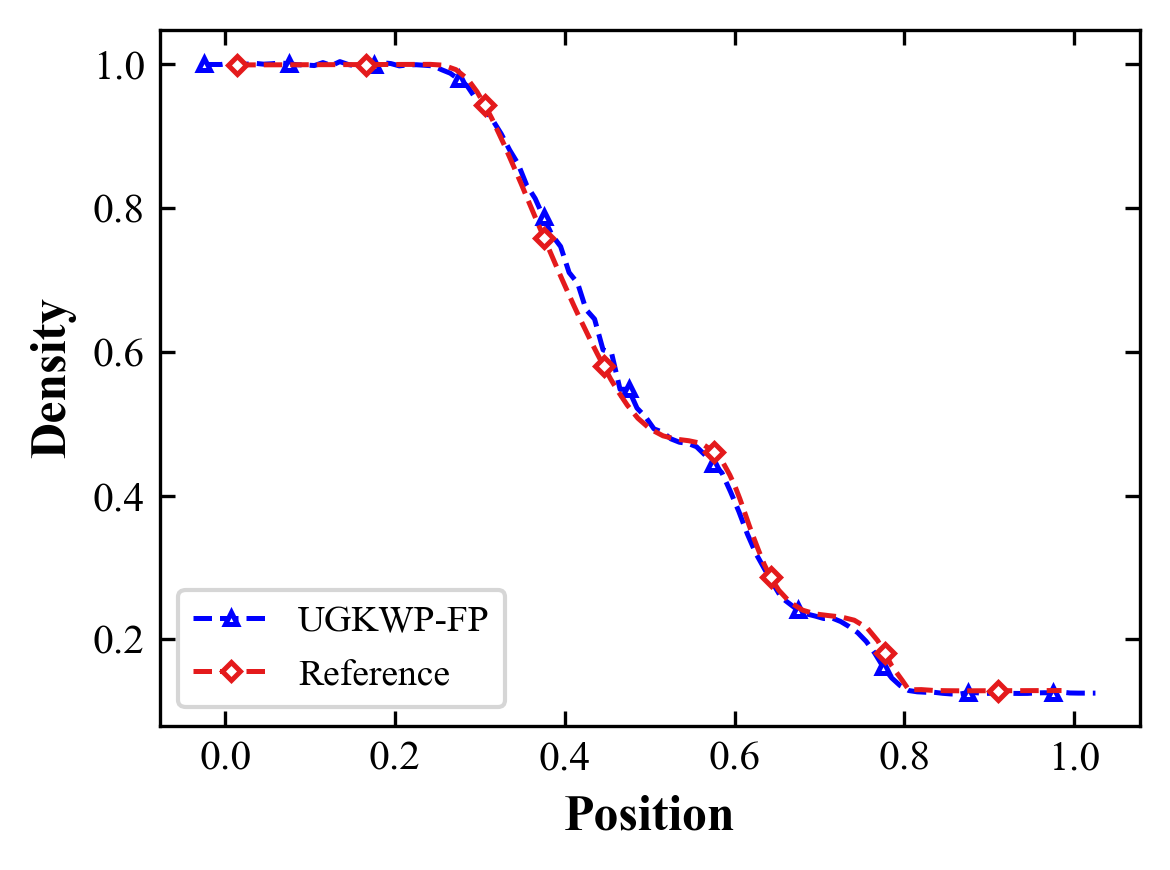}
    \caption{}
    \label{fig:sod-kn0.001-a}
    \end{subfigure}
    \hfill
    \begin{subfigure}[b]{0.32\textwidth}
    \centering
    \includegraphics[width=1.0\linewidth]{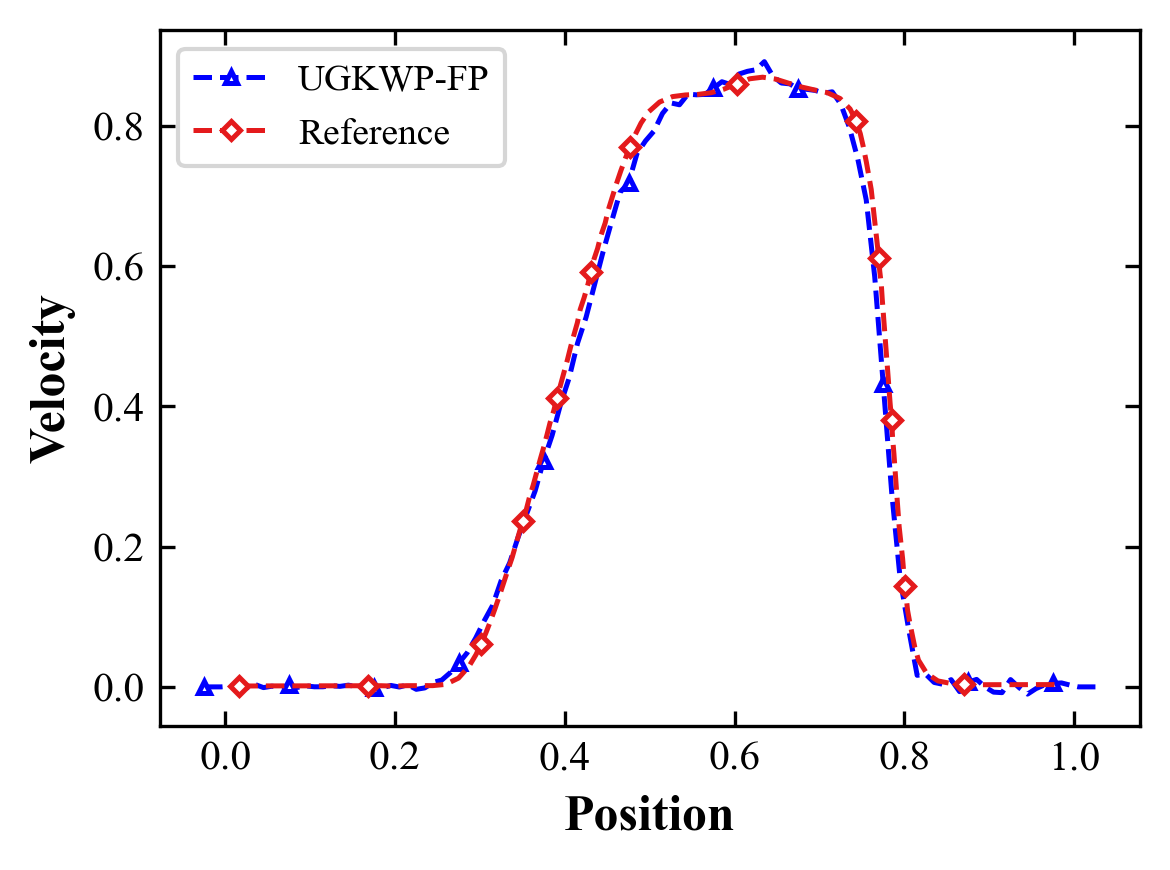}
    \caption{}
    \label{fig:sod-kn0.001-b}
    \end{subfigure}
    \hfill
    \begin{subfigure}[b]{0.32\textwidth}
    \centering
    \includegraphics[width=1.0\linewidth]{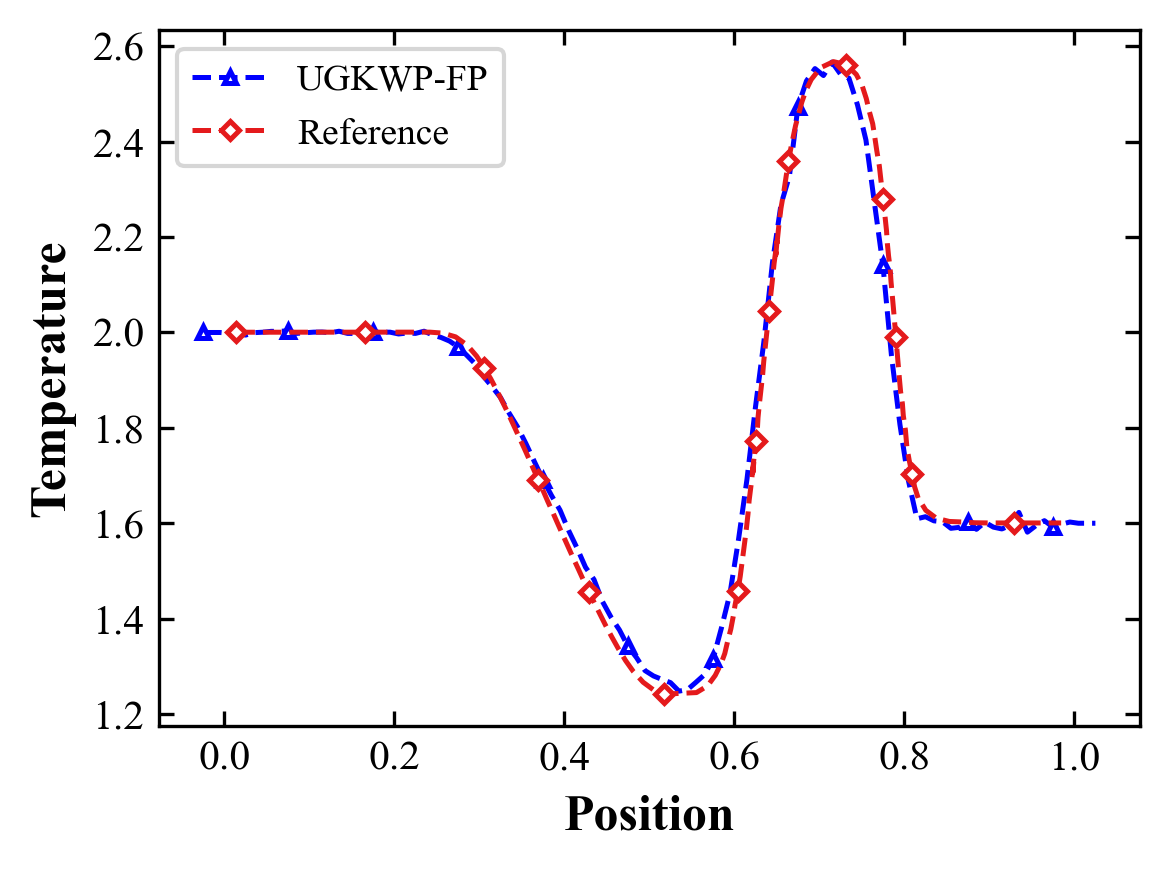}
    \caption{}
    \label{fig:sod-kn0.001-c}
    \end{subfigure}
    \caption{(a) Density, (b) velocity, (c) temperature profiles of Sod shock tube at t=0.15 with Kn=0.001. $\Delta x = 0.01$, $\Delta t = 0.005 \sim \tau$. The reference solution is calculated by UGKS with BGK collision.}
    \label{fig:sod-kn0.001}
\end{figure}

\begin{figure}
    \centering
    \begin{subfigure}[b]{0.32\textwidth}
    \centering
    \includegraphics[width=1.0\linewidth]{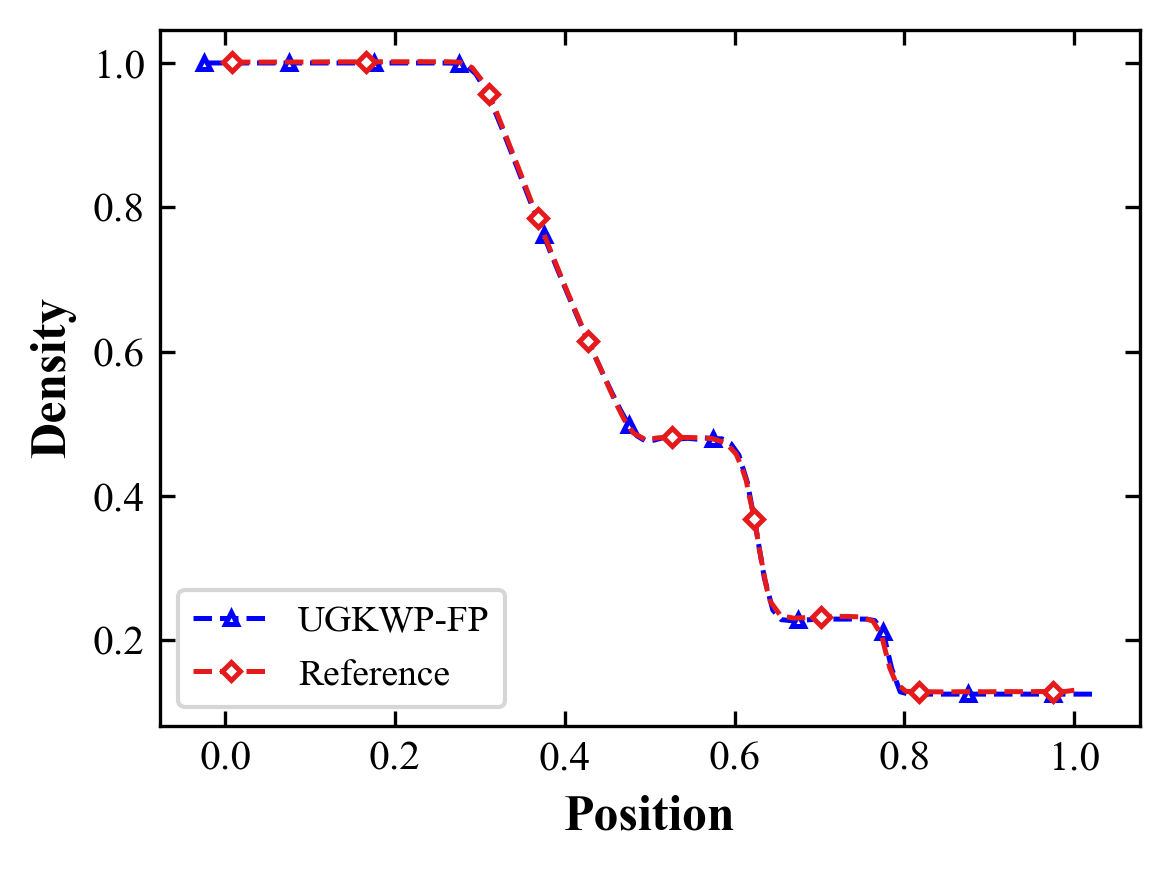}
    \caption{}
    \label{fig:sod-kn0.00001-a}
    \end{subfigure}
    \hfill
    \begin{subfigure}[b]{0.32\textwidth}
    \centering
    \includegraphics[width=1.0\linewidth]{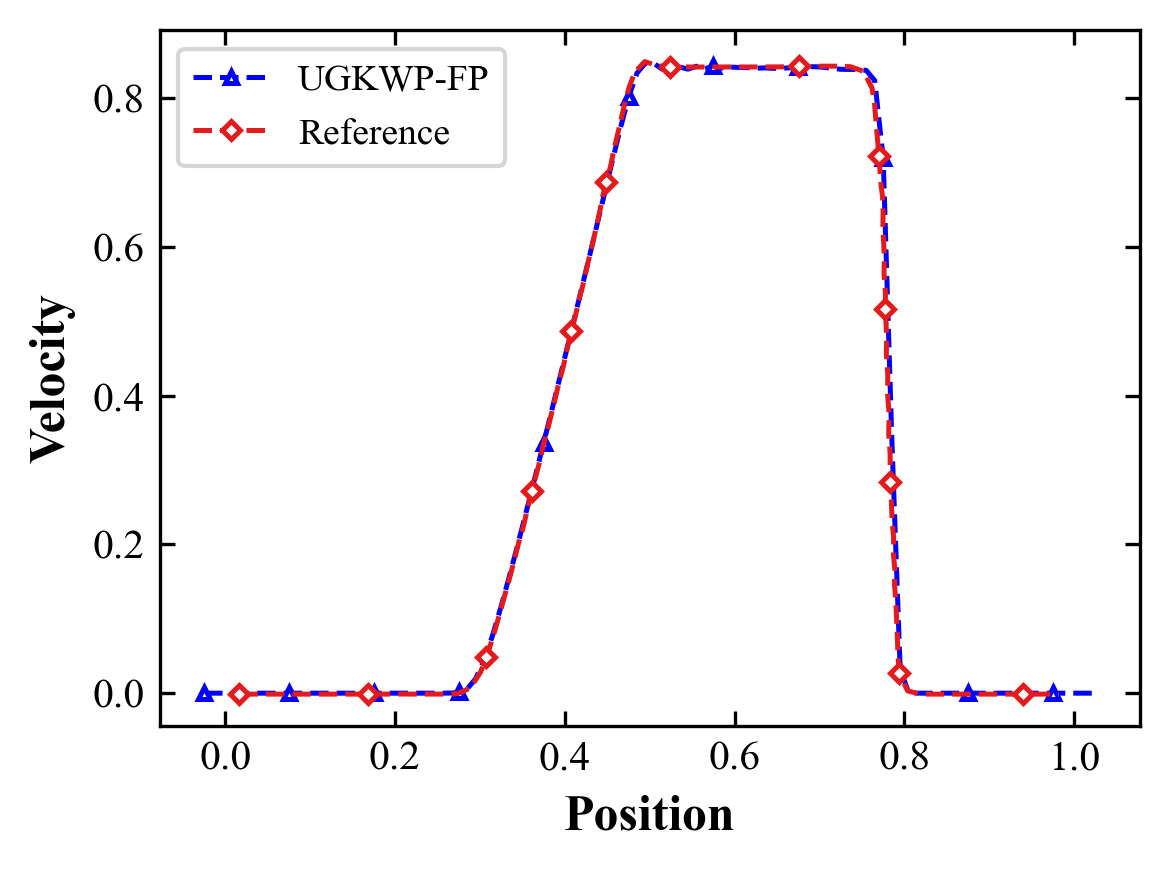}
    \caption{}
    \label{fig:sod-kn0.00001-b}
    \end{subfigure}
    \hfill
    \begin{subfigure}[b]{0.32\textwidth}
    \centering
    \includegraphics[width=1.0\linewidth]{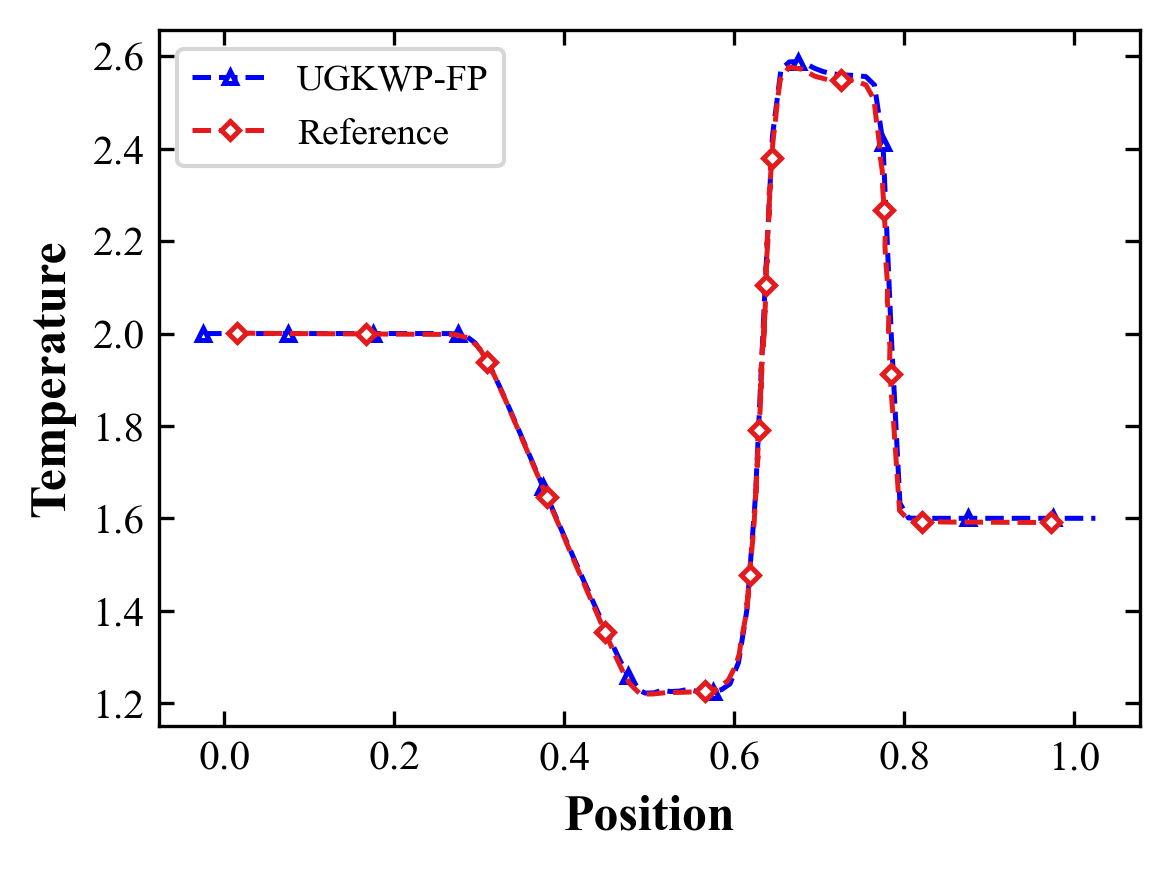}
    \caption{}
    \label{fig:sod-kn0.00001-c}
    \end{subfigure}
    \caption{(a) Density, (b) velocity, (c) temperature profiles of Sod shock tube at t=0.15 with Kn=0.00001. $\Delta x = 0.01$, $\Delta t = 0.005 \sim 100 \tau$. The reference solution is calculated by UGKS with BGK collision.}
    \label{fig:sod-kn0.00001}
\end{figure}

Figures~\ref{fig:sod-kn0.1} and \ref{fig:sod-kn0.001} present the macroscopic density, velocity, and temperature profiles at $\text{Kn} = 0.1$ and $\text{Kn} = 0.001$, respectively. The UGKWP-FP results show good overall agreement with the reference UGKS solutions across both regimes. Minor disparities appear in the temperature and velocity profiles within non-equilibrium regions. These differences are physically expected: the reference solver uses the BGK model, whereas UGKWP-FP incorporates the Fokker--Planck operator.

Figure~\ref{fig:sod-kn0.00001} illustrates the solution in the continuum limit at $\text{Kn} = 10^{-5}$. In this regime, the numerical time step is two orders of magnitude larger than the microscopic collision time ($\Delta t \sim 100\tau$). The UGKWP-FP profiles agree well with the reference UGKS solution, accurately resolving the expansion wave, contact discontinuity, and shock front. This agreement demonstrates two key capabilities of the method: first, it satisfies the asymptotic-preserving property by correctly recovering the hydrodynamic limit under under-resolved kinetic scales ($\Delta t \gg \tau$, $\Delta x \gg l_{\text{mfp}}$); second, it exhibits robust shock-capturing performance for strong nonlinear discontinuities without introducing numerical oscillations.

\section{Conclusions}
\label{sec:conclusions}

In this work, UGKWP-FP has been developed for the VPFP system. By incorporating the LB collision operator, the method explicitly captures velocity-space drift and diffusion. This capability provides a more explicit representation of velocity-space drift and diffusion than the BGK model. The scheme decomposes the FP operator into two components: a nonstiff drift--diffusion contribution, simulated via a modified Langevin stochastic process, and a stiff thermalization contribution, represented by an analytical hydrodynamic wave. A modified relaxation rate ensures that the particle dynamics recover kinetic transport in the rarefied limit while preserving the correct hydrodynamic behavior under strong collisions. Numerical validations include homogeneous relaxation, nonlinear Landau damping, and the bump-on-tail instability. These benchmarks confirm that UGKWP-FP accurately captures velocity-space gradient dissipation in rarefied flows and recovers the expected equilibrium behavior in the continuum limit. Furthermore, the Sod shock tube test confirms the robust shock-capturing performance and asymptotic recovery of the hydrodynamic limit across disparate Knudsen numbers.

Furthermore, the method demonstrates asymptotic-preserving properties in near-continuum regimes. Conventional particle solvers require time steps and spatial grids strictly bounded by microscopic collision scales ($\Delta t \le \tau$ and $\Delta x \le l_{\text{mfp}}$). In contrast, UGKWP-FP operates accurately and stably with macroscopic time steps ($\Delta t \gg \tau$) and coarse spatial meshes ($\Delta x \gg l_{\text{mfp}}$). As collisionality increases, the scheme dynamically transfers the phase-space representation from stochastic particles to analytical equilibrium waves. This adaptive approach substantially reduces computational and memory overhead in strongly collisional regions. Overall, the UGKWP-FP framework provides an efficient and physically consistent computational tool for simulating multiscale plasma dynamics across disparate collisional regimes.

\appendix
\section{Viscosity Matching for the FP--BGK Closure}
\label{app:viscosity-matching}

In the strongly collisional regime, the particle-resolved collision
frequency satisfies \(\nu_p\ll\nu\). Therefore, the BGK relaxation
time is approximated by
\begin{equation}
    \tau_{B}
    =
    \frac{C}{\nu-\nu_p}
    \simeq
    \frac{C}{\nu}
    =
    C\tau,
    \qquad
    \tau=\frac{1}{\nu}.
    \label{eq:tauc-strong-collision}
\end{equation}
For the BGK model, the Chapman--Enskog expansion gives the relation
between the dynamic viscosity \(\mu\), the thermodynamic pressure \(p\),
and the relaxation time \cite{xu2001gas},
\begin{equation}
    \mu=p\tau_{B}.
    \label{eq:bgk-viscosity}
\end{equation}
For the Fokker--Planck model, the corresponding Chapman--Enskog result \cite{mathiaud2016fokker, liu2019conservative} is
\begin{equation}
    \mu=\frac{1}{2}p\tau.
    \label{eq:fp-viscosity}
\end{equation}
Equating the two viscosity expressions yields
\begin{equation}
    \tau_{B}=\frac{1}{2}\tau.
    \label{eq:tau-matching}
\end{equation}
Comparing Eq.~\eqref{eq:tauc-strong-collision} with
Eq.~\eqref{eq:tau-matching} gives the viscosity-matching choice
\begin{equation}
    C=\frac{1}{2}.
\end{equation}

Since the LB operator relaxes the stress tensor at rate $2\nu_p$ \cite{gorji2011fokker}, and the BGK term at rate
$1/\tau_B = 2(\nu - \nu_p)$, the total stress relaxation rate is $2\nu$ for
any admissible $\nu_p$, so the viscosity of the FP-BGK model is in
fact uniformly consistent with the original LB operator. 

\section{Algorithms for PIC-FP and PIC-BGK}
\label{sec: pic}

To evaluate numerical performance, conventional PIC collision schemes are implemented using a second-order Strang-splitting strategy. Depending on the collision operator employed, these schemes are denoted as PIC-FP and PIC-BGK. For a time step $t^n \to t^{n+1} = t^n + \Delta t$, the solution procedure per step consists of a half-step velocity-space update, a full-step position update, an electric field update, and a second half-step velocity-space update:

\begin{enumerate}[label=\textbf{Step \arabic*:}, leftmargin=*]
    \item \textbf{First half-step velocity update ($\Delta t / 2$).} First, advance particle velocities due to field acceleration over $\Delta t/2$:
    \begin{equation}
        \boldsymbol{v}_k^* = \boldsymbol{v}_k^n + \boldsymbol{a}_i^n \frac{\Delta t}{2},
    \end{equation}
    where $\boldsymbol{a}_i^n$ is the local cell acceleration evaluated at $t^n$. Then, compute the local cell-averaged bulk velocity $\boldsymbol{U}_i^n$ and temperature $T_i^n$, and update velocities according to the selected collision operator over $\Delta t/2$:
    \begin{itemize}
        \item \textbf{BGK collision:} Evaluate the collision probability $P_{\mathrm{col}} = 1 - \exp(-\Delta t / (2\tau_{B,i}))$. For each particle $P_k$, sample $\xi \sim \mathcal{U}(0,1)$. If $\xi < P_{\mathrm{col}}$, the particle velocity is resampled as:
        \begin{equation}
            \boldsymbol{v}_k^{n+1/2} \sim \mathcal{N}\left(\boldsymbol{U}_i^n, \, \frac{T_i^n}{2}\mathbb{I}\right).
        \end{equation}
        Otherwise, $\boldsymbol{v}_k^{n+1/2} = \boldsymbol{v}_k^*$.

        \item \textbf{Fokker--Planck collision:} Evaluate the local collision frequency $\nu_i$ and update the velocity via the exact OU solution over $\Delta t/2$:
        \begin{equation}
            \boldsymbol{v}_k^{n+1/2} = \boldsymbol{U}_i^n + e^{-\frac{1}{2}\nu_i \Delta t} \left( \boldsymbol{v}_k^* - \boldsymbol{U}_i^n \right) + \sqrt{\frac{T_i^n}{2} \left( 1 - e^{-\nu_i \Delta t} \right)} \, \boldsymbol{\xi}_k,
        \end{equation}
        where $\boldsymbol{\xi}_k \sim \mathcal{N}(\boldsymbol{0}, \mathbb{I})$ is a vector of independent standard normal random variables.
    \end{itemize}
    
    After the collision step, the cellwise conservation correction is applied.

    \item \textbf{Full-step particle transport ($\Delta t$).} Update particle positions via streaming over the full time step $\Delta t$:
    \begin{equation}
        \boldsymbol{x}_k^{n+1} = \boldsymbol{x}_k^n + \boldsymbol{v}_k^{n+1/2} \Delta t.
    \end{equation}

    \item \textbf{Electric field update and second half-step velocity update ($\Delta t / 2$).} Deposit particle charges onto the grid based on the updated positions $\boldsymbol{x}_k^{n+1}$, and solve the Poisson equation to update the self-consistent field acceleration $\boldsymbol{a}_i^{n+1}$ at $t^{n+1}$.

    Next, advance particle velocities due to field acceleration over the remaining $\Delta t/2$:
    \begin{equation}
        \boldsymbol{v}_k^{**} = \boldsymbol{v}_k^{n+1/2} + \boldsymbol{a}_i^{n+1} \frac{\Delta t}{2}.
    \end{equation}
    Finally, re-evaluate the macroscopic properties $\boldsymbol{U}_i^{n+1}$ and $T_i^{n+1}$ at $t^{n+1}$, and apply the second collision update over $\Delta t/2$ (using either the BGK or FP operator as described in Step 1) to obtain the final particle velocity $\boldsymbol{v}_k^{n+1}$.
\end{enumerate}
\section*{Acknowledgements}
The current research is supported by National Key R\&D Program of China (Grant Nos. 2022YFA1004500), National Science Foundation of China (92371107), and Hong Kong research grant council (16208324).\\

\noindent\textbf{Declaration of interests}: The authors report no conflict of interest.

\newpage
\noindent \textbf{Reference}
\bibliographystyle{elsarticle-num}
\bibliography{ref}

\end{document}